\documentclass{amsart}
\usepackage{amssymb}
\usepackage{lmodern}
\usepackage[T1]{fontenc}
\usepackage[utf8]{inputenc}
\usepackage{microtype}
\usepackage[usenames, dvipsnames]{xcolor}
\usepackage{tikz}
\usetikzlibrary{decorations.markings,arrows,decorations.pathreplacing,arrows,decorations.pathmorphing,decorations.pathreplacing,decorations.markings,shapes.geometric,through,fit,shapes.symbols}
\tikzset{ribbon/.style={draw=white,double=purple,very thick,double distance=2pt}}
\tikzset{onearrowlate/.style={postaction={decorate}, decoration={markings,mark=at position .7 with {\arrow[draw,line width=1pt]{>}}}}}
\tikzset{onearrow/.style={postaction={decorate}, decoration={markings,mark=at position .55 with {\arrow[draw,line width=1pt]{>}}}}}
\tikzset{condensation/.style={-,
          decoration={markings, 
                    mark=at position -1ex with {\arrow{left hook}},
                    mark=at position 1 with {\arrow[scale=1.5]{>}},
                   },
        preaction={decorate}}}
\tikzset{condensationlarge/.style={-,
          decoration={markings, 
                    mark=at position -1.5ex with {\arrow[scale=1.25]{left hook}},
                    mark=at position 1 with {\arrow[scale=1.75]{>}},
                   },
        preaction={decorate}}}
\tikzset{doublecondensation/.style={double,
        double distance=1.5pt,
        shorten <=6pt, shorten >=6pt, 
        decoration={markings, 
                    mark=at position -11pt with {\arrow[scale=1.25]{left hook}},
                    mark=at position -4.25pt with {\arrow[scale=1.5]{>}},
                   },
        preaction={decorate}}}
\tikzset{doublecondensationlarge/.style={double,
        double distance=1.5pt,
        shorten <=6pt, shorten >=6pt, 
        decoration={markings, 
                    mark=at position -11pt with {\arrow[scale=1.5]{left hook}},
                    mark=at position -4.25pt with {\arrow[scale=2]{>}},
                   },
        preaction={decorate}}}

\usepackage{tikz-3dplot}
\usepackage{tikz-cd}
\usepackage[margin=1.3in]{geometry}
\usepackage[backref=page, colorlinks=true]{hyperref}
\hypersetup{
 linkcolor={red!50!black},
 citecolor={green!50!black},
 urlcolor={blue!80!black}
}
\usepackage[capitalize, noabbrev]{cleveref} 
\usepackage{tikz-3dplot}

\usepackage{thmtools}

\numberwithin{equation}{section}
\usepackage{amsthm}
\newtheorem{thm}[equation]{Theorem}
\newtheorem{lem}[equation]{Lemma}
\newtheorem{prop}[equation]{Proposition}
\newtheorem{assume}[equation]{Assumption}
\newtheorem{cor}[equation]{Corollary}

\newtheorem*{thm*}{Theorem}

\newcounter{mainthm}
\newtheorem{maintheorem}[mainthm]{Theorem}

\newcounter{claim}
\newtheorem{clm}[claim]{Claim}

\newtheorem*{notation*}{Notation}
\newtheorem{notation}{Notation}
\theoremstyle{definition}
\newtheorem{defn}[equation]{Definition}
\newtheorem{example}[equation]{Example}

\theoremstyle{remark}
\newtheorem{rem}[equation]{Remark}
\newtheorem{question}[equation]{Question}

\crefname{thm}{Theorem}{Theorems}
\crefname{prop}{Proposition}{Propositions}
\crefname{lem}{Lemma}{Lemmas}
\crefname{cor}{Corollary}{Corollaries}
\crefname{defn}{Definition}{Definitions}
\usepackage{mathtools}

\usepackage{tikz-cd}
\usepackage{subcaption}
\usepackage{setspace}
\usetikzlibrary{calc}

\usepackage[mathscr]{eucal}
\makeatletter
\def\instring#1#2{TT\fi\begingroup
  \edef\x{\endgroup\noexpand\in@{#1}{#2}}\x\ifin@}
\def\isuppercase#1{%
  \instring{#1}{ABCDEFGHIJKLMNOPQRSTUVWXYZ}%
}%
\newcommand{\C@lIfUpper}[1]{
 \if\isuppercase{#1}\mathscr{#1}%
 \else #1%
 \fi
}
\newcommand{\cat}[1]{\mathit{\@tfor\next:=#1\do{\C@lIfUpper{\next}}}}
\makeatother

\newcommand{\meridian}[2]{%
  \draw[line width=0.65pt, #2, rotate={#1}]
       (3.10,0) arc[start angle=0, end angle=180,
                    x radius=1.15, y radius=0.10];
  \draw[line width=0.65pt, #2, dashed, dash pattern=on 2pt off 2pt, rotate={#1}]
       (3.10,0) arc[start angle=0, end angle=-180,
                    x radius=1.15, y radius=0.10];
}

\usepackage{xparse}
\newcommand{\Z}{\mathbb Z}

\newcommand{\rB}{\mathrm B}
\newcommand{\C}{\mathbb C}

\newcommand{\pt}{\mathrm{pt}}

\newcommand{\id}{\mathrm{id}}

\newcommand{\condense}{\mathrel{\,\hspace{.75ex}\joinrel\rhook\joinrel\hspace{-.75ex}\joinrel\rightarrow}}

\newcommand{\cA}{\mathcal A}
\newcommand{\cB}{\mathcal B}
\newcommand{\cC}{\mathcal C}
\newcommand{\cD}{\mathcal D}
\newcommand{\cF}{\mathcal F}
\newcommand{\cG}{\mathcal G}

\newcommand{\cI}{\mathcal I}

\newcommand{\cO}{\mathcal O}
\newcommand{\cR}{\mathcal R}
\newcommand{\cS}{\mathcal S}
\newcommand{\cZ}{\mathcal Z}

\DeclareMathOperator{\Ad}{Ad}

\newcommand{\cE}{\mathcal E}

\newcommand{\colim}{\mathrm{colim}}
\newcommand{\Vect}{\mathbf{Vect}}

\newcommand{\End}{\mathrm{End}}
\DeclareMathOperator{\tVect}{\mathbf{2Vect}}

\newcommand{\op}{\mathrm{op}}
\newcommand{\cQ}{\mathcal{Q}}
\newcommand{\bfQ}{\mathbf{Q}}

\newcommand{\cT}{\mathcal{T}}
\newcommand{\sWitt}{s\mathcal{W}itt}
\newcommand{\Witt}{\mathcal{W}itt}

\newcommand{\QS}{\mathbf{QS}}
\newcommand{\Net}{\mathbf{Net}}
\newcommand{\Disk}{\mathbf{Disk}}
\newcommand{\Alg}{\mathbf{Alg}}
\newcommand{\fd}{\mathrm{fd}}
\newcommand{\npVect}{\mathbf{(n+1)Vect}}
\usetikzlibrary{arrows.meta,decorations.pathreplacing,calc,positioning}

\definecolor{leftblue}{RGB}{82,133,194}
\definecolor{midgreen}{RGB}{119,179,99}
\definecolor{rightred}{RGB}{219,109,109}
\definecolor{fatgray}{RGB}{235,235,235}
\definecolor{cA}{RGB}{88,141,195}
\definecolor{cB}{RGB}{124,178,111}
\definecolor{cC}{RGB}{234,164,95}
\definecolor{cD}{RGB}{209,112,132}
\definecolor{frame}{RGB}{35,35,35}
\definecolor{c0}{RGB}{92,143,197}
\definecolor{c1}{RGB}{128,182,110}
\definecolor{c2}{RGB}{236,170,88}
\definecolor{c3}{RGB}{215,116,135}
\definecolor{frame}{RGB}{35,35,35}
\definecolor{lightgreen}{RGB}{200,240,200}
\pgfdeclarehorizontalshading{surfshade}{8cm}{%
  rgb(0cm)=(0.40,0.03,0.05);
  rgb(1.1cm)=(0.70,0.10,0.10);
  rgb(2.3cm)=(0.99,0.80,0.78);
  rgb(3.3cm)=(0.95,0.55,0.52);
  rgb(4.5cm)=(0.82,0.18,0.18);
  rgb(6cm)=(0.42,0.04,0.05)}

\newcommand\MAILTO[1]{\href{mailto:#1}{\nolinkurl{#1}}}

\DeclareDocumentCommand{\shortexact}{s O{} O{} mmmm}{
\IfBooleanTF{#1}{ 
\begin{tikzcd}[ampersand replacement=\&]
	{1} \& {#4} \& {#5} \& {#6} \& {1#7}
	\arrow[from=1-1, to=1-2]
	\arrow["#2", from=1-2, to=1-3]
	\arrow["#3", from=1-3, to=1-4]
	\arrow[from=1-4, to=1-5]
\end{tikzcd}
}{ 
\begin{tikzcd}[ampersand replacement=\&]
	{0} \& {#4} \& {#5} \& {#6} \& {0#7}
	\arrow[from=1-1, to=1-2]
	\arrow["#2", from=1-2, to=1-3]
	\arrow["#3", from=1-3, to=1-4]
	\arrow[from=1-4, to=1-5]
\end{tikzcd}
}}

\usepackage{xspace}

\newcommand{\Aut}{\mathrm{Aut}}

\title{Categorifying Quantum Spin Systems}
\date{\today}

\author{Matthew Yu}
\address{School of Natural Sciences, Institute for Advanced Study,  1 Einstein Drive, Princeton, NJ 08540, USA}
\email{\href{mailto:yumatthew70@gmail.com}{yumatthew70@gmail.com}}
\thanks{It is a pleasure to thank 
Ryohei Kobayashi,
Claudia Scheimbauer,
Devon Stockall,
Siddarth Vadnerkar,
Jackson Van Dyke,
and Bowen Yang
for helpful discussions.
MY is supported by the EPSRC Open Fellowship EP/X01276X/1, and the Fund for Natural Science and the Adler Family Fund.}

\begin{document}
\maketitle

\begin{abstract}
We develop a homotopical framework for understanding the space of quantum spin systems based on topological $n$-nets, which assign associative unital algebras to local regions in an $n$-dimensional space in a manner that captures the locality structure of the underlying lattice theories. The net-theoretic perspective admits a natural categorification, in which the category of topological $n$-nets carries the structure of a Morita category. Moreover, the nets of greatest physical interest form a Karoubi-complete subcategory that shares many of the characteristic structural features of the category of finite TQFTs.
Our main result identifies the group of locality-preserving symmetries of (3+1)-dimensional quantum spin systems (i.e. quantum cellular automata modulo stabilization, finite-depth quantum circuits, and translations) with the full Witt group of nondegenerate braided fusion categories, thereby answering a question posed by Haah.
\end{abstract}

\tableofcontents

\section{Introduction}

Higher category theory has recently attracted considerable interest in physics, driven by the recognition that generalized symmetries in continuum quantum field theories are encoded by topological operators that have the structure of a higher category. In the case when the symmetry is finite, higher fusion categories have been central to the classification of symmetries and to the study of finite topological field theories (TQFTs) in higher dimensions \cite{Kong:2020wmn,johnson2022classification,JFY,Kong:2022cpy,Bhardwaj:2022yxj,Bhardwaj:2022lsg,Bhardwaj:2022maz,Bhardwaj:2024xcx,Bullimore:2024khm,Bhardwaj:2024qiv,Bhardwaj:2025piv,Teixeira:2025qsg,Decoppet:2025eic,Debray:2025kfg}.\footnote{What we mean by finite is that the TQFTs only have finitely many simple topological operators in every codimension.}

 Symmetries have played a central role in shaping our understanding of field theory, so naturally the study of symmetries of quantum systems formulated in discrete settings through algebras of local observables is also a rich and important subject. In condensed-matter physics such systems are commonly referred to as \textit{quantum spin systems}, see \cite{nachtergaele2004quantum} for a review. Since locality is intrinsic to the microscopic construction of these systems, it is natural to require the symmetry operations to also act in a local way as well. We define a \textit{quantum cellular automaton} to be a locality-preserving symmetry of a quantum spin system, which implements an automorphism of its algebra of observables in a way such that observables supported in any region are mapped to another region that is a uniformly bounded distance away. 

Just as the classification of symmetries is a fundamental problem in continuum theories, one may ask how QCAs are classified for low-dimensional quantum spin systems. This problem has garnered attention \cite{Freedman:2019ncn,Freedman:2019ucy,Haah:2018jdf} not least due to its overlap with category theory, homotopy theory \cite{Kapustin:2025nju,Kapustin:2025rhp,Tu:2025bqf,Czajka:2025mme}, topological phases \cite{Fidkowski:2019nju,Jones:2026gmd}, and lattice anomalies \cite{Kawagoe:2025ldx,Feng:2025yge,Shirley:2025yji}. For certain simple quantum spin systems whose algebra of observables is generated by the Pauli group, their locality-preserving symmetries are called Clifford QCAs and their classification has been studied through the lens of algebraic L-theory \cite{Haah:2019fqd,Haah:2022yyo,Geiko:2023zof,Yang:2025jvn}, with constructions of nontrivial  (3+1)d examples given in \cite{Sun:2025vta,Sun:2026toj}.

To classify the locality-preserving symmetries of general quantum spin systems, it is useful first to reformulate the problem in a manner guided by our understanding of symmetries in continuum theories.
In the study of an individual physical system, or even a purely mathematical object, one is naturally led to consider the ambient space parametrizing all such objects. Indeed, it is often difficult to define a single object without specifying the space or category of objects to which it belongs.
In physics, the most prominent example is the ``space of theories'' or ``space of symmetries'', although its precise mathematical incarnation varies across different settings.\footnote{See a review of this discussion in a continuum context in \cite[Section 1.1]{Stockall:2025ngz}.} With this global perspective in mind, the central question we address is:
\begin{question}
What is the appropriate category of quantum spin systems in arbitrary dimensions, together with the morphisms that capture the physically meaningful relations between them?
\end{question}

While the global perspective is necessarily more abstract and definition-driven, it allows us to identify the appropriate abstraction of quantum spin systems that is amenable to categorification. This, in turn, provides a systematic framework for studying higher-dimensional quantum spin systems and for giving a precise classification of their symmetries, leading to higher QCA groups. 
In light of this discussion, our goal is to address a fundamental homotopy-theoretic problem: the computation of the homotopy groups of the QCA spectrum. To achieve this, we bring together the theory of nets from algebraic quantum field theory and operator algebras with techniques from  category theory, specifically $\infty$-categories and categorified Karoubi completion aka condensation completion.

\subsection{Main Results}\label{subsection:mainresults}

We develop a category to describe quantum spin systems in $n$-dimensions, reframed in the language of topological $n$-nets. As in field theory, the most general category of quantum spin systems capable of accommodating the full range of physically relevant structures will be highly intricate. We do not claim that the framework developed here incorporates every structure one might wish to impose; in particular, questions regarding the unitary structure of  quantum systems are not addressed in this work.

Our results are two-fold. First, we construct a category of topological $n$-nets, which serves as the ambient space of theories that we wish to study. Second, we identify the distinguished subcategory of finite semisimple topological $n$-nets whose properties are controlled. We then realize the nets in this subcategory over a particular background space relevant for lattice models, thereby obtaining quantum spin systems whose symmetries can be analyzed using categorical structure.

\begin{rem}
   Our definition of a quantum spin system is motivated by the perspective commonly adopted in condensed-matter physics, in which the system is implicitly embedded in an ambient manifold, or merely a metric space. In particular, we do not regard such a lattice system as living solely on an abstract graph or simplicial space.
\end{rem}

\begin{notation}
    Unless  otherwise specified, ``$n$-dimensions'' will always reflect the spacetime dimension. The topological $n$-nets are hence also defined in $n$-spacetime dimensions. A lattice $\Z^d$ denotes a $d$-spacetime dimensional lattice.
\end{notation}

\begin{maintheorem}[{\cref{thm:Netn}}]
      There is a symmetric monoidal $(\infty,n+1)$-category $\Net_n$ with the following data for the nontrivial morphisms:
  \begin{itemize}
 \item Objects are topological $n$-nets,
        \item Tensor product and direct sum of two topological $n$-nets $\cA$ and $\cB$ exist and is given by tensor product and direct sum of the algebras assigned by the nets. The tensor unit is the trivial topological $n$-net $\underline{\mathbb{C}}$ and the zero object is $\underline{0}$.
         \item For $1\leq k<n$, a $k$-morphism  $\mathcal F:\cA \to \cB$ is a locality-preserving defect $n$-net of codimension-$k$.  The identity $k$-morphism is the restriction of $\cA$ to a defect $n$-net of codimension-$k$.
         \item An $n$-morphism $H:\cA \to \cB$ is  a sector of the partial net $\cF: \Disk^{n}_{n}\to \Alg(\Vect_\mathbb{C})$.  The identity $n$-morphism is the  vacuum sector $H_\cA$. The composition of two $n$-morphisms $H:\cA \to \cB$ and $K: \cB \to \cC$ is the tensor product vector space $K \otimes_\cB H$.
          \item An $(n+1)$-morphism is a homomorphism of sectors.
  \end{itemize}
\end{maintheorem}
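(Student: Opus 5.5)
Our plan is to build $\Net_n$ as a higher Morita category of factorization-type algebras: first identify topological $n$-nets with algebras over a suitable operad of disks, then invoke an existing construction of higher Morita categories. Concretely, a topological $n$-net $\cA:\Disk_n\to\Alg(\Vect_\C)$ satisfying isotony and locality (commuting algebras on disjoint regions, tensor product on disjoint unions) should be equivalent to a locally constant $E_n$-type (factorization) algebra valued in $\Alg(\Vect_\C)$. In other words, we aim to show it is an $E_n$-algebra object in the $(\infty,1)$-category of associative algebras. Once this identification is in place, we can use Haugseng's (or Scheimbauer's) construction of the symmetric monoidal $(\infty,n+1)$-category $\mathrm{Alg}_n(\mathcal S)$ of $E_n$-algebras, iterated bimodules, and bimodule maps in a symmetric monoidal $\infty$-category $\mathcal S$ with good geometric realizations. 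We would apply it with $\mathcal S=\Alg(\Vect_\C)$, or more precisely with the Morita $(\infty,1)$-category of associative algebras, since the top layer is modules.

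The order of steps is as follows.

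\begin{enumerate}
\item Translate objects. Show that the restriction of $\cA$ to stratified disks gives the factorization data, so that objects of $\Net_n$ are $E_n$-algebras.
\item Translate $k$-morphisms for $k<n$. Show that a locality-preserving defect net of codimension $k$ is the same as a locally constant factorization algebra on $\mathbb R^n$ stratified by a codimension-$k$ hyperplane. By Scheimbauer's description of the factorization Morita category, this is exactly a $k$-morphism. The identity $k$-morphism is the restriction of $\cA$ to the stratified disks, matching the unit bimodule.
\item Identify sectors. Show that sectors of the partial net on $\Disk^n_n$ are modules over the algebra assigned to the codimension-$n$ point defect. Hence the $n$-morphisms are bimodules, the identity is the vacuum sector $H_\cA$ (the regular module), and composition is the relative tensor product $K\otimes_\cB H$, computed as a bar construction, which in $\Vect_\C$ is the ordinary coequalizer. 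The $(n+1)$-morphisms are then module maps.
\item Symmetric monoidal structure. The pointwise tensor product of nets corresponds to the levelwise tensor product in $\mathrm{Alg}_n$, and the unit is the constant net $\underline{\C}$.
\item Direct sums. $\underline 0$ is a zero object, and direct sums are biproducts computed pointwise, because $\Alg(\Vect_\C)$ and the module categories are additive.
\end{enumerate}

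The main obstacle is the first two steps: showing that the net axioms are exactly local constancy plus factorization, with no extra coherence lost. Nets are strict functors into a 1-category on a poset of regions, whereas factorization algebras require a descent or Weiss-cosheaf condition. We would handle this by restricting to the subposet of disks and using that the Weiss cosheaf condition for locally constant algebras on disks reduces to the $E_n$ structure (Lurie, HA 5.4.5). Here the nets' isotony and locality supply the multiplication maps on disjoint unions.

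Compatibility of the composition of defect nets with the Segal/gluing conditions also needs a check, namely that gluing defects along a common stratum is a relative tensor product. We expect this to follow from the $\Alg(\Vect_\C)$ bar construction having good geometric realizations, since $\Vect_\C$ is presentable and $\otimes$ preserves colimits in each variable.
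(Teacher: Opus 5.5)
Your overall strategy matches the paper's. It too realizes $\Net_n$ as the higher Morita category $\Alg_n(\Alg(\Vect_\C))$ and imports the Segal conditions, completeness and symmetric monoidal structure from an existing theorem (Gwilliam--Scheimbauer, \cref{thm:buildingAlg}, in place of Haugseng). The hypothesis it verifies is that $\Alg(\Vect_\C)$ is $\otimes$-sifted-cocomplete, which is your ``good geometric realizations'' condition. However, the two places where you try to make the dictionary rigorous do not work as proposed.

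\emph{Gluing.} Compatibility of sifted colimits with $\otimes$ in the target only guarantees that relative tensor products $\cF(M_1)\otimes_{\cF(Y)}\cF(M_2)$ exist and behave well. It cannot tell you that a net's value on a glued region \emph{is} that relative tensor product. By \cref{def:compositeF}, the composite of defect nets assigns the subalgebra generated on the fused sector $H_\cE\boxtimes_\cB H_\cD$, and nothing in the net axioms forces this to agree with the relative tensor product. This is a property of the nets, not of $\Alg(\Vect_\C)$. The paper obtains it only by imposing collar excision (\cref{def:collarexcision}) as an axiom on every defect net. You need that hypothesis explicitly, or the Segal maps need not be equivalences.

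\emph{Identification of objects and defects.} The route through local constancy and HA~5.4.5 cannot work as stated. Local constancy is not a net axiom, and neither is isotony, which the paper explicitly declines to impose. Local constancy also fails for the motivating \cref{ex:spinsystem}, where $\cA(I)\to\cA(J)$ adds the sites in $J\setminus I$. Worse, it is incompatible with noncommutative values. If $\cA(I)\to\cA(K)$ and $\cA(J)\to\cA(K)$ are isomorphisms for disjoint $I,J\subset K$, both images are all of $\cA(K)$, and the locality axiom makes them commute, so $\cA(K)$ is commutative. Equivalently, by Dunn additivity, $E_n$-algebras in the $1$-category $\Alg(\Vect_\C)$ are $E_{n+1}$-algebras in $\Vect_\C$, hence commutative for $n\geq 1$. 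So matrix-algebra nets are not $E_n$-algebras in $\Alg(\Vect_\C)$ in the sense you need, and a different dictionary between nets and factorization data is required. The same tension affects the paper's identification $\Net_n\simeq\Alg_n(\Alg(\Vect_\C))$, which it asserts rather than proves.

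Two smaller points:
\begin{itemize}
\item With target $\Alg(\Vect_\C)$, the $n$- and $(n+1)$-morphisms of $\Alg_n$ are point-defect algebras and algebra homomorphisms, not sectors and linear maps. Switching to a Morita category of algebras (an $(\infty,2)$-category) shifts the top level to $n+2$. So step~3 does not fall out of the cited construction.
\item $\Alg(\Vect_\C)$ is not additive: $\C$ is initial, $0$ is terminal, and coproducts are free products. Step~5 must therefore obtain biproducts from the bimodule-level morphisms instead.
\end{itemize}
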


The nets appearing in the preceding theorem may be regarded as abstract, cosheaf-like mathematical objects. They need not a priori possess any physical interpretation. The principal role of this generality is to provide a robust space of theories that accommodates assignments of arbitrary unital associative algebras. 
Those nets that define higher-dimensional quantum spin systems are expected to form a subcategory exhibiting many of the structural features characteristic of the category of topological orders.
In particular, the finiteness conditions imposed by their physical origin suggest that these systems should be modeled by objects in a higher fusion category.

Let $\mathbf{nVect}^{\fd}_{\mathbb C}$ denote the $n$-category of fully dualizable $n$-vector spaces, which may be regarded as a fusion $n$-category and as the Morita category of multifusion $(n-2)$-categories.
\begin{maintheorem}[{\cref{thm:QS}}]
    The category of $n$-dimensional quantum spin systems $\QS_n$ is a symmetric monoidal condensation complete $(n+1)$-category equivalent to $\npVect^{\fd}_\mathbb{C}$.
\end{maintheorem}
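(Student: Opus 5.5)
Since $\QS_n$ is only described in the introduction, I will take it to be the full subcategory of $\Net_n$ (from \cref{thm:Netn}) on the finite semisimple topological $n$-nets. I also take these objects to be closed under the operations inherited from $\Net_n$ and under taking condensates.

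The plan is to build a symmetric monoidal functor $\Phi\colon \npVect^{\fd}_\mathbb{C} \to \QS_n$ and show it is an equivalence. I would proceed by induction on $n$, using the presentation of $\npVect^{\fd}_\mathbb{C}$ as the condensation completion $\Sigma^{n+1}\mathbb{C}$ of the delooping $\rB^{n+1}\mathbb{C}$. By the universal property of condensation completion (Gaiotto--Johnson-Freyd, Johnson-Freyd), a symmetric monoidal functor out of $\Sigma^{n+1}\mathbb{C}$ into a condensation complete symmetric monoidal $(n+1)$-category $\cC$ is the same as the unique functor determined by the unit $\rB^{n+1}\mathbb{C}\to\cC$. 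The first two steps are therefore:
\begin{enumerate}
\item Show $\QS_n$ is condensation complete, i.e.\ every condensation monad on a $k$-morphism splits inside $\QS_n$.
\item Identify $\Omega^{n}\QS_n$, the endomorphisms of the identity $(n-1)$-morphism of $\underline{\mathbb{C}}$, with $\Vect^{\fd}_\mathbb{C}$, and $\Omega^{n+1}\QS_n$ with $\mathbb{C}$.
\end{enumerate}
For the second step, I would use that $n$-morphisms are sectors of the vacuum partial net and $(n+1)$-morphisms are their homomorphisms. Sectors of the trivial net $\underline{\mathbb{C}}$ are $\mathbb{C}$-modules, which finiteness forces to be finite dimensional. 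The vacuum sector $H_{\underline{\mathbb{C}}}=\mathbb{C}$ gives the unit, so $\Omega^{n+1}\QS_n=\End(H_{\underline{\mathbb{C}}})=\mathbb{C}$.

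For condensation completeness I would argue level by level. At the top level this is idempotent completeness of finite semisimple module categories, which is automatic. At level $k<n$, a condensation monad on a defect net $\cF$ is an algebra object in the endomorphisms of $\cF$. I would split it by taking the net that assigns to each region the algebra of module maps over the local condensate, giving a "condensed defect net", and check that locality (the cosheaf/isotony axioms) is preserved because condensation is performed locally and commutes with the colimits defining the net. This gives fully faithful $\Phi$ essentially by the universal property.

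The main obstacle is essential surjectivity: every finite semisimple topological $n$-net must be a condensate of a direct sum of copies of $\underline{\mathbb{C}}$. My approach here is a local-to-global reduction. First, finiteness says the algebra on a disk is a finite-dimensional semisimple algebra, hence Morita trivial up to a direct sum. So the net restricted to disks is Morita equivalent to a direct sum of trivial nets via an invertible defect. The defects between regions then assemble into the data of a condensation of $\underline{\mathbb{C}}^{\oplus m}$: a sequence of condensations in one dimension lower, one for each codimension. At that point the inductive hypothesis $\QS_{n-1}\simeq \mathbf{nVect}^{\fd}_\mathbb{C}$ applies to the codimension-one defect data. I expect the delicate point to be showing that the codimension-one defect nets between trivial nets are exactly the objects of $\QS_{n-1}$, i.e.\ the dimensional reduction $\Omega\QS_n\simeq\QS_{n-1}$. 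Once this loop-space identification holds, the standard fact that a condensation complete category $\cC$ with $\Omega\cC\simeq\Sigma^n\mathbb{C}$ in which every object condenses from the unit is $\Sigma^{n+1}\mathbb{C}$ finishes the proof.
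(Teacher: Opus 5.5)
Your architecture matches the paper's: induction via $\Omega\QS_n\simeq\QS_{n-1}$, condensation completeness, every object a condensate of $\underline{\C}$, and $\npVect^{\fd}_\C$ as iterated delooping plus condensation completion. However, the step that carries the net-theoretic content is not supplied. That step is turning an object-level condensation monad $(\cA,e,\ldots)$ into an honest finite semisimple net.

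Here $e\colon\cA\to\cA$ is a codimension-one defect net with a condensation-algebra structure in $\End(\cA)$. It is not a family of idempotents on the local algebras $\cA(I)$. So there is no ``local condensate'' whose module maps you can take until $e$ has been spread over a region, and that is the whole problem. Nor is a net built from colimits that condensation could commute with: as the paper stresses when contrasting nets with constructible factorization algebras, local algebras have to be reconstructed from a fused state space.

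The missing idea is the paper's sector-first construction:
\begin{enumerate}
\item Make $e$ fully unital via Assumption~\ref{assume:extendunital} and \cref{prop:alladjoint}. Without this the condensate need not assign unital algebras, a point your plan never addresses.
\item Choose $\theta_k$ containing all simple $k$-morphisms as summands, with units $\eta_{\theta_k}$.
\item Place the $\theta_k$ on the cells of the dual of a mesh $\cR$ on the hypersurface carrying $e$, and compose them to a sector $H_\cR$.
\item Set $H_\cB=\varinjlim_\cR H_\cR$ along the embeddings induced by $\eta_{\theta_k}\hookrightarrow\theta_k$.
\end{enumerate}
Only after cutting $H_\cB$ along a disk $I$ into $H_{\cB,I}$ does your ``algebra of module maps'' make sense, as $\cB(I)=\hom_\cA(H_{\cB,I},H_{\cB,I})$. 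The vacuum property and finiteness are then checked on finite meshes, as in \cref{prop:QS2}.

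Your essential-surjectivity step also does not go through as written. First, \cref{def:finiteNet} does not make $\cA(I)$ a finite-dimensional semisimple algebra. It only constrains $H_\cA$ as a bimodule over $\cO_\cA(H_\cA,I\cup J)$ and $\cO_\cA(H_\cA,I'\cap J')^{\op}$.

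More importantly, disk-by-disk Morita trivializations do not form a defect net. They also cannot globalize to an invertible $1$-morphism $\cA\simeq\underline{\C}^{\oplus m}$, since that would make every object of $\npVect^{\fd}_\C$ a sum of units. Already in $\QS_2\simeq\mathbf{3Vect}^{\fd}_\C$ this fails: $\Vect_G$ with $G\neq 1$ is not Morita equivalent to any direct sum of copies of $\Vect$, because its Drinfeld center $\cZ(\Vect_G)$ is nontrivial. So the claim that the local data ``assemble into a condensation of $\underline{\C}^{\oplus m}$'' is the entire content of this step, and it is left unproved.

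The paper does not attempt such a local-to-global argument. It sets $\QS_n=\widehat{\Net^{\fd}_n}$ with $\End_{\QS_n}(\underline{\C})\simeq\QS_{n-1}$ and runs the condensate construction on the trivial net, so objects are condensates of the unit by construction.

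Finally, full faithfulness of $\Phi$ does not follow from the universal property. Nor does it follow from identifying only $\Omega^{n}\QS_n$ and $\Omega^{n+1}\QS_n$. It needs the full identification $\Omega\QS_n\simeq\QS_{n-1}$, which your closing sentence does invoke, so the earlier claim should be replaced by that.
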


The morphisms in this category will classify locality-preserving symmetries of lattice systems in various spacetime dimensions, thanks to the looping property that this category enjoys. We will only be concerned with spacetime dimensions up to $4$ in this work, and we will not include symmetries implemented by translation. 

\begin{maintheorem}[{\cref{cor:QCAclassification}}]\label{mainthm:QCA}
   Let $1 \leq n\leq 3 $. The groups of QCAs  over $\mathbb{C}$ up to quantum circuits and stabilization on the spaces $\Z^{n+1}$ are classified by the following groups:
    \begin{itemize}
        \item For $n=3$, $\cQ(\Z^4)/\cC(\Z^4)= \Witt$: Morita classes of braided fusion 1-categories with trivial Müger center, i.e. the Witt group of nondegenerate braided fusion 1-categories \cite{davydov2010witt}. 
        \item For $n=2$, $\cQ(\Z^3)/\cC(\Z^3)= 0$: Morita classes of
        braided fusion 0-categories over $\C$ with trivial center.\footnote{A braided fusion 0-category over $\C$ is a commutative algebra over $\C$.}
        \item For $n=1$, $\cQ(\Z^2)/\cC(\Z^2)= 0$: Morita classes of central simple algebras over $\C$.
    \end{itemize}
\end{maintheorem}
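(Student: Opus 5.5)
The plan is to deduce the classification from Theorem~\ref{thm:QS} together with the looping property of $\QS_n$, by identifying QCAs modulo circuits and stabilization with invertible morphisms of the appropriate level.

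\textbf{Step 1: QCAs as invertible endomorphisms.} Under the realization of nets over the lattice background, a QCA on $\Z^{n+1}$ is a locality-preserving automorphism of the algebra of observables of a trivial spin system on $\Z^{n+1}$. In the $(n+2)$-dimensional category $\QS_{n+1}$ it is therefore an invertible $1$-morphism $\underline{\C}\to\underline{\C}$ between copies of the trivial net. Stabilization, which tensors with ancillas, becomes composition with identities of $\underline{\C}$, and is invisible up to equivalence. Finite-depth circuits are exactly the QCAs that admit a $2$-morphism to the identity, since a circuit is built from local unitaries, that is, from invertible local intertwiners. It follows that
\[
\cQ(\Z^{n+1})/\cC(\Z^{n+1})\cong \pi_0\bigl(\Omega\,\QS_{n+1}^{\times}\bigr),
\]
which is the group of equivalence classes of invertible objects of $\Omega\QS_{n+1}=\End(\underline{\C})$.

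\textbf{Step 2: Apply Theorem~\ref{thm:QS}.} We have $\QS_{n+1}\simeq \mathbf{(n+2)Vect}^{\fd}_\C$, and looping gives $\Omega\,\mathbf{(n+2)Vect}^{\fd}\simeq\npVect^{\fd}$. Because $\npVect^{\fd}$ is the Morita category of multifusion $(n-1)$-categories, its invertible objects are Morita-invertible multifusion $(n-1)$-categories. So the group above becomes the group of Morita classes of such invertible objects. I then use the standard fact that invertibility in the Morita category is equivalent to nondegeneracy, namely that the Drinfeld center is trivial.

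\textbf{Step 3: Case analysis.}
\begin{itemize}
\item For $n=1$, the invertible objects of $\mathbf{2Vect}^{\fd}$, or Morita-invertible algebras, are the central simple algebras over $\C$. These are all matrix algebras and hence Morita trivial, so the group is the Brauer group of $\C$, which is $0$.
\item For $n=2$, looping once more identifies the relevant objects with braided fusion $0$-categories, i.e.\ commutative algebras, that have trivial center. Over an algebraically closed field such an algebra is $\C$, so the group is $0$.
\item For $n=3$, I invoke the known computation that invertible objects of $\mathbf{4Vect}^{\fd}$, equivalently invertible fusion $2$-categories up to Morita equivalence, correspond to $\mathrm{Mod}(\cB)$ for nondegenerate braided fusion categories $\cB$. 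Morita equivalence of these corresponds to Witt equivalence, by Johnson-Freyd and Décoppet, and by Brochier--Jordan--Snyder on invertible braided tensor categories. This yields $\Witt$.
\end{itemize}

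\textbf{Main obstacle.} The hardest step is Step~1. It is not formal that circuit-equivalence coincides with the existence of a $2$-morphism in $\QS$, and this must come from the construction of the lattice realization. In particular, one needs every invertible object of $\Omega\QS_{n+1}$ to be realized by an honest QCA on $\Z^{n+1}$; for $n=3$ this is the Walker--Wang/Haah-type construction. One also needs translations and stabilization to be exactly the relations quotiented out. My first attempt would be to show that the lattice realization functor is essentially surjective on invertible $1$-morphisms, using condensation completeness to build any invertible defect from local blocks. Faithfulness modulo circuits would then follow from the locality-preserving defect nets of codimension $2$ in Theorem~\ref{thm:Netn}.
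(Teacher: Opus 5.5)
Your proposal is correct to the same standard as the paper and follows essentially its route: \cref{prop:bimoduleequiv} and \cref{prop:circuit} identify QCAs modulo circuits and stabilization with invertible $1$-morphisms, then \cref{thm:QS} applies, then the known Picard-group computations (Brochier--Jordan--Snyder invertibility and the Witt group). The only cosmetic difference is that you apply \cref{thm:QS} to each $\QS_{n+1}$ separately, whereas the paper mostly works in $\QS_4\simeq\mathbf{5Vect}^{\fd}_\C$ and loops down; this is equivalent since $\Omega\,\mathbf{(k+1)Vect}^{\fd}_\C\simeq\mathbf{kVect}^{\fd}_\C$, and the paper likewise does not prove the two points you flag as the main obstacle (the converse of \cref{prop:circuit} and lattice realization of every invertible $1$-morphism) beyond ``tracing through the definitions''.
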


Finally, we perform a natural extrapolation of our results to classify fermionic QCAs, which are automorphisms of local $\Z/2$-graded superalgebras. As we have not explicitly done the categorical constructions in the fermionic case, we state the following results as claims, but whose details can surely be filled in. 

\begin{clm}[{\cref{rem:fermionicQCA}}]
   Let $1 \leq n\leq 3 $. The group of \textbf{fermionic} QCAs  over $\mathbb{C}$ up to quantum circuits and stabilization on the space $\Z^{n+1}$ are classified by the following groups:
    \begin{itemize}
        \item For $n=3$, $\sWitt$: the Witt group of slightly degenerate braided fusion categories \cite{davydov2013structure}.
        \item For $n=2$, the trivial group: in agreement with the main result of \cite{Kwan:2026kcc}.
        \item For $n=1$, $\Z/2$:  in agreement with the classification result of \cite{fidkowski2019interacting}.
    \end{itemize}
\end{clm}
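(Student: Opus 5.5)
The plan is to run the bosonic argument behind \cref{mainthm:QCA} with every ingredient replaced by its $\Z/2$-graded analogue, and then read off the answers from known super-Witt and super-Brauer computations. First, I would redo the construction of $\Net_n$ (\cref{thm:Netn}) with $\Alg(\Vect_\C)$ replaced by $\Alg(\mathbf{sVect}_\C)$, the superalgebras. Locality would be imposed through graded commutativity of algebras attached to disjoint regions, and the tensor product of nets would be the graded tensor product. Sectors become super vector spaces, and homomorphisms of sectors become even maps.

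Next, I would identify the finite semisimple subcategory $s\QS_n$ with the super analogue of $\npVect^{\fd}_\C$, in the same way as \cref{thm:QS}. This is the Morita category of multifusion $(n-2)$-categories enriched over $\mathbf{sVect}$. Equivalently, it is the condensation completion of $\mathrm{B}^{n}\mathbf{sVect}$, i.e.\ the category of $(n+1)$-super vector spaces. The looping property would then carry over verbatim. QCAs modulo circuits and stabilization on $\Z^{n+1}$ are the invertible objects of $\Omega^{?}$ of the appropriate level, so in the bosonic case they are Morita classes of invertible $E_{n-1}$-algebras in the lower category. In the fermionic case they become invertible $E_{n-1}$-algebras in super $(n-1)$-vector spaces.

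The three groups can then be identified case by case.
\begin{itemize}
\item For $n=1$, the group is the super Brauer group of $\C$, that is, Morita classes of central simple superalgebras. This is $\Z/2$, generated by the Clifford algebra $\mathrm{Cl}_1$, which matches the Majorana-translation QCA of \cite{fidkowski2019interacting}.
\item For $n=2$, the group is the group of invertible braided objects one level up. These are invertible fusion supercategories up to Morita equivalence, equivalently invertible objects in $\mathbf{s2Vect}$. Since $\pi_0$ of the invertibles of $\mathbf{s2Vect}$ over $\C$ is $\Z/2$ but these all arise from the $n=1$ layer via suspension (modded out by circuits), I would argue that the quotient is trivial, in line with \cite{Kwan:2026kcc}.
\item For $n=3$, the group is the group of Morita classes of nondegenerate braided fusion supercategories. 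These are the braided fusion categories over $\mathbf{sVect}$, i.e.\ slightly degenerate categories with Müger center $\mathbf{sVect}$, and their Witt group is exactly $\sWitt$ \cite{davydov2013structure}.
\end{itemize}

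The hardest step is the $n=2$ case. There, stabilization and circuit equivalence have to absorb the lower invertibles, and the grading makes the relevant Postnikov data subtle: for instance, the $\Z/2$ of $\mathrm{Cl}_1$ could persist as a nontrivial class. I would first compute $\pi_0$ of the Picard space directly, and then check that fermionic circuits trivialize it.
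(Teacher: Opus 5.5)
Your overall route is the paper's: pass to $\Z/2$-graded nets, identify the fermionic analogue of $\QS_n$ with $\mathbf{(n+1)SVect}^{\fd}_\C$, and read the answers off the Picard space of $\mathbf{4SVect}^{\fd}_\C$. There the relevant groups are $\pi_0=\sWitt$, $\pi_1=0$, $\pi_2=\Z/2$. Your $n=1$ and $n=3$ bullets agree with this. The gap is at $n=2$, and it is an indexing error, not a subtle Postnikov issue.

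By the looping property, the group for $\Z^{n+1}$ is
\[
\pi_{3-n}\big((\mathbf{4SVect}^{\fd}_\C)^\times\big)\cong \pi_0\big((\mathbf{(n+1)SVect}^{\fd}_\C)^\times\big).
\]
So for $n=2$ you need $\pi_0$ of the invertibles of $\mathbf{3SVect}^{\fd}_\C$, the Morita $3$-category of multifusion supercategories, not of $\mathbf{2SVect}^{\fd}_\C$ (your $\mathbf{s2Vect}$). Your phrase ``invertible fusion supercategories up to Morita equivalence'' names the right object. But ``equivalently invertible objects in $\mathbf{s2Vect}$'' does not: $\pi_0$ of the invertibles of $\mathbf{2SVect}$ is the super Brauer group, which is exactly your $n=1$ answer. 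The same slip is built into your general recipe ``invertible $E_{n-1}$-algebras in super $(n-1)$-vector spaces'', which at $n=2$ returns central simple superalgebras. The ambient category should stay fixed at $\mathbf{2SVect}$, taking invertible $E_{n-1}$-algebras there. Equivalently, the level should be $\mathbf{(n+1)SVect}$. Your description of the super analogue of $\npVect^{\fd}_\C$ as multifusion $(n-2)$-categories is off by one in the same way.

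Consequently, the step ``these all arise from the $n=1$ layer via suspension and are modded out by circuits'' is not a valid argument, for two reasons.
\begin{enumerate}
\item Delooping sends the class of $\mathrm{Cl}_1$ to $\pi_1\big((\mathbf{3SVect}^{\fd}_\C)^\times\big)$, i.e.\ to an automorphism of the unit, not to $\pi_0$. So there is nothing at the $n=2$ level for it to contaminate.
\item In this framework the quotient by circuits and stabilization is already built into the equivalence relation on morphisms, by \cref{prop:bimoduleequiv} and \cref{prop:circuit}. No further quotient is available to kill a class that genuinely sat in $\pi_0$. If your identification were right, the $\Z/2$ would survive and you would get the wrong answer.
\end{enumerate}
What the $n=2$ case actually needs is
\[
\pi_1\big((\mathbf{4SVect}^{\fd}_\C)^\times\big)=\pi_0\big((\mathbf{3SVect}^{\fd}_\C)^\times\big)=0,
\]
i.e.\ every Morita-invertible multifusion supercategory over $\C$ is Morita trivial. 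The paper simply reads this fact off the homotopy groups of $(\mathbf{4SVect}^{\fd}_\C)^\times$. Your proposal neither supplies nor invokes it; instead it replaces it with a cancellation that does not exist.
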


In the bulk of the paper we will prove that stabilization and quantum circuits have precise analogs in the context of the category $\npVect^{\fd}_\C$. This also supports the claim that the translation into the framework of nets is faithfully capturing all the features of quantum spin systems and their symmetries.

 \cref{cor:QCAclassification} abstractly establishes the existence of QCAs corresponding to classes beyond the ``pointed'' Witt group. However, I do not know how to construct a single representative (nor did I even try), due to technical deficiencies. I believe that understanding how to give a constructing for a representative is an illuminating direction for future work. 
So far there has been a body of work on realizing QCAs in the ``pointed'' Witt group in (3+1)d \cite{Shirley:2022lhu,Yang:2025jvn,YY:2026lnf}, and also in the presence of global symmetries \cite{Inamura:2026kco}. Because the Clifford group plays a central role in these realizations, a natural question can be raised as to whether classes beyond the pointed Witt group can be realized by transformations lying at higher levels of the Clifford hierarchy.


\begin{rem}
    From the perspective of the QCA spectrum, the groups computed here are precisely the ``negative'' homotopy groups of the spectrum $\mathbf Q(\pt)$. These groups are of primary interest when viewing QCA as a lattice invertible phase. A natural question for future work is how to formulate a framework in which negative $K$-theory groups recover the QCA groups described above.
\end{rem}

\subsection*{Outline}
This paper is structured as follows.
\begin{itemize}
    \item[\S\ref{section:quantumspin}]: We first review  quantum spin systems with a focus on their locality-preserving symmetries, namely QCAs.  In particular we introduce the QCA group which classifies QCA up to finite-depth circuit, and abstractions to the QCA space, QCA spectrum and explain their properties. This will set the stage for what the rest of the paper aims to recover and then ultimately extend using topological $n$-nets. 
    \item[\S\ref{section:Nets}]: We introduce topological $n$-nets and morphisms and higher morphisms implemented by defect $n$-nets of higher codimension. We show that this formalism can capture the equivalence condition used in the QCA group.
    \item[\S\ref{section:catofnets}]: We build the category $\Net_n$ of topological $n$-nets, whose morphisms are defect $n$-nets of all codimension.
    \item[\S\ref{section:condensation}]: We discuss the categorification of idempotents in higher categories to condensation monads. We emphasize the role of unitality in the theory of condensation monads and identify the key steps required to equip condensations in a category with a unital structure. 
    \item[\S\ref{section:catofQS}]: We  construct the category of quantum spin systems $\QS$ as a condensation complete category whose objects are nets that assign algebras satisfying appropriate finiteness conditions. We then show how this category is used to classify QCAs up to (3+1)d.
    \end{itemize}

\section{Quantum Spin Systems and Their Symmetries}\label{section:quantumspin}

This section gives a summary of the construction of the QCA group, space, and spectrum.
The reader whom is comfortable with the definitions are invited to move onto \S\ref{section:Nets}.

\begin{notation}\label{notation:X}
    For this section when we consider a space $X$, we will always assume that $X$ is a metric space. Furthermore we assume that  $X$ contains a countable, coarsely dense subspace that is a uniformly discrete, locally finite metric space of bounded geometry. These are the types of spaces of greatest physical interest for lattice models. 
\end{notation}

\subsection{Review of Quantum Spin Systems}

\begin{defn}
    Let $X$ be a space as in \cref{notation:X}. A function $q: X \to \mathbb{N}$ is called \textit{locally finite} if the space
    \begin{equation}
        \Lambda = \{x\in X| q_x>1\}
    \end{equation}
 has finite intersection with any  bounded subset of $X$.
    Let $\mathbb{N}^X_{\mathrm{lf}}$ be the partially ordered set of all locally finite functions, with $q\leq r$ if and only if $q_x | r_x$ for all $x \in X$.
\end{defn}

\begin{defn}\label{defn:spinsystem}
    A \textit{quantum spin system} over a space $X$ consists of a commutative ring $R$, and a locally finite function $q\in \mathbb{N}^X_{\mathrm{lf}}$. Such a system is determined by its \textit{algebra of local observables}, which is a product of matrix algebras: 
    \begin{equation}
        \cA(X,q) = \bigotimes_{x \in X} \mathrm{Mat}(R^{q_x}),
    \end{equation}
    where $\mathrm{Mat}(R^{q_x})$ is the ring of $q_x \times q_x$ matrices over $R$, and where the tensor product is taken over $R$.
\end{defn}

Given any two systems determined by $q$ and $q'$, one can perform a stacking operation to obtain a new system given by their pointwise product $q \cdot q'$.
Equivalences between quantum spin systems are implemented by isomorphisms that preserve locality in the algebra as follows.

\begin{defn}\label{def:autalpha}
    Given two $R$-quantum spin systems over a metric space $X$ with metric $\rho$, a locality-preserving homomorphism  is an $R$-algebra
homomorphism 
\begin{equation}
    \alpha: \cA(X,q)\to \cA(X,q')
\end{equation}
with the property that there exists $\ell >0$ such that for every $x\in X$ and every $A \in \cA(\{x\},q)$, we have 
\begin{equation}
    \alpha(A) \in \cA(D_\ell(x),q'), \quad D_\ell(x) := \{ y\in X|\, \rho(x,y) \leq \ell\}\,.
\end{equation}
\end{defn}

\noindent For most of the explicit results in this paper we work over $R=\C$, but it would be interesting to explore the analogs over different rings. We say that a quantum spin system defined by $q$ on $X$, equipped with a locality-preserving automorphism $\alpha$ is a \textit{quantum  cellular automaton} (QCA). 
QCAs thus provide the natural framework for describing symmetries of lattice systems.

\begin{example}
    Consider a quantum spin system defined on $\Z^2$. 
 The GNVW index \cite{Gross:2011yvb} measures the net flow of quantum information generated by a locality-preserving automorphism on such a system. It is defined locally, by comparing the operator algebras transported across a spatial cut, but is independent of the choice of cut and therefore gives a global invariant of the dynamics. For bosonic spin chains, this index is valued in  $\mathbb Q_{>0}$. Finite-depth local circuits have trivial index, while a translation of $d$-dimensional local degrees of freedom by one lattice site has index $d$, with the inverse translation having index $d^{-1}$. Thus the GNVW index may be viewed as measuring the translation-like component of a $(1+1)$-dimensional locality-preserving automorphism.
 
    As stated in \S\ref{subsection:mainresults}, we will not include these QCAs which arise from translation in our study of the group of QCAs in each dimension. One justification for this is suggested by the interpretation of QCAs as pumps for invertible phases. In this sense, a translation does not generate a new phase, but merely shifts the spatial position of an existing one.
\end{example}

We  dedicate the next subsection to  reviewing three important gadgets in the classification of QCAs: these being the QCA group, QCA space, and QCA spectrum introduced in \cite{Ji:2026fka}.
The properties of the QCA group, space, and spectrum, provide the primary motivation for this paper, as our goal is to recover them from the perspective of topological $n$-nets and ultimately provide new tools for their computation. The QCA conjecture provides the overarching organizational principle motivating the development of these structures, with each concept playing a distinct role in its formulation. We therefore begin by reviewing this conjectural framework.

\subsection{The QCA Conjecture}
Before stating the QCA conjecture, we find it illuminating to first review its conceptual origins in the theory of invertible gapped phases of matter. A gapped phase of matter is an equivalence class of gapped states on a $d$-dimensional lattice, where two states are considered equivalent if they can be connected by a continuous path along which the spectral gap remains open. Gapped phases can be equipped with the structure of a commutative monoidal under a stacking operation. Moreover, the trivial phase is the monoidal unit, and a gapped state is invertible if there is another gapped state such that their stacking is in the trivial phase.
For each $d$, the set of invertible gapped phases form an abelian group on $\Z^d$ given by $\cI(\Z^d)$.
Following a conjecture of Kitaev \cite{Kitaev2011,Kit13,Kit15,Kit19,Kit23}, there should exist a space $\mathbf{IP}(\Z^d)$ of invertible states on $\Z^d$ for  each $d$, such that $\pi_0(\mathbf{IP}(\Z^d)) \cong \cI(\Z^d)$. Furthermore, the sequence of spaces should form an $\Omega$-spectrum of invertible phases $\{\mathbf{IP}(\Z^d)\}_{d\geq 0} =\mathbb{IP}$. Kubota \cite{Kubota:2025qfl} has given a constructing of such a spectrum; we also direct the reader to Kubota for a more complete list of references regarding the spectrum of gapped phases.

Analogous to Kitaev’s conjectural picture for invertible phases, the QCA conjecture asserts that the spaces of QCAs in varying dimensions assemble into an $\Omega$-spectrum \cite{Long:2024kxl}. From this perspective, QCAs are closely related to invertible phases of matter and may themselves be regarded as their lattice counterparts.
In \cite{Ji:2026fka} the authors established a QCA spectrum involving a sequence of spaces of QCA on $\Z^n$ for $n\geq 0$, using constructions common in K-theory such as group completion. At present, explicit computations of the homotopy groups of the QCA spectrum are known only in low degrees. We therefore begin by reviewing the groups that have already been determined, as these provide the fundamental consistency checks that our framework must recover.

\subsubsection{The QCA Group}\label{subsubsection:qcagroup}
We begin by reviewing the QCA group, whose elements are stable equivalence classes of QCAs. We then introduce quantum circuits, a distinguished class of QCAs by which we quotient in order to obtain the desired classification.

\begin{defn}\label{def:stableequiv}
    Two QCA $[q,\alpha]$ and $[r,\beta]$ are \textit{stably equivalent} if there exists a $s$ with $q|s$ and $r|s$ such that $\iota_{q\to s}(\alpha) = \iota_{r\to s}(\beta)$, where $\iota$ is the embedding of one quantum spin system into another as a map of the matrix algebras at each site.
\end{defn}

\noindent This is also known as a blend equivalence in the literature \cite{Freedman:2019ncn,Tu:2025bqf,Czajka:2025mme}.

\begin{defn}
      Let $\cQ(X,q)$ be the group generated by the stable equivalence  classes of QCAs $[q,\alpha]$, for $\alpha$ as in \cref{def:autalpha}, with the product given by 
     \begin{equation}
         [q,\alpha] \cdot [r,\beta] = [s,\iota_{q\to s}(\alpha)  \circ \iota_{r\to s}(\beta)]
     \end{equation}
     for $q|s$ and $r|s$ and $s_x = \mathrm{lcm}(q_x,r_x)$.

     The \textit{total QCA group} is the colimit in the category of groups:
     \begin{equation}
        \cQ(X): = \varinjlim_{q \in \mathbb{N}^X_{\mathrm{lf}}} \cQ(X,q)\,.
     \end{equation}
 \end{defn}

 \begin{defn}
     Let $\cA(X,q)$ be a quantum spin system over the complex numbers. Suppose that $X = \coprod_j X_j$ where $X_j$ are  strict subsets of $X$, and let $q(X_j) = \prod_{x \in X_j}q_x$.
     A \textit{single layer quantum circuit}  on $\cA(X,q)$ is a choice of automorphism $\alpha_j \in \Aut_\C(\cA(X_j,q))$ on each $X_j$ which acts on $\cA(X_j,q)$ by conjugation via an unitary matrix.\footnote{Technically this will use the $\mathrm C^*$ property, but we will forget this information.}
     The product $\prod_j \alpha_j$ defines a locality-preserving automorphism of $\cA(X,q)$ which acts independently on the disjoint tensor factors of $\cA(X,q)$.

     Let $\cC(X,q)\subseteq \cQ(X,q)$ be the subgroup generated by single-layer circuits, which upon passing to the colimit in $q$ gives the group $\cC(X)\subseteq \cQ(X)$.
 \end{defn}

\noindent Our main results are concerned with understanding the (abelian group) $\cQ(X)/\cC(X)$ which isolates the topologically non-trivial behaviors of QCAs, as opposed to those that are circuits which are built up from local automorphisms. Since we are working over the complex numbers, we will consider circuits to be implemented by unitary matrices.

For $(1+1)$d quantum spin systems, the group $\cQ(\Z^2)/\cC(\Z^2)$ has been computed over a general ring $R$, and given by the Grothendieck group $K_0(\mathrm{Az}(R))$ where $\mathrm{Az}(R)$ is the category of Azumaya algebras over $R$ \cite[Theorem 1]{Ji:2026fka}. Recall that an $R$-algebra $A$ is an Azumaya algebra if there exists an $R$-algebra $B$ such that $A \otimes_R B \simeq \End_R(P)$ where $P$ is a faithfully projective $R$-module. We will see how the description of the group $\cQ(\Z^2)/\cC(\Z^2)$ can be recovered from our categorical model using topological $2$-nets in \cref{ex:azumaya}.

 \subsubsection{The Space and Spectrum of QCA}\label{subsubsection:spacespectrum}

As established in \cite{Ji:2026fka}, there is a space $\bfQ(\Z^n)$ for every $n \in \mathbb{N}$, with the property that $\pi_0(\bfQ(\Z^n))$ is the QCA group $\cQ(\Z^n)/\cC(\Z^n)$. The space is constructed via Segal K-theory applied to  a symmetric monoidal 1-category $\mathbf C(X)$ defined on a space $X$, whose objects are quantum spin systems, the monoidal structure is given by pointwise stacking,  and morphisms are locality-preserving isomorphisms.

The classifying space $\rB \mathbf C$ for the category $\mathbf{C}(X)$ is a homotopy commutative, homotopy associative $H$-space, with $\pi_0(\rB \mathbf{C})$ a commutative monoid.

The formalism of Segal K-theory implements a group completion $\mathrm{B} \mathbf{C} \to K(\mathbf C(X))$ where  $K(\mathbf C(X))$ is the \textit{K-theory space} associated to $\mathbf{C}(X)$, and is a group-like $H$-space. 
\begin{defn}
    The space of QCAs over $X$ is $\bfQ(X) := \Omega K (\mathbf C(X))$.
\end{defn}

Making use of Thomason's double mapping cylinder construction \cite{Thomason1979FirstQuadrant}, Ji and Yang furthermore show  \cite[Theorem E]{Ji:2026fka} that the sequence of spaces $\{\mathbf{Q}(\Z^n)\}_{n\geq 0}$ satisfies a looping property, giving rise to a homotopy equivalence
\begin{equation}
   \Omega  \bfQ(\Z^{n+1}) \simeq \bfQ (\Z^n)\,,
\end{equation}
giving us a spectrum of QCAs. Due to the fact that we have a $\Omega$-spectrum for QCAs, and that $\pi_i(\Omega \bfQ(\Z^{n+1})) \cong  \pi_{i+1}(\bfQ(\Z))$, we see that
the homotopy groups in degree $-n$ of $\bfQ(\pt)$ capture the groups of QCA in dimension $n$. It is these negative homotopy groups that we will capture in the rest of this paper. Probing negative homotopy groups is an appropriate formulation of the problem, since the structure we seek to capture using topological $n$-nets reflects ``categorical depth'' rather than homotopical height.

The truncated spectrum of QCAs constructed here does not arise from K-theory; instead, our construction draws on techniques developed in the theory of higher Morita categories \cite{Scheimbauer:2014zty,haugseng2017higher}. It would be interesting to recast this construction within a K-theoretic framework. Such a formulation, however, would necessarily involve negative K-theory, which is not directly accessible from a symmetric monoidal category through Segal K-theory alone.

\begin{rem}
    A version of the QCA conjecture defined on quantum spin systems built from C$^*$-algebras and unitary circuits is also established in \cite[Section 5]{Ji:2026fka}. In this case, QCAs are locality-preserving $*$-isomorphisms.
\end{rem}

\section{Quantum Spin Systems as Topological $n$-nets}\label{section:Nets}
To define topological $n$-nets, we work on a space  given by an $n$-dimensional cube $M=(0,1)^n$. This space is thought of as the background spacetime on which quantum spin systems are defined. In \S\ref{subsection:higherQCAgroups} we will apply this formalism defined for $M$ to the (discrete) space we actually care about for quantum lattice models.

Let $\Disk^{n}$ be the 1-category of $n$-dimensional disk regions of $M$,  whose morphisms are inclusions of disk regions. 
 We define a \textit{partial $n$-net} to be a functor $\cA: \mathbf{D} \to \Alg(\Vect_\mathbb{C})$,\footnote{Here, $\Alg(\Vect_\mathbb{C})$
 is the $(\infty,1)$-category of associative algebras over $\Vect_\C$, with morphisms given by algebra homomorphisms. There is another category of algebras one can consider which is the Morita category of finite semisimple algebras over $\mathbb C$. This is a 2-category and will be denoted $\tVect^{\fd}_\C$, which we take to be a fusion 2-category.} where $\mathbf{D}\subset \Disk^{n}$ is a subcategory. A partial $n$-net thus defines a family of algebras $\cA(I)$, along with a family of morphisms with respect to disk inclusion $\cA(I\subset J): \cA(I)\to \cA(J)$, with identity  $\cA(I \subset I) = \id_\cA$ and composition $\cA(J \subset K) \circ \cA(I \subset J) = \cA(I \subset K)$.

\begin{defn}
     A \textit{sector} of a partial $n$-net $\cA:\mathbf D \subset \Disk^{n} \to \Alg(\Vect_\mathbb{C})$ is a vector space $H_\cA$ equipped with a family of maps $\rho_I: \cA(I)\to \End(H_\cA)$ for $I \in \mathbf D$, such that for all $I\subset J \in \mathbf D$, we have $\rho_J\circ \cA(I\subset J) = \rho_I$. A homomorphism of sectors is a linear map $f$ that is a $\cA(I)$-module homorphism for each $I \in \mathbf{D}$.
\end{defn}

\begin{defn}
    A sector $H$ is \textit{irreducible} if it is
neither zero nor the direct sum of two nonzero sectors. Furthermore $H$ is \textit{semisimple} if it is a finite
direct sum of irreducible sectors.
\end{defn}

\begin{defn}
  A \textit{topological $n$-net} is a pair of a functor and a sector $(\cA,H_\cA)$, where  $\cA:\Disk^{n}\to \Alg(\Vect_\C)$, 
subject to the following properties for $I,J,K \in \Disk^{n}$:\footnote{The index $n$ is labeling spacetime dimension, following the notation from \S\ref{subsection:mainresults}.}
 \begin{itemize}
     \item If $I,J\subset K$ are disjoint, then the image of $\cA(I)$ and $\cA(J)$ in  $\cA(K)$ commute.
     \item If $I,J\subset K$ and the interiors of $I$ and $J$ cover $K$, then $\cA(K)$ is generated by the images of $\cA(I)$ and $\cA(J)$ under inclusion of regions.
     \item The direct sum of two nets is given by 
 $(\cA \oplus \cB)(I)=\cA(I)\oplus \cB(I)$ and the tensor product is $(\cA \otimes \cB)(I)=\cA(I)\otimes \cB(I)$\,.
 \end{itemize}
 A sector $H_\cA$ which is equivalent to $\cA$ as the regular $\cA\text{-}\cA^{\op}$-bimodule, is called the \textit{vacuum sector}.\footnote{If the manifold that the nets are defined on have a specific orientation, then the opposite algebra assigned by the net $\cA^{\op}$ can be thought of as the algebra on the manifold with opposite orientation.}  We will often denote $(\cA,H_\cA)$ by $\cA$ for simplicity.
\end{defn}

\begin{example} There are two fiducial nets one can consider. The first is the zero topological $n$-net $(\underline{0},H_{\underline{0}})$ is one where to any interval $I$, the net gives the trivial assignment $\underline{0}(I)=0$. The second is the constant topological $n$-net  $(\underline{\mathbb C},H_{\underline{\mathbb C}})$, where to any interval $I$, the net assigns $\underline{\mathbb{C}}(I) = \mathbb{C}$.
\end{example}

\begin{rem}
   Though we are greatly inspired by the work of Kong--Zhang \cite{Kong:2022hjj}, our perspective on topological $n$-nets will be different than that of \cite{Kong:2022hjj}, where the definition is motivated by a higher-dimensional generalization of conformal nets \cite{BDH1,BDH2,BDH3,BDH4}. In particular, their framework retains the action of the diffeomorphism group $\mathrm{Diff}(S^{n-1})$, whereas this additional geometric structure plays no role in the questions we consider and is therefore omitted. Likewise, rather than working with von Neumann algebras, we pass to their underlying algebraic structures. Since our primary interest lies in the higher-categorical structure of symmetries and defects, rather than the analytic aspects of algebraic quantum field theory, this algebraic setting captures the essential features relevant to our construction while avoiding the additional technicalities inherent to operator algebras and unitarity. In particular, defining finiteness for topological $n$-nets in our definition is much simpler. 
   
   In what comes later, our definition of the category of nets also differs from the definition of discrete net given by Jones in \cite{Jones:2023ptg}, where morphisms between nets were defined to be bounded spread \textit{isomorphisms} from the outset. We will also not impose isotony in the definition of topological $n$-nets, which is the statement that $\cA(I)$ is a subalgebra of $\cA(J)$ if $I \subset J$.
\end{rem}

\begin{example}\label{ex:spinsystem}
     A quantum spin system on a $(n-1)$-dimensional spatial lattice as in Definition \ref{defn:spinsystem}, and one dimension of time,  can be described as a topological $n$-net. In particular we can take the algebra on a region $I$ to be given by $\cA(I) = \bigotimes_{x\in I^\delta} \mathrm{Mat}(\C^{q_x})$, where the tensor is only over a discrete subset of points in $I^\delta \subset I$, reflecting the lattice nature of the quantum spin system. Let $V_x = \mathbb C^{q_x}$, and consider $V_I = \bigotimes_{x \in I^\delta} V_x$. Then we take the vacuum sector to be $H_{\cA(I)} = V_I \otimes V^*_I$. 
     The pair manifestly satisfies the definition for a topological $n$-net.
 \end{example}

 For any topological $n$-net $\cA$ and sector $H_\cA$, consider a closed region $R \subset (0,1)^n$. Define an algebra on $H_\cA$ denoted by $\mathcal{O}_\cA(H,R)\subset \End(H)$, to be an algebra generated by the images of $\cA(I)$ for all $I \subset R$. 
The following definitions will be particularly important when formulating the category of  quantum spin systems using topological nets.

\begin{defn}\label{def:semisimpleNet}
      A topological $n$-net $(\cA,H_\cA)$ is \textit{reducible} if the sector $H_\cA$ is reducible as a $\mathcal O_\cA(H_\cA,(0,1)^n)\text{-}\mathcal O_\cA(H_\cA,(0,1)^n)^{\op}$-bimodule. Furthermore, $(\cA,H_\cA)$ is
      \textit{semisimple} if $H_\cA$ is semisimple as a  $\mathcal O_\cA(H_\cA, (0,1)^n)\text{-}\mathcal O_\cA(H_\cA, (0,1)^n)^{\op}$-bimodule.
\end{defn}

\begin{defn}\label{def:finiteNet}
    Let $I,J \in \Disk^n$ be disjoint disk regions, and let $I'$ and $J'$ denote their complements. A topological $n$-net $(\cA,H_\cA)$ is \textit{finite} if it is semisimple and $H_\cA$ is dualizable as a  $\cO_\cA(H_\cA,I \cup J)\text{-}\cO_\cA(H_\cA,I' \cap J')^{\text{op}}$-bimodule for all $I$ and $J$.
\end{defn}

\subsection{Defect $n$-nets}\label{subsection:defectnets}

A topological $n$-net in essence behaves like a locally constant copresheaf. To proceed with categorification we need to describe morphisms between these nets, so we introduce defect $n$-nets. Their definition requires a stratification of the underlying manifold $M$ on which the nets are supported. A stratification of $M$ consists of a collection of closed dimension-$i$ submanifolds $M_i$ such that
\begin{equation}
    M = M_n \supset M_{n-1} \supset \ldots \supset M_0 \supset M_{-1} = \emptyset\,,
\end{equation}
around which $M_{i+1}$ is equipped with a local parametrization. We will take $M = (0,1)^n$, with the following type of stratification.\footnote{One can straightforwardly generalize the stratifications in this section to $n$-dimensional lattices that are infinite in each direction.}

\begin{defn}
    The \textit{standard affine flag of index $k$} is given by the stratification:
    \begin{equation}
        (0,1)^{n}\supset (0,1)^{n-1}\times \{\tfrac{1}{2}\} \supset \ldots \supset (0,1)^{n-k+1}\times \underset{k-1\text{-entries}}{\underbrace{\{(\tfrac{1}{2},\ldots,\tfrac{1}{2})\}}}\supset (0,1)^{n-k}\times \underset{k\text{-entries}}{\underbrace{\{(\tfrac{1}{2},\ldots,\tfrac{1}{2})\}}}\,.
    \end{equation}
    A local picture of the affine flag which contains stratifications up to codimension-$k$ is called a \textit{standard stratified disk of index $k$}.
\end{defn}

\begin{example}
    For $n=2$ we consider the following standard stratified disks of index $k$ labeled below.
    \begin{equation}\label{eq:2disks}
   \begin{tikzpicture}
\tikzset{
  boundary/.style={draw=frame, line width=0.9pt},
  stratum/.style={draw=frame, line width=0.9pt, line cap=round},
  point/.style={circle, fill=frame, inner sep=1.2pt}
}

\begin{scope}
  \begin{scope}
    \clip (0,0) circle (0.5);
    \fill[c0] (-1,-1) rectangle (1,1);
  \end{scope}
  \draw[boundary,white] (0,0) circle (0.5);
  \node[font=\small] at (0,-0.78) {$k=0$};
\end{scope}

\begin{scope}[xshift=2.4cm]
  \begin{scope}
    \clip (0,0) circle (0.5);
    \fill[c0] (-1,-1) rectangle (0,1);
    \fill[c3] (0,-1) rectangle (1,1);
  \end{scope}
  \draw[boundary,white] (0,0) circle (0.5);
  \draw[stratum] (0,-0.45) -- (0,0.45);
  \node[font=\small] at (0,-0.78) {$k=1$};
\end{scope}

\begin{scope}[xshift=4.8cm]
  \begin{scope}
    \clip (0,0) circle (0.5);
    \fill[c0] (-1,-1) rectangle (0,1);
    \fill[c3] (0,-1) rectangle (1,1);
  \end{scope}
  \draw[boundary,white] (0,0) circle (0.5);
  \draw[stratum] (0,-0.45) -- (0,0.45);
  \node[point] at (0,0) {};
  \node[font=\small] at (0,-0.78) {$k=2$};
\end{scope}
   \end{tikzpicture}
    \end{equation}
   The blue and red regions show that connected components of the 2-dimensional strata, on which a topological 2-net lives. The lines of higher codimension can helpfully be conceptualized as local regions of defects embedded into the space. 
\end{example}

In general, we are interested in stratifications of $M=(0,1)^n$ where for each point $x\in M$, there is an open neighborhood of $x$ that is diffeomorphic as a stratified space to a standard stratified disk of some index. Let $\Disk^{n}_k$ be a  subcategory of $\Disk^{n}$ consisting of stratified  $n$-disks of index $k$, i.e. $n$-disks that are diffeomorphic to the standard stratified $n$-disk. 

\begin{defn}
    A \textit{defect $n$-net of codimension $k$} for $1\leq k \leq n$ is a pair of a partial topological $n$-net and a sector $(\cF,H_\cF)$, where $\cF: \Disk^{n}_k\to \Alg(\Vect_\C)$, subject to the following conditions for $I,J,K \in \Disk^{n}_k$:
     \begin{itemize}
     \item If $I,J\subset K$ are disjoint, then $[\cF(I),\cF(J)]=0$ as subalgebras of $\cF(K)$.
     \item If $I,J\subset K$ and the interiors of $I$ and $J$ cover $K$, then $\cF(K)$ is generated by the images of $\cF(I)$ and $\cF(J)$ under inclusion of regions.
 \end{itemize}
\end{defn}

We now discuss how defect $n$-nets of different codimension   implement morphisms in the category of topological $n$-nets $\Net_n$. Showing that composition of morphisms at each level closes, as well as the coherence relations, will be relegated to \S\ref{section:catofnets}.

\begin{example}
   To begin discussing morphisms, let $\cF$ be a defect 1-net of codimension-1, also known as a sector. By definition $\cF$ is defined on standard disk regions of index 1 on $(0,1)$, i.e. on $\{s\}$. One can view $\cF$ as a 1-morphism from the topological 1-nets on each side of $s$ shown below
\begin{equation}
     \begin{tikzpicture}[x=4cm,y=1cm,baseline]
  \coordinate (A) at (0,0);
  \coordinate (B) at (1,0);
  \coordinate (s) at (0.43,0);

  \draw[leftblue, line width=3pt, line cap=round] (A) -- (s);
  \draw[rightred, line width=3pt, line cap=round] (s) -- (B);

  \fill (s) circle (1.8pt);

  \node[below=4pt] at (A) {$0$};
  \node[below=4pt] at (B) {$1$};
  \node[above=4pt] at (s) {$s$};
\end{tikzpicture}
\end{equation}
Let us consider the inclusions of the blue and red intervals into the whole interval. 
The restriction of $\cF$ along the inclusion to  the blue region is the source topological 1-net, and the restriction on the red region is the target topological 1-net.
  Consider another defect 1-net of codimension-1 given by $\mathcal G$, defined on a stratification given by $t$ as below
  \begin{equation}
     \begin{tikzpicture}[x=4cm,y=1cm,baseline]
  \coordinate (A) at (0,0);
  \coordinate (B) at (1,0);
  \coordinate (t) at (0.7,0);

  \draw[leftblue, line width=3pt, line cap=round] (A) -- (t);
  \draw[rightred, line width=3pt, line cap=round] (t) -- (B);

  \fill (t) circle (1.8pt);

  \node[below=4pt] at (A) {$0$};
  \node[below=4pt] at (B) {$1$};
  \node[above=4pt] at (t) {$t$};
\end{tikzpicture}
\end{equation}
   The 2-morphisms are maps between two  defect $1$-nets of codimension-$1$, $\cF$ and $\mathcal{G}$ (with $\cF$ and $\cG$ defined by the strata at $s$ $t$, respectively.), which is a linear map of vector spaces from $\cF$ to $\cG$ that takes $s$ to $t$.
\end{example}

\begin{example}
    We now consider morphisms of topological 2-nets. As before, the defect 2-nets of codimension-1 defined on a stratified disk given by \cref{eq:2disks} middle,  implement 1-morphisms. In addition, the 2-morphisms are implemented by defect 2-nets of codimension-2 $\cF$ (which are by definition sectors), defined on a stratified disk given by \cref{eq:2disks} right. This is done as follows. Consider splitting the disk along a horizontal line passing through the middle point, and reshaping the regions above and below the line into stratified  disks of index 1 as shown below.

\begin{equation}
    \begin{tikzpicture}
    \tikzset{
  boundary/.style={draw=frame, line width=0.9pt},
  stratum/.style={draw=frame, line width=0.9pt, line cap=round},
  point/.style={circle, fill=frame, inner sep=1.2pt}
}
        \begin{scope}[xshift=4.8cm]
  \begin{scope}
    \clip (0,0) circle (0.5);
    \fill[c0] (-1,-1) rectangle (0,1);
    \fill[c3] (0,-1) rectangle (1,1);
  \end{scope}
  \draw[boundary,white] (0,0) circle (0.5);
  \draw[stratum] (0,-0.45) -- (0,0.45);
  \node[point] at (0,0) {};
  \node [rotate=45, opacity=5] at (1,.8) {$\longmapsto$};
  \node [rotate=-45, opacity=5] at (1,-.8) {$\longmapsto$};
   \clip (0+2,0+1.5) circle (0.5);
    \fill[c0] (-1+2,-1+1.5) rectangle (0+2,1+1.5);
    \fill[c3] (0+2,-1+1.5) rectangle (1+2,1+1.5);
   \draw[boundary,white] (0+2,0+1.5) circle (0.5);
  \draw[stratum] (0+2,-0.45+1.5) -- (0+2,0.45+1.5);
\end{scope}
\begin{scope}
     \clip (0+6.85,0-1.5) circle (0.5);
    \fill[c0] (-1+6.85,-1-1.5) rectangle (0+6.85,1-1.5);
    \fill[c3] (0+6.85,-1-1.5) rectangle (1+6.85,1-1.5);
     \draw[boundary,white] (0+6.85,0-1.5) circle (0.5);
  \draw[stratum] (0+6.85,-0.45-1.5) -- (0+6.85,0.45-1.5);
\end{scope}
    \end{tikzpicture}
\end{equation}
Here, the source and target of the 2-morphisms can be taken to be either restricting $\cF$ to the bottom or top disk.
     Finally, 3-morphisms are implemented as linear maps between sectors.
\end{example}

\subsection{Justification for Topological $n$-nets}\label{subsection:justification}
Let us take for granted the main results in $\S\ref{section:catofnets}$ and $\S\ref{section:catofQS}$, which establish the category $\Net_n$ of topological $n$-nets and the category of quantum spin systems $\QS_n$ as a subcategory.
We will show that by considering quantum spin systems from the perspective of topological $n$-nets, the 1-morphisms of  $\Net_n$ given by defect $n$-nets of codimension-1, capture the properties and classification of QCAs studied on the space $\Z^n$ via the QCA group. 

We first show that the equivalence condition on invertible 1-morphisms in $\Net_n$ can equivalently be thought of as stable equivalence of QCAs as in \cref{def:stableequiv}.

\begin{prop}\label{prop:bimoduleequiv}
  Let $\cF$ and $\cG$ be two 1-morphisms in $\Net_n$. $\cF$ is stably equivalent to $\cG$ if and only if there exists $\cD:\cF\to \cG$ which is an invertible 2-morphism.
\end{prop}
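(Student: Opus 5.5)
The plan is to make both sides of the equivalence concrete at the level of the algebras a codimension-1 defect net assigns to disks, and then compare them disk by disk.

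\textbf{Definitions to fix first.} Extending \cref{def:stableequiv} from QCAs to an arbitrary 1-morphism $\cF:\cA\to\cB$ (a defect $n$-net of codimension 1 on $\Disk^n_1$), I would call $\cF$ and $\cG$ \emph{stably equivalent} if there are locally finite stabilizations $q,r$ such that
\begin{equation}
\cF\otimes \underline{\mathrm{Mat}}_q\;\cong\;\cG\otimes\underline{\mathrm{Mat}}_r .
\end{equation}
Here $\underline{\mathrm{Mat}}_q$ assigns $\bigotimes_{x\in I^\delta}\mathrm{Mat}(\C^{q_x})$ to $I$, and $\cong$ is a natural isomorphism of functors on $\Disk^n_1$ that is compatible with the sectors. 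A 2-morphism $\cD:\cF\to\cG$ is a codimension-2 defect net whose restrictions to the two halves of an index-2 disk are $\cF$ and $\cG$. It is invertible when there is $\cD'$ with $\cD'\circ\cD\simeq \id_\cF$ and $\cD\circ\cD'\simeq\id_\cG$. I take these composites to be the relative tensor products of sectors, as in \cref{thm:Netn}.

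\textbf{Forward direction.} Write $V_I=\bigotimes_{x\in I^\delta}\C^{q_x}$. The stabilization $\cF\to\cF\otimes\underline{\mathrm{Mat}}_q$ is realized by the codimension-2 defect net whose sector on an index-2 disk $K$ is $H_\cF\otimes V_K$. This is a bimodule between $\cF(K)$ and $\cF(K)\otimes\mathrm{Mat}(V_K)$.

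It is invertible, with inverse built from $V_K^*$. The two needed isomorphisms are
\begin{equation}
V_K\otimes_{\mathrm{Mat}(V_K)}V_K^*\cong\C \quad\text{and}\quad V_K^*\otimes_\C V_K\cong \mathrm{Mat}(V_K).
\end{equation}
Both are compatible with disk inclusions because $V$ is a tensor product over lattice sites. Next, an isomorphism of defect nets $\phi:\cF\otimes\underline{\mathrm{Mat}}_q\cong\cG\otimes\underline{\mathrm{Mat}}_r$ gives an invertible 2-morphism: take the identity 2-morphism (the vacuum sector) and twist one of its actions by $\phi$. Composing the stabilization 2-morphism, the twisted identity, and the inverse of the stabilization for $\cG$ produces $\cD$.

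\textbf{Reverse direction.} Given an invertible $\cD$, first restrict to an index-2 disk $K$ meeting the stratum. Invertibility says that $H_\cD$ is an invertible bimodule between the algebras generated by $\cF$ and by $\cG$ near $K$. So these local algebras $\cF(I)$ and $\cG(I)$ are Morita equivalent, naturally in $I$.

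I then need to upgrade Morita equivalence to stable isomorphism. If $P$ is the progenerator, then $\cG(I)\cong\End_{\cF(I)}(P)$. Choosing $P\oplus Q\cong\cF(I)^{m}$ gives corner embeddings, and to remove the corners I would use that the algebras in play are matrix-like, i.e.\ Azumaya over $\C$ site by site. This is where finiteness (\cref{def:finiteNet}) enters. With it, $P$ is free of constant rank, and one gets
\begin{equation}
\cF(I)\otimes\mathrm{Mat}_{m}\cong\cG(I)\otimes\mathrm{Mat}_{k}.
\end{equation}

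\textbf{Main obstacle.} The hard step is doing this \emph{locally}: the progenerators and ranks must be chosen site by site so that $m$ and $k$ define locally finite functions $q,r$, and the isomorphisms must commute with the maps $\cF(I\subset J)$. My first attempt is to use the two net axioms, commutation of disjoint regions and generation by covers. Together these should force $H_\cD$ to factor, up to a boundedly supported correction, as a tensor product over lattice sites of invertible local bimodules. I would then absorb the bounded correction into the stabilization. If full factorization fails, the fallback is to prove the statement only for the semisimple, finite nets that form $\QS_n$.
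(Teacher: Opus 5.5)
\textbf{The converse direction does not go through, and with your definition it is false in the stated generality.} With stable equivalence defined as an isomorphism $\cF\otimes\underline{\mathrm{Mat}}_q\cong\cG\otimes\underline{\mathrm{Mat}}_r$ of functors on $\Disk^n_1$, an invertible $\cD$ only gives you a Morita equivalence of the local defect algebras. Morita equivalence does not upgrade to isomorphism after matrix stabilization, even for finite semisimple algebras.

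Your claim that finiteness makes the progenerator free of constant rank is the failing step. \cref{def:finiteNet} gives semisimplicity and dualizability, not that the algebras are central simple. For example, $\C\oplus\C$ and $\C\oplus\mathrm{Mat}_2(\C)$ are Morita equivalent via the progenerator $\C\oplus\C^2$, which has ranks $1$ and $2$ on the two summands. But $\mathrm{Mat}_m(\C\oplus\C)$ has Wedderburn blocks of sizes $(m,m)$, while $\mathrm{Mat}_k(\C\oplus\mathrm{Mat}_2(\C))$ has blocks $(k,2k)$, so they are never isomorphic.

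Using these as the defect algebra at each site of two parallel codimension-one defects gives a counterexample. Between spin-system nets, the bulk matrix factors only rescale all blocks uniformly, so one gets $\cF,\cG$ related by an invertible codimension-two defect built from $\C\oplus\C^2$ but admitting no stabilized isomorphism. Your fallback to $\QS_n$ does not help, since those nets are semisimple, not site-wise simple.

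Suppose you additionally assume every local algebra, including the defect ones, is a full matrix algebra, which is what your corner-removal really needs. Then any two local algebras are abstractly stably isomorphic, so the entire content sits in the step you leave open. That step is choosing the identifications site by site with locally finite ranks, compatibly with the maps $\cF(I\subset J)$ (your hoped-for factorization of $H_\cD$ ``up to a boundedly supported correction''). It is asserted, not proved.

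\textbf{The forward direction also has a type mismatch.} Tensoring with $\underline{\mathrm{Mat}}_q$ on all disks changes the source and target nets to $\cA\otimes\underline{\mathrm{Mat}}_q$ and $\cB\otimes\underline{\mathrm{Mat}}_q$. So $\cF\Rightarrow\cF\otimes\underline{\mathrm{Mat}}_q$ is not a 2-morphism of $\Net_n$, since a codimension-two defect must restrict to the fixed bulk nets away from the codimension-one stratum. You would need ancillas supported only near the stratum, or to whisker with invertible 1-morphisms $\cA\simeq\cA\otimes\underline{\mathrm{Mat}}_q$.

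\textbf{How the paper avoids both issues.} It never works with an algebra-level stabilized isomorphism. It reads stable equivalence of 1-morphisms as equivalence after tensoring with identity 1-morphisms, and identifies this with being separated by an invertible codimension-two defect that is the identity away from the supports. That gives the forward direction. For the converse, it restricts $\cD$ to $\cF\otimes\mathbf 1$ and $\mathbf 1\otimes\cG$ and reads the stable equivalence off $\cD^{-1}\circ\cD\simeq\id$ and $\cD\circ\cD^{-1}\simeq\id$. Its notion of ``stable'' is effectively Morita-theoretic, so it never has to produce an isomorphism of algebras after matrix stabilization, which is exactly the step that fails in your setup.
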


\begin{proof}
   Suppose that $\cF$ and $\cG$ are stably equivalent. This means they are equivalent upon tensoring with the identity for both 1-morphisms. This is the same as the two being separated by an invertible defect $n$-net $\cD$ that evaluates to $\cF$ and the identity outside of the support of $\cF$, as well as $\cG$ and the identity outside of $\cG$. Conversely, suppose that $\cD$ is an invertible 2-morphism. 
   Then by the definition of a defect net of codimension-2,
   we can consider restricting $\cD$ to either $\cF\otimes \mathbf 1$ or $\mathbf 1 \otimes \cG$, and consider the compositions $\cD^{-1}\circ \cD: \cF \otimes \mathbf 1 \to \cF \otimes \mathbf 1$ and $\cD\circ \cD^{-1}:\mathbf 1 \otimes \cG \to \mathbf 1 \otimes \cG$, where the two composites carry  $\cF$ or $\cG$ into itself after  tensoring on by $\mathbf 1$. The fact that $\cD\circ \cD^{-1} = \cD^{-1}\circ \cD = \id$ then gives the desired stable equivalence.
\end{proof}

\begin{prop}\label{prop:circuit}
   A quantum circuit is a 1-morphism in $\Net_n$ that is stably equivalent to the identity 1-morphism. 
\end{prop}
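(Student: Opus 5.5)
The plan is to reduce the statement to \cref{prop:bimoduleequiv}: it suffices to produce an invertible 2-morphism $\cD:\cF_\alpha\to\id_\cA$, where $\cF_\alpha$ is the codimension-1 defect $n$-net determined by a circuit $\alpha$ on $\cA=\cA(X,q)$. First I will make $\cF_\alpha$ explicit. On a stratified disk $I$ of index~1, $\cF_\alpha(I)$ is the algebra generated by the images of $\cA$ on the source side of the wall and of $\alpha(\cA)$ on the target side. Its sector is $H_\cA$, with $\cA$ acting through the source side and through $\alpha$ on the target side. Locality preservation of $\alpha$ ensures this is a defect net. Because the product defining $\cF$ is monoidal, it is enough to do this for a single-layer circuit $\alpha=\prod_j\Ad(U_j)$ with $U_j$ unitary on $\cA(X_j,q)$. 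The general case then follows by composing the 2-morphisms obtained for each layer, since composition of 1-morphisms corresponds to composing automorphisms.

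For a single layer, I would take $\cD$ to be the codimension-2 defect net whose sector is again $H_\cA=V\otimes V^*$ (in the notation of \cref{ex:spinsystem}), and let $\cD$ act by the linear map $U\otimes 1$, where $U=\bigotimes_j U_j$. Restricted to any bounded region, $U$ is a finite tensor product of the $U_j$ meeting that region, so the map is well defined locally. Restricting $\cD$ to the bottom half-disk gives the sector of $\cF_\alpha$, and restricting to the top half-disk gives the vacuum sector of $\id_\cA$. The key identity is
\begin{equation}
 (U\otimes 1)\,\rho_{\cF_\alpha}(a) = \rho_{\id}(a)\,(U\otimes 1), \qquad a\in\cA(I),
\end{equation}
which holds because $\alpha(a)=UaU^{-1}$. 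This identity says that $U\otimes 1$ intertwines the two sectors. The inverse 2-morphism is $U^{-1}\otimes 1$, so $\cD$ is invertible. Applying \cref{prop:bimoduleequiv} then shows that $\cF_\alpha$ is stably equivalent to the identity.

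For the converse, suppose $\cF$ is stably equivalent to $\id_\cA$. By \cref{prop:bimoduleequiv} this gives an invertible 2-morphism $\cD$, which is an invertible intertwiner between sectors. After stabilization the sectors are of the form $V\otimes V^*$, so $\cD$ is a linear automorphism of $V$ tensored with the identity. Locality of $\cD$ as a defect net restricted to disks then lets me factor this automorphism into local pieces supported on a cover by disjoint regions, which are the layers of a circuit. Over $\C$, each invertible piece can be replaced by a unitary through polar decomposition without changing the automorphism it implements up to a circuit.

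The main obstacle is exactly this locality-splitting step. A global intertwiner need not be a finite product of local unitaries without some finite-depth argument. I would first try to use the generation axiom for covers, which says $\cF(K)$ is generated by $\cF(I)$ and $\cF(J)$. Applied to a two-coloring of a cover of $\Z^n$ by bounded cells, this should write $\cD$ as a product of two layers. If that fails, I would fall back to proving only the forward direction, with the converse holding modulo stabilization.
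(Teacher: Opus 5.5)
Your forward direction is essentially the paper's argument. Multiplying by the implementing unitary makes the $\alpha$-twisted bimodule ${}_{\cA}\cA_{\cA^{\alpha}}$ isomorphic to the identity bimodule. Single layers are handled by the disjointness of the supports of the $U_j$, and finite-depth circuits by composing layers. The one organizational difference is that you get stabilization for free from \cref{prop:bimoduleequiv}, whereas the paper repeats the trivialization by hand for $\alpha_J\otimes\id_{\cT}$ with an ancilla; that is fine.

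However, your intertwiner is wrong as written. With $\alpha(a)=UaU^{-1}$ one has $U\alpha(a)=U^2aU^{-1}\neq aU$, so the displayed identity fails. More importantly, in your convention the twist sits on the target (right, $V^*$) action, while $U\otimes 1$ acts on the $V$ factor. There it does not commute with the untwisted source action. The correct map is the paper's $m\mapsto mU$ from ${}_{\cA}\cA_{\cA^{\alpha}}$ to ${}_{\cA}\cA_{\cA}$, which acts on the $V^*$ factor and has inverse $m\mapsto mU^{-1}$.

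The converse is not part of the statement; the paper only proves that circuits give stably trivial 1-morphisms. Your sketch of it also has a genuine gap that cannot be closed as proposed. An invertible 2-morphism only tells you that $\alpha$ is implemented by a single global invertible operator. The defect-net axioms (commutation on disjoint disks, generation on covers) constrain the algebras $\cF(I)$, not this operator, so your two-colouring step gives no factorization into boundedly many layers of local unitaries.

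In fact the implication fails without further input. On a finite lattice, every automorphism of $\cA(X,q)\cong\mathrm{Mat}(N)$ is inner by Skolem--Noether. For example, the unit shift on a periodic chain is $\Ad$ of the cyclic permutation unitary, so exactly your intertwiner trivializes it. Yet the shift is not a circuit of depth bounded independently of the system size. The polar-decomposition step also needs the $*$-structure that the paper deliberately forgets. Take your own fallback and prove only the forward direction.
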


\begin{proof}
    Consider a quantum circuit $\alpha_J$ that acts on the  algebra in a region $J$ via conjugation by an unitary $U_J$. Denote this circuit by $\alpha_J = \mathrm{Ad}_{U_J}$, which is technically a single layer circuit. We will show the associated 1-morphism in $\Net_n$, which has the interpretation as a bimodule, is stably equivalent to the identity. In particular from $\alpha_J$  we can form the bimodule $_{\cA(J)} \cA(J)_{\cA(J)^{\alpha_J}}$, where the right action is twisted by $\alpha_J$, so that for $m\in \cA(K)$ the right action is given by 
    \begin{equation}
        m \cdot b = m \alpha_J(b) = m U_J b U^{-1}_J\,.
    \end{equation}
     We show that there is an isomorphism 
    $$T_{U_J}: _{\cA(J)} \cA(J)_{\cA(J)^{\alpha_J}}\to _{\cA(J)} \cA(J)_{\cA(J)}\,,$$
    implemented by $T_{U_J}(m) = m U_J$. We can check that $T_{U_J}$ respects the left $\cA$-action
    \begin{equation}
        T_{U_J}(a \cdot m) = a m U_J = a \cdot T_{U_J}(m)\,,
    \end{equation}
    as well as the twisted right $\cA$-action
    \begin{equation}
        T_{U_J}(m \cdot b) =  T_{U_J}(m U_J b U^{-1}_J) = m U_J b U^{-1}_J U_J = T_{U_J} (m) \cdot b\,.
    \end{equation}
    Combining the two actions we see that for all $a,m,b \in \cA$,
    \begin{equation}
       T_{U_J} (a\cdot m \cdot b) =  a \cdot T_{U_J}(m) \cdot b\,,
    \end{equation}
    and therefore $T_{U_J}$ is an $\cA\text{-}\cA$-bimodule map. Moreover, one can check that $T_{U_J}$ has an inverse given by 
$$T_{U^{-1}_J}: _{\cA(J)} \cA(J)_{\cA(J)}\to _{\cA(J)} \cA(J)_{\cA(J)^{\alpha_J}}$$
which acts as $T_{U^{-1}_J}(m) = m U^{-1}_J$. Therefore $T_{U_J}$ is an $\cA\text{-}\cA$-bimodule isomorphism, and $\alpha_J$ can be made isomorphic to the identity automorphism $\id_J$.

Now suppose we consider the system $\cA(J) \otimes \mathcal{T}$ where $\cT$ is an ancilla system placed on some region outside of $J$ and within some large region $K$ on which the topological $n$-nets are defined. Then the stabilized circuit is  $\alpha_J \otimes \id_\cT  = \Ad_{\widetilde{U}}$ where $\widetilde{U}$ acts on a region of $J$ as well as the region where $\cT$ is defined. We can consider an analogous bimodule map 
\begin{equation}
    T_{\widetilde{U}}: _{\cA(J)\otimes \cT } \cA(J)\otimes \cT_{(\cA(J)\otimes \cT)^{\alpha_J\otimes \id_{\cT}}} \to _{\cA(J)\otimes \cT} \cA(J)\otimes \cT_{\cA(J)\otimes \cT}\,,
\end{equation}
and in the same way as above show that it is an isomorphism.  Therefore, $\alpha_J \otimes \id_\cT \simeq \id_{J}\otimes \id_\cT \simeq  \id_{K}$ and we obtain a stable equivalence of the twisted bimodule to the trivial bimodule i.e. a stable trivialization of the bimodule.

In the case when the circuit is finite-depth,  we can view each layer as  implemented by a product of unitaries over the entire space. A single layer of the circuit is the tensor product over disjoint regions of such local operations given by $U_J$. Since the
supports are disjoint, one  can consider local stable trivializations as discussed above that are mutually compatible within the layer. Therefore the associated  1-morphism for the single layer circuit is
built by tensor products of local automorphisms that are each stably isomorphic to the identity. Hence, the entire layer is stably equivalent to the identity 1-morphism.
A finite-depth
circuit is then a finite composite of such layers. 
\end{proof}

A key feature to notice is that while the automorphism $\alpha=\mathrm{Ad}_{U_J}$ can be nontrivial as an automorphism of $\cA$, when viewed as a 1-morphism i.e. a defect net, the twisted bimodule $_\cA \cA_{\cA^\alpha}$ is stably equivalent to the identity bimodule. Conceptually, 
inner automorphisms are invisible after passing from automorphisms to Morita-theoretic defects that implement the map.  The preceding results show that the group of invertible 1-morphisms of $\Net_n$ that are not equivalent to the identity has an equivalent description to QCAs up to stabilization and up to quantum circuits; i.e. what is captured by the group $\cQ(X)/ \cC(X)$ in \S\ref{subsubsection:qcagroup}.

\begin{example}\label{ex:azumaya}
  Let us   apply Propositions \ref{prop:bimoduleequiv} and \ref{prop:circuit}  to quantum spin systems in $(1+1)$d, which are objects in $\QS_2$. In this case the category $\QS_2$ is a 3-category, and the group of QCA modulo stabilization and quantum circuits is given by the group of invertible 1-morphisms therein, which is equivalent to  $\pi_0((\tVect^\fd_{\mathbb{C}})^\times)$. This group is equivalent to the Morita classes of Azumaya algebras over $\C$, and matches the description of the group of QCAs in $(1+1)$d given in \cite[Theorem 1]{Ji:2026fka}. 
\end{example}

\noindent Motivated by this last example, the group of higher-dimensional QCAs should be equivalent to the group of ``higher Azumaya algebras''. This was also pointed out in \cite{ludewig2026quantum} in the context of coarse homology. Making this realization precise will be done in \S\ref{section:catofQS}.

\section{The Category of Topological $n$-nets}\label{section:catofnets}
In this section we consider the category of topological $n$-nets with its defect $n$-nets of codimension-$k$ as morphisms, and show that it forms a symmetric monoidal $(\infty,n+1)$-category $\Net_n$.  We are inspired by the techniques in the factorization algebra literature, in particular involving constructable factorization algebras \cite{gwilliam2018duals}. In the context of the topological nets that we have defined, there are similarities between the two fields. When one assigns analytic objects such as von Neumann algebras as in the case of conformal nets, the differences become more stark.

\begin{thm}\label{thm:Netn}
  There is a symmetric monoidal $(\infty,n+1)$-category $\Net_n$ with the following data for the nontrivial morphisms:
  \begin{itemize}
 \item Objects are topological $n$-nets,
        \item Tensor product and direct sum of two topological $n$-nets $\cA$ and $\cB$ exist and is given by tensor product and direct sum of the algebras assigned by the nets. The tensor unit is the trivial topological $n$-net $\underline{\mathbb{C}}$ and the zero object is $\underline{0}$.
         \item For $1\leq k<n$, a $k$-morphism  $\mathcal F:\cA \to \cB$ is a locality-preserving defect $n$-net of codimension-$k$.  The identity $k$-morphism is the restriction of $\cA$ to a defect $n$-net of codimension-$k$.
         \item An $n$-morphism $H:\cA \to \cB$ is  a sector of the partial net $\cF: \Disk^{n}_{n}\to \Alg(\Vect_\mathbb{C})$.  The identity $n$-morphism is the  vacuum sector $H_\cA$. The composition of two $n$-morphisms $H:\cA \to \cB$ and $K: \cB \to \cC$ is the tensor product vector space $K \otimes_\cB H$.
          \item An $(n+1)$-morphism is a homomorphism of sectors.
  \end{itemize}
\end{thm}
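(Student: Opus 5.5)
The plan is to build $\Net_n$ as a complete $n$-fold Segal space, following the construction of the higher Morita category $\mathrm{Alg}_n$ by Scheimbauer and Haugseng \cite{Scheimbauer:2014zty,haugseng2017higher}. The model is the treatment of constructible factorization algebras on stratified intervals in \cite{gwilliam2018duals}, with constructible factorization algebras replaced by defect $n$-nets. Concretely, for each multi-index $(m_1,\dots,m_n)$, the space of $(m_1,\dots,m_n)$-cells will consist of defect $n$-nets on $(0,1)^n$. Each such net is constructible with respect to a product stratification. In the $i$-th coordinate direction this stratification is given by $m_i+1$ ordered marked hyperplanes $\{x_i=t^{(i)}_j\}$. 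Away from the marked hyperplanes the net restricts to a topological $n$-net, and near the intersection of $k$ of them it restricts to a defect $n$-net of codimension $k$. The face maps forget marked hyperplanes: they restrict the net to the disks that avoid the forgotten hyperplane, and then reglue using the generation axiom. The degeneracy maps double a hyperplane. We topologize by letting the positions $t^{(i)}_j$ vary, so that collisions of hyperplanes realize composition.

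The key steps, in order, are the following.
\begin{enumerate}
\item Define the functor $\Delta^{\op,n}\to\mathcal S$ above. Check functoriality of restriction along the inclusions of stratified disks $\Disk^n_k\subset\Disk^n_{k'}$. The commutation and generation axioms ensure that a net is determined by its values on a cover, so restricting and regluing is well defined.
\item Verify the Segal condition in each direction. An $(\dots,m_i,\dots)$-cell should be the fibre product of $m_i$ cells of length $1$ over the objects at the marked hyperplanes. This again uses the generation axiom: $\cF(K)$ is generated by $\cF(I),\cF(J)$ for any $I,J$ whose interiors cover $K$. Consequently a net on a region containing two parallel defects is glued from the nets near each defect.
\item Check the globularity condition, then complete the result. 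The resulting $(\infty,n)$-category has $k$-morphisms equal to defect nets of codimension $k$. In the top direction $n$, we add one more level, which is a $1$-category: the objects are the sectors of the codimension-$n$ partial nets $\Disk^n_n\to\Alg(\Vect_\C)$, and the morphisms are sector homomorphisms. Composition of $n$-morphisms $H$ and $K$ is the relative tensor product $K\otimes_\cB H$, computed over the algebra $\cO_\cB$ of the intermediate codimension-$(n-1)$ defect. Its associativity and unitality follow from the standard properties of relative tensor products of bimodules. In particular, the vacuum sector acts as the unit because $\cB\otimes_\cB H\simeq H$.
\item Put the symmetric monoidal structure on the construction using pointwise $\otimes$ of algebras and sectors, which gives a $\Gamma$-object in complete $n$-fold Segal spaces. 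Then observe that the pointwise $\oplus$ gives a biproduct, and that $\underline 0$ is a zero object because every hom out of or into it is trivial.
\end{enumerate}

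I expect the main obstacle to be step 2 together with the identification of the top level. The Segal maps must be equivalences and not merely surjections. Without isotony this requires showing that the gluing of nets along a collar is homotopically unique. I would first try to prove this by showing that restriction to a good cover exhibits a defect net as a homotopy limit in $\Alg(\Vect_\C)$-valued copresheaves, in the spirit of the gluing arguments of \cite{gwilliam2018duals}, so that pasting two cells along a common face is an equivalence. For the top level, the delicate point is compatibility of $K\otimes_\cB H$ with the interchange laws coming from lower directions. I would handle this by realizing the sectors as modules over the locally constant algebra $\cO_\cF(H,(0,1)^n)$ and invoking the Morita $(\infty,n+1)$-category structure of Haugseng, where these interchange laws are already established.
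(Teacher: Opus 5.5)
Your step 2, and the ``regluing'' in the face maps of step 1, rest on a claim that is false for nets as defined: the commutation and generation axioms do \emph{not} determine a net from its restrictions to a cover. Generation only says that $\cF(K)$ is a quotient of the algebra freely generated by $\cF(I)$ and $\cF(J)$, modulo the relations forced by functoriality and commutation. Since isotony is deliberately not imposed, nothing fixes the kernel. For instance, take the net $I\mapsto \C[x_s : s\in I^\delta]$. On every disk containing both a site $s\in U\setminus V$ and a site $s'\in V\setminus U$, quotient by $x_sx_{s'}$, and use any sector on which $x_sx_{s'}$ acts by zero. The result still satisfies commutation and generation and has the same restrictions to $U$ and $V$, but it is not isomorphic to the original net. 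So the Segal maps are not even injective on components, and no homotopy-limit argument can prove otherwise from these axioms. (For copresheaves the relevant gluing is colimit-type, a relative tensor product, not a limit.) The paper does not prove this uniqueness; it \emph{imposes} it. It adds the collar excision axiom $\cF(M_1\cup_Y M_2)\simeq \cF(M_1)\otimes_{\cF(Y)}\cF(M_2)$ of \cref{def:collarexcision}, precisely because the Segal conditions require it (\cref{rem:netexcision}). Your construction needs this axiom, or an equivalent cosheaf-type condition, built into the definition of the cells.

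Once excision is in place, composites are relative tensor products, i.e.\ geometric realizations of bar constructions, which are sifted colimits in $\Alg(\Vect_\C)$. For the cells to glue coherently, and for pointwise $\otimes$ to give a symmetric monoidal structure (your step 4), you need $\Alg(\Vect_\C)$ to be $\otimes$-sifted-cocomplete. This is the hypothesis of \cref{thm:buildingAlg} that the paper verifies: the forgetful functor is monadic, so sifted colimits are computed on underlying vector spaces, and $\otimes$ preserves colimits in each variable. The paper then identifies $\Net_n$ with $\Alg_n(\Alg(\Vect_\C))$ and reads off the Segal conditions, completeness and the symmetric monoidal structure from \cite{gwilliam2018duals}. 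Your proposal never mentions this input.

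Two smaller points. First, with defects sitting on the $m_i+1$ marked hyperplanes, a level-$0$ cell already carries a codimension-one defect rather than being an object, and forgetting an inner hyperplane deletes a defect rather than composing. In the Scheimbauer and Gwilliam--Scheimbauer models the markings are cuts through regions where the algebra is locally constant, and composition comes from inverting the Segal map and applying the inner face map, not from collisions. Second, you end up invoking Haugseng's Morita category for the top-level interchange laws anyway. The economical route is therefore the paper's: impose collar excision, check sifted-cocompleteness, and cite \cref{thm:buildingAlg} for everything.
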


We will give the proof of this theorem in \S\ref{subsection:catnet}, after introducing some  background in the next couple of subsections.

\subsection{Compositions of Defect $n$-nets}\label{subsection:compositions}
We will work up to describing morphisms in $\Net_n$, starting off with $n=1$ and then generalizing to higher dimensions.
 1-morphisms in $\Net_1$ are defect $1$-nets of codimension-1 i.e. sectors, and composition of 1-morphisms is given by composition of sectors. Suppose that $H_\cD:\cA \to \cB$ is a defect 1-net of codimension-1 defined on a manifold $X_1$ and implements a morphism from a 1-net on the blue to  a 1-net on the green region in \cref{eq:compose1morph}, and $H_\cE:\cB \to \cC$ is a defect 1-net of codimension-1 defined on $X_2$ which implements a morphism from a 1-net defined on the green to a 1-net defined on the red region:
\begin{equation}\label{eq:compose1morph}
     \begin{tikzpicture}[x=4cm,y=1cm,baseline]
     \begin{scope}
  \coordinate (A) at (0-1,0);
  \coordinate (B) at (1-1,0);
  \coordinate (s) at (0.5-1,0);

  \draw[leftblue, line width=3pt, line cap=round] (A) -- (s);
  \draw[midgreen, line width=3pt, line cap=round] (s) -- (B);

  \fill (s) circle (1.8pt);

  \node[below=4pt] at (A) {$0$};
  \node[below=4pt] at (B) {$1$};
  \node[above=4pt] at (s) {$s_1$};
       \end{scope}
       \hspace{20mm}
            \begin{scope}
                 \coordinate (A) at (0,0);
  \coordinate (B) at (1,0);
  \coordinate (s) at (0.5,0);

  \draw[midgreen, line width=3pt, line cap=round] (A) -- (s);
  \draw[rightred, line width=3pt, line cap=round] (s) -- (B);

  \fill (s) circle (1.8pt);

  \node[below=4pt] at (A) {$0$};
  \node[below=4pt] at (B) {$1$};
  \node[above=4pt] at (s) {$s_2$};
            \end{scope}
\end{tikzpicture}
\end{equation}
   The composition is defined by joining the manifolds together along the green region to form the manifold $(X_1\backslash (s_1,1)) \cup Y \cup  (X_2\backslash (0,s_2))$, and rescaling the manifold to be back to the interval $(0,1)$ as follows:
    \begin{equation}
       \begin{tikzpicture}[x=9cm,y=1cm,baseline]
  \coordinate (A) at (0,0);
  \coordinate (B) at (1,0);
  \coordinate (sone) at (0.34,0);
  \coordinate (stwo) at (0.68,0);

  \draw[leftblue, line width=3pt, line cap=round] (A) -- (sone);
  \draw[midgreen, line width=3pt, line cap=round] (sone) -- (stwo);
  \draw[rightred, line width=3pt, line cap=round] (stwo) -- (B);


  \fill (sone) circle (1.8pt);
  \fill (stwo) circle (1.8pt);

  \node[below=4pt] at (A) {$0$};
  \node[below=4pt] at (B) {$1$\,.};
  \node[above=4pt] at (sone) {$s_1$};
  \node[above=4pt] at (stwo) {$s_2$};
\end{tikzpicture}
   \end{equation}
The sectors $H_\cD$ and $H_\cE$ compose to give another sector $H_\cF := H_\cE \boxtimes_{\cB} H_\cD: \cA \to \cC$, where by $\cB$ we mean the algebra that $\cB$ assigns to the common region.  The restrictions of $\cF$ to the blue and red regions give $\cA(I)$ and $\cC(J)$.
Thus the action of $\cF$ on $H_\cF$ is via $H_\cD$ and $H_{\cE}$ when $\cF$ is restricted to its source and target. 
There is a subtle difference between a pair of composable 1-morphisms, and their composition. In particular we require that $\cD$ and $\cE$ are equivalent on the intermediate region shaded in green, and hence there is a unique (up to coherent isomorphism) way to define the entire defect 1-net of codimension-1. We demand that all defect $n$-nets have the property that when composing them, they are equivalent on their common source/target. Defect $n$-nets that have this property satisfy a condition which we call \textit{collar excision}, over the glued underlying manifold. This condition is the net-theoretic formulation of $\otimes$-excision for factorization algebras, as defined in \cite[Definition 3.15]{ayala2015factorization}. Its significance lies in ensuring that the theory of nets is capable of encoding the principle of locality, and is an essential feature for typical physical theories and in particular quantum spin systems. 
\begin{defn}\label{def:collarexcision}
    Let $M$ be a manifold such that $M\cong M_1 \cup_Y M_2$, where $Y \subset M$ is a common submanifold shared by 
    two submanifolds $M_1$ and $M_2$ of $M$ whose union recovers all of $M$.  A defect $n$-net $\cF$ satisfies \textit{collar excision} if the morphism
    \begin{equation}
        \cF(M_1 \cup_Y M_2) \xrightarrow{\cong} \cF(M_1) \otimes_{\cF(Y)} \cF(M_2)
    \end{equation}
    is an equivalence in $\Alg(\Vect_\mathbb{C})$, for every choice of $M_1$, $M_2$, and $Y\subset M$ such that $ M\cong M_1 \cup_Y M_2$.
\end{defn}

Henceforth, we assume that defect $n$-nets of every codimension satisfy collar excision; equivalently, one may incorporate this condition as an axiom in the definition of defect $n$-nets when studying  physically relevant applications.

\begin{example}
    Consider the quantum spin system given in \cref{ex:spinsystem} with three adjacent regions $I,J,K$ as in \cref{eq:compose1morph}. The vacuum sector for the first defect 1-net is $ V_I \otimes V^*_J$
and the vacuum sector for the second is $V_J \otimes V^*_K$. Sewing these two defects over the intermediate region $J$ means taking the relative tensor product over the algebra $B(J)\cong \End(V_J)$, which gives 
\begin{equation}
    (V_I \otimes V^*_J) \otimes_{V_J \otimes V^*_J} (V_J \otimes V^*_K)\,.
\end{equation}
The contraction over the intermediate matrix algebra eliminates $V^*_J \otimes_{V_J \otimes V^*_J} V_J\cong \C$. The remainder is the vacuum sector for the combined cut $I \cup J$.
\end{example}
We summarize the discussion thus far in the following Lemma.
\begin{lem}\label{lem:sectorcompose}
   Let $H_\cD: \cA \to \cB$ and $H_\cE:\cB \to \cC$ be two sectors in  $\Net_n$. Then the composition is a well defined sector $H_{\cF}: \cA \to \cC$. Furthermore, if $H_\cD$ and $H_\cE$ are semisimple, then $H_\cF$ is semisimple.
\end{lem}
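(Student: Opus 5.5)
The plan is to split the statement into two parts: well-definedness of the composite sector, and preservation of semisimplicity. For well-definedness, I would take the glued stratified interval as in \cref{eq:compose1morph}. The composite partial net $\cF$ is defined on disks in the blue region by $\cA$, on disks in the red region by $\cC$, and on disks meeting the green collar by $\cB$. On disks straddling $s_1$ or $s_2$ it is defined by $\cD$ or $\cE$ respectively.

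Compatibility on overlaps is exactly the requirement, built into composability, that $\cD$ and $\cE$ restrict to equivalent nets on the common green region. Collar excision (\cref{def:collarexcision}) then says that the algebra on any disk containing both defect points is the relative tensor product of the algebras on the two halves over $\cB(Y)$. This pins down $\cF$ uniquely up to coherent isomorphism, and the locality axioms hold because they are local and are inherited from $\cD$ and $\cE$.

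Next I set $H_\cF := H_\cE \otimes_{\cB(Y)} H_\cD$. The $\cA$-action comes through the $H_\cD$ factor, the $\cC$-action through the $H_\cE$ factor, and the straddling algebras act through the excision isomorphism. The compatibility $\rho_J\circ\cF(I\subset J)=\rho_I$ follows from the corresponding identities for $H_\cD$ and $H_\cE$. The point to check is that these actions descend to the balanced tensor product. This holds because every action other than that of $\cB(Y)$ commutes with $\cB(Y)$: either the supports are disjoint, so the first locality axiom applies, or the algebra is a tensor factor under excision. I would also note that the result is independent of the choice of collar $Y$ up to canonical isomorphism, since shrinking $Y$ changes the tensor product only by the canonical identification $\cB(Y')\otimes_{\cB(Y')}{-}$.

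For semisimplicity, I would decompose $H_\cD=\bigoplus_i D_i$ and $H_\cE=\bigoplus_j E_j$ into finitely many irreducibles. The relative tensor product distributes over finite direct sums, so $H_\cF=\bigoplus_{i,j}E_j\otimes_{\cB} D_i$, and it suffices to show that each $E_j\otimes_\cB D_i$ is a finite direct sum of irreducible sectors.

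I expect this to be the main obstacle: the tensor product of irreducibles need not be finite or semisimple for arbitrary algebras. My first attempt would use the setting of interest, where $\cB(Y)$ is a finite-dimensional semisimple algebra, namely a matrix algebra as in \cref{ex:spinsystem}. Then $-\otimes_\cB-$ is exact and reduces to a finite sum over the simple $\cB(Y)$-modules $V_k$, giving $\bigoplus_k \mathrm{Hom}_\cB(V_k,E_j)\otimes \mathrm{Hom}_\cB(V_k^{\ast},D_i)$-type pieces, each finite-dimensional over the relevant endomorphism algebras. I would then apply Artin–Wedderburn to the algebra $\cO_\cF(H_\cF,(0,1))$ acting on $H_\cF$ to obtain a finite decomposition into irreducibles. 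If that fails in full generality, I would add, as a standing hypothesis, that the intermediate algebra is finite semisimple, which is the case relevant for $\QS_n$.
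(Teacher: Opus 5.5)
Your argument for well-definedness is the paper's own. The lemma only summarizes the preceding discussion, which glues along the green collar, requires $\cD$ and $\cE$ to agree there, imposes collar excision, and sets $H_\cF=H_\cE\boxtimes_\cB H_\cD$ with the actions coming through the two factors.

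Your extra descent check goes beyond the paper, but its justification is wrong for some disks. Consider a disk $U$ around $s_1$ or $s_2$ that overlaps the collar $Y$. Its algebra need not commute with $\cB(Y)$. In the spin-system example following \cref{def:collarexcision}, it contains the matrices at the sites of $U\cap Y$, and these indeed no longer act on the composite $V_I\otimes V_K^*$. So only disks avoiding $Y$ should generate in the third bullet of \cref{def:compositeF}.

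The genuine gap is semisimplicity. The paper offers no argument for it, so there is nothing to follow, and you are right to distrust it: with the paper's definitions the claim is false at the stated generality. Here is a counterexample.
\begin{itemize}
\item Take $n=1$, $\cA=\cC=\underline{\C}$ and $\cB=\underline{B}$ with $B=\C[t,x_1,x_2,\ldots]$. Since $B$ is commutative, the constant net satisfies the locality axioms and collar excision.
\item Let $\cD$ and $\cE$ assign $B/(t)$ and $B/J$ respectively to intervals meeting the defect point, where $J=(x_i^2-x_i-t)_{i\geq 1}$. Take the regular modules $H_\cD=B/(t)$ and $H_\cE=B/J$ as sectors.
\item Both sectors are irreducible in the paper's sense, because their endomorphism rings $B/(t)$ and $B/J$ have no nontrivial idempotents. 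The first is a domain. The second is a union of coordinate rings of connected curves, namely $2^{k-1}$ lines through the point $t=-\tfrac{1}{4}$, $x_i=\tfrac{1}{2}$.
\item However, $H_\cE\otimes_B H_\cD\cong\C[x_1,x_2,\ldots]/(x_i^2-x_i)$, the ring of locally constant functions on the Cantor set. The composite net acts through this ring. It has no primitive idempotents, hence no irreducible summands, so it is not a finite direct sum of irreducible sectors.
\end{itemize}

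Your fallback also aims at the wrong hypothesis. A matrix-algebra collar only reduces $E_j\otimes_\cB D_i$ to $E_j'\otimes_\C D_i'$, which is infinite-dimensional whenever the sectors are. So your claim that the pieces are finite-dimensional over the relevant endomorphism algebras is unjustified. Likewise, Artin--Wedderburn would need $\cO_\cF(H_\cF,(0,1))$ to be semisimple, and nothing in your argument provides that.

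What actually works is to assume $H_\cD$ and $H_\cE$ are finite-dimensional, as they are in \cref{ex:spinsystem}. Then $H_\cF$ is a quotient of $H_\cE\otimes_\C H_\cD$, so it is finite-dimensional. The paper's ``irreducible'' only means indecomposable, so every finite-dimensional sector is a finite direct sum of irreducible sectors by induction on dimension. This needs no condition on $\cB$ and no Artin--Wedderburn.
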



\begin{rem}
  We give a nod to constructable factorization algebras, and explain why composition is considerably more straightforward in that context. The key point is that a constructible factorization algebra already encodes its multi-input colimit structure as part of its defining data. Consequently, the pushforward along a stratified map is computed simply by evaluating the factorization algebra on the preimage. In contrast, a net records only the local algebras assigned to regions. To define composition, one must first assemble the local observables associated to the preimage into a single algebra by taking the algebra generated by those observables on the fused state space. Thus, whereas composition for factorization algebras is built into the factorization structure itself, composition for nets requires an additional reconstruction step at the level of local algebras.
\end{rem}

We now explain how composition works for defects $n$-nets in lower codimension.
Let $\cD: \cA \to \cB$ and $\cE: \cB \to \cC$  be defect $n$-nets of codimension-$k$. In other words, $k$-morphisms in the category $\Net_n$. We will now explain how to construct the composition $\cF = \cE \circ \cD:\cA \to \cC$ as another defect $n$-net of codimension-$k$. Suppose that $\cD$ and $\cE$ are defined on  two stratified manifold $X_1=(0,1)^n$ and $X_2=(0,1)^n$. We glue the manifolds together along the submanifolds $(\tfrac{1}{2},1)\times (0,1)^{n-1}\subset X_1$ and $(0,\tfrac{1}{2})\times (0,1)^{n-1}\subset X_2$, to form the manifold 
$$X = \left( X_1 \backslash (\tfrac{1}{2},1)\times (0,1)^{n-1} \right) \cup Y \cup \left(X_2 \backslash (0,\tfrac{1}{2})\times (0,1)^{n-1} \right)\,,$$ on which $\cF$ is defined. The region $Y$ plays the same role as in the example earlier, which is the common region where the two manifolds are joined.

\begin{figure}[ht]
    \centering
   \begin{tikzpicture}
\centering
 \fill[leftblue]
(0-3,0,0) -- (1-3,0,0) -- (1-3,2,0) -- (0-3,2,0) -- cycle;

\fill[leftblue]
(0-3,2,0) -- (1-3,2,0) -- (1-3,2,2) -- (0-3,2,2) -- cycle;

\fill[leftblue]
(0-3,0,0) -- (0-3,2,0) -- (0-3,2,2) -- (0-3,0,2) -- cycle;
\fill[leftblue]
(0-3,0,0) -- (1-3,0,0) -- (1-3,0,2) -- (0-3,0,2) -- cycle;

\fill[lightgreen] (-2,2,0) -- (-1,2,0) -- (-1,2,2) -- (-2,2,2) -- cycle;

\fill[lightgreen] (-1,0,0) -- (-1,2,0) -- (-1,2,2) -- (-1,0,2) -- cycle;
\fill[lightgreen] (-2,0,2) -- (-2,2,2)-- (-2,2,0)--(-2,0,0)--cycle;
\fill[lightgreen] (-2,2,0) -- (-1,2,0) -- (-1,0,2) -- (-2,0,2) -- cycle;
    \draw (0-3,0,0) -- (2-3,0,0) -- (2-3,2,0) -- (0-3,2,0) -- cycle;
    \draw (0-3,0,2) -- (2-3,0,2) -- (2-3,2,2) -- (0-3,2,2) -- cycle;
    \draw[line width=1.5pt] (-2,0,2) -- (-2,2,2)-- (-2,2,0)--(-2,0,0)--cycle;
    \draw[line width=1pt] (-2,2,1)-- (-2,0,1);
     \node at (-2,1,1) {$\bullet$};
    \draw (0-3,0,0) -- (0-3,0,2);
    \draw (2-3,0,0) -- (2-3,0,2);
    \draw (2-3,2,0) -- (2-3,2,2);
    \draw (0-3,2,0) -- (0-3,2,2);
    \hspace{1cm}
    \fill[lightgreen]
(0,0,0) -- (1,0,0) -- (1,2,0) -- (0,2,0) -- cycle;

\fill[lightgreen]
(0,2,0) -- (1,2,0) -- (1,2,2) -- (0,2,2) -- cycle;

\fill[lightgreen]
(0,0,0) -- (0,2,0) -- (0,2,2) -- (0,0,2) -- cycle;
\fill[lightgreen]
(0,0,0) -- (1,0,0) -- (1,0,2) -- (0,0,2) -- cycle;
\fill[rightred]
(0+1,2,0) -- (1+1,2,0) -- (1+1,2,2) -- (0+1,2,2) -- cycle;

\fill[rightred]
(0+2,0,0) -- (0+2,2,0) -- (0+2,2,2) -- (0+2,0,2) -- cycle;
\fill[rightred]
(0+1,0,0) -- (1+1,0,0) -- (1+1,0,2) -- (0+1,0,2) -- cycle;
\fill[rightred]
(0+1,0,2) -- (1+1,0,2) -- (1+1,2,2) -- (0+1,2,2) -- cycle;

     \draw (0,0,0) -- (2,0,0) -- (2,2,0) -- (0,2,0) -- cycle;
    \draw (0,0,2) -- (2,0,2) -- (2,2,2) -- (0,2,2) -- cycle;
     \draw[line width=1.5pt] (-2+3,0,2) -- (-2+3,2,2)-- (-2+3,2,0)--(-2+3,0,0)--cycle;
    \draw[line width=1pt] (-2+3,2,1)-- (-2+3,0,1);
     \node at (-2+3,1,1) {$\bullet$};
    \draw (0,0,0) -- (0,0,2);
    \draw (2,0,0) -- (2,0,2);
    \draw (2,2,0) -- (2,2,2);
    \draw (0,2,0) -- (0,2,2);
%
  \fill[leftblue]
(0-3+.5,0-3-1,0) -- (1-3+.5,0-3-1,0) -- (1-3+.5,2-3-1,0) -- (0-3+.5,2-3-1,0) -- cycle;

\fill[leftblue]
(0-3+.5,2-3-1,0) -- (1-3+.5,2-3-1,0) -- (1-3+.5,2-3-1,2) -- (0-3+.5,2-3-1,2) -- cycle;

\fill[leftblue]
(0-3+.5,0-3-1,0) -- (0-3+.5,2-3-1,0) -- (0-3+.5,2-3-1,2) -- (0-3+.5,0-3-1,2) -- cycle;
\fill[leftblue]
(0-3+.5,0-3-1,0) -- (1-3+.5,0-3-1,0) -- (1-3+.5,0-3-1,2) -- (0-3+.5,0-3-1,2) -- cycle;
   \fill[rightred]
(0-1.5,2-3-1,0) -- (1-1.5,2-3-1,0) -- (1-1.5,2-3-1,2) -- (0-1.5,2-3-1,2) -- cycle;

\fill[rightred]
(0+2-1-1.5,0-3-1,0) -- (0+2-1-1.5,2-3-1,0) -- (0+2-1-1.5,2-3-1,2) -- (0+2-1-1.5,0-3-1,2) -- cycle;
\fill[rightred]
(0+1-1-1.5,0-3-1,0) -- (1+1-1-1.5,0-3-1,0) -- (1+1-1-1.5,0-3-1,2) -- (0+1-1-1.5,0-3-1,2) -- cycle;
\fill[rightred]
(0+1-1-1.5,0-3-1,2) -- (1+1-1-1.5,0-3-1,2) -- (1+1-1-1.5,2-3-1,2) -- (0+1-1-1.5,2-3-1,2) -- cycle;

     \draw (0-1-1.5,0-3-1,0) -- (2-1-1.5,0-3-1,0) -- (2-1-1.5,2-3-1,0) -- (0-1-1.5,2-3-1,0) -- cycle;
    \draw (0-1-1.5,0-3-1,2) -- (2-1-1.5,0-3-1,2) -- (2-1-1.5,2-3-1,2) -- (0-1-1.5,2-3-1,2) -- cycle;
     \draw[line width=1.5pt] (-2+3-1-1.5,0-3-1,2) -- (-2+3-1-1.5,2-3-1,2)-- (-2+3-1-1.5,2-3-1,0)--(-2+3-1-1.5,0-3-1,0)--cycle;
    \draw[line width=1pt] (-2+3-1-1.5,2-3-1,1)-- (-2+3-1-1.5,0-3-1,1);
     \node at (-2+3-1-1.5,1-3-1,1) {$\bullet$};
      \node at (-2+3-1-1.5+.2,1-5.2,2.2) {$\Gamma$};
       \node[rotate=-90] at (-2+3-1-1.5+.6,-.5,2.2) {$\longmapsto$};
              \node at (-2+3-1-1.5+.9,-.5,2.2) {$\pi$};
    \draw (0-1-1.5,0-3-1,0) -- (0-1-1.5,0-3-1,2);
    \draw (2-1-1.5,0-3-1,0) -- (2-1-1.5,0-3-1,2);
    \draw (2-1-1.5,2-3-1,0) -- (2-1-1.5,2-3-1,2);
    \draw (0-1-1.5,2-3-1,0) -- (0-1-1.5,2-3-1,2);
   
\end{tikzpicture}
\caption{The two cubes give a three-dimensional representation of the manifolds $X_1=(0,1)^n$ and $X_2 = (0,1)^n$.
The green regions on the left and right represent the submanifolds  $(\tfrac{1}{2},1)\times (0,1)^{n-1}$ and $(0,\tfrac{1}{2})\times (0,1)^{n-1}$, respectively. It is also where the two manifolds are sewn together, and where we give an identification of the two nets in that region. The submanifold $\Gamma$ which forms from collapsing the green region is stratified by higher codimensional submanifolds; we only show strata given by a line and a point for simplicity.}
\label{fig:higherdimension}
\end{figure}

We define a map $\pi: X \to (0,1)^n$ that collapses the intermediate region $Y$ and, via a diffeomorphism of the manifold, identifies the resulting space with $(0,1)^n$ while preserving the stratification initially set on $X_1$ and $X_2$. To define $\cF$ we first need to define the composite vacuum sector, which is given by the relative tensor product, 
\begin{equation}
    H_\cF = H_\cE \boxtimes_\cB H_\cD\,.
\end{equation}

\begin{defn}\label{def:compositeF}
    The composite net $\cF: \cA \to \cC$ makes the following assignments for a disk region $I$ which intersects strata of codimension-$k$:
    \begin{itemize}
        \item if $I$ is entirely on the left of  the codimension-1 stratum $\Gamma$ in \cref{fig:higherdimension}, $\cF(I) = \cA(I)$\,,
        \item if $I$ is entirely on the right of  the codimension-1 stratum $\Gamma$, $\cF(I)=\ \cC(I)$\,,
        \item if $I$ intersects the codimension-1 stratum $\Gamma$, $\cF(I) = \mathcal{O}(H_\cF,\pi^{-1}(I))$\,.
    \end{itemize}
    More explicitly, in the final case, the associated algebra is generated by the algebras assigned to pre-collapsed regions  $J \subseteq \pi^{-1}(I)$.
\end{defn}

\begin{prop}\label{prop:composenets}
Let $\cD$ and $\cE$ be defect $n$-nets of codimension-$k$, with $\cF = \cE \circ \cD$ and sector $H_\cF$. The pair $(\cF,H_\cF)$ is a well defined defect $n$-net of codimension-$k$ in $\Net_n$.
\end{prop}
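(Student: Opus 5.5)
The plan is to verify directly the conditions defining a defect $n$-net of codimension-$k$ for the assignment of \cref{def:compositeF}, together with collar excision, which we have agreed to impose on all defect nets. First I check that $\cF$ is a functor $\Disk^n_k \to \Alg(\Vect_\C)$ and that $H_\cF$ is a sector for it. By \cref{lem:sectorcompose}, $H_\cF = H_\cE\boxtimes_\cB H_\cD$ is a well-defined sector, with commuting actions of the algebras of $\cD$ over regions in $X_1$ and of $\cE$ over regions in $X_2$. The action of the $\cB$-algebras over the collar $Y$ is identified on the two sides, because the relative tensor product is taken over $\cB$ and because $\cD$ and $\cE$ are assumed to agree on $Y$.

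For a disk $I$ meeting $\Gamma$, the algebra $\cO(H_\cF,\pi^{-1}(I))$ is by definition a subalgebra of $\End(H_\cF)$. An inclusion $I\subset J$ gives $\pi^{-1}(I)\subset\pi^{-1}(J)$, and hence an inclusion of generated subalgebras. For disks lying entirely to the left or right of $\Gamma$, the maps into $\cO(H_\cF,\pi^{-1}(J))$ are the representations $\cA(I)\to\End(H_\cD)\to\End(H_\cF)$, and similarly for $\cC$. Functoriality and the sector compatibility $\rho_J\circ\cF(I\subset J)=\rho_I$ then follow immediately, since every map is an inclusion of subalgebras of the single algebra $\End(H_\cF)$.

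Next come the two locality axioms. For \emph{commutativity}, take disjoint $I,J\subset K$. Their preimages $\pi^{-1}(I)$ and $\pi^{-1}(J)$ are disjoint, because $\pi$ is a homeomorphism away from the collapsed collar and disjointness of $I,J$ forces separation across $\Gamma$. Each generator of $\cF(I)$ is the image of $\cD(I')$ or $\cE(I')$ for some $I'\subset\pi^{-1}(I)$, and likewise for $J$. There are then three cases. If both generators come from $\cD$, they commute by the axiom for $\cD$, and similarly if both come from $\cE$. In the mixed case, the generators act on different tensor factors of $H_\cE\boxtimes_\cB H_\cD$, so they commute. The one exception is generators supported in $Y$, where the $\cB$-actions from the two sides coincide; there we fall back on commutativity for $\cB$ itself.

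For \emph{generation}, suppose the interiors of $I,J$ cover $K$. Then the preimages cover $\pi^{-1}(K)$, so every $L\subset\pi^{-1}(K)$ can be subdivided into pieces lying in $\pi^{-1}(I)$ or $\pi^{-1}(J)$. Applying the generation axiom for $\cD$ and $\cE$ (iterated finitely often by compactness) shows that $\cF(K)$ is generated by $\cF(I)$ and $\cF(J)$.

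Finally I check collar excision for $\cF$. Given a decomposition $M_1\cup_{Y'}M_2$ of the glued manifold, I refine it so that each piece lies inside $X_1$ or $X_2$ or straddles $Y$. Collar excision for $\cD$, $\cE$, and $\cB$ then gives the isomorphism
\[
\cF(M_1\cup_{Y'}M_2)\cong \cF(M_1)\otimes_{\cF(Y')}\cF(M_2),
\]
using the associativity of relative tensor products. The independence of this construction from the chosen diffeomorphism $\pi$ (up to coherent isomorphism) follows because stratified isotopies induce isomorphic generated algebras.

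I expect the main obstacle to be the generation and excision steps in the straddling case. There, the generated algebra $\cO(H_\cF,\pi^{-1}(I))$ must be compared with $\cD(\ldots)\otimes_{\cB(Y)}\cE(\ldots)$, and this requires that the map from the relative tensor product into $\End(H_\cF)$ be injective. My first attempt would be to establish this injectivity from faithfulness of the vacuum sectors, which are regular bimodules, on the collar.
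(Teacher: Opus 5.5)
\textbf{Gap: the vacuum property of the composite sector is never proved.} Your checks of functoriality, commutativity and generation spell out what the paper simply calls clear. But you never address the one thing the paper's proof actually argues: that $H_\cF=H_\cE\boxtimes_\cB H_\cD$ is the \emph{vacuum} sector of $\cF$, i.e.\ isomorphic to $\cF$ as the regular $\cF\text{-}\cF^{\op}$-bimodule.

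What you establish is only that $H_\cF$ carries compatible actions of the algebras $\cF(I)$, so that it is \emph{some} sector. As the paper stresses, sewing equips $H_\cF$ with an action of $\cF$ but not with an identification with $\cF$. The paper treats this identification as what ``well defined'' requires here, since the identity $n$-morphisms of $\Net_n$ are vacuum sectors. The paper proves it by a Yoneda argument. For an $\cF\text{-}\cF^{\op}$-bimodule $V$, set $Z(V,\cF)=\{x\in V\mid ax=xa \text{ for all } a\in\cF\}$. It then gives natural isomorphisms $\hom_{\cF\text{-}\cF^{\op}}(\cF,V)\cong Z(V,\cF)$, via $f\mapsto f(1_\cF)$, and $\hom_{\cF\text{-}\cF^{\op}}(H_\cF,V)\cong Z(V,\cF)$, via $g\mapsto g(\phi(1_\cF))$, where $\phi\colon\cF\to H_\cF$ records the action. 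It concludes that $\phi$ is an isomorphism of bimodules.

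\textbf{Consequence for your final step.} This omission is also why your last step stays open. The straddling comparison you flag, between $\cO(H_\cF,\pi^{-1}(I))$ and $\cD(\ldots)\otimes_{\cB(Y)}\cE(\ldots)$, is not settled by appealing to faithfulness of $H_\cD$ and $H_\cE$ separately. Faithfulness does not in general survive a relative tensor product; the product can even be zero. What relates the glued sector to the generated algebra is precisely the vacuum property. Regularity of $H_\cD$ and $H_\cE$ gives $H_\cF\cong\cE\otimes_\cB\cD$, and $H_\cF\cong\cF$ then identifies this with the generated algebra itself. Meanwhile, the paper imposes collar excision as a standing assumption on all defect nets rather than proving it here. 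So your proposal puts its effort into the parts the paper treats as routine and omits the step that carries the content.
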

\begin{proof}
   It is clear that $\cF$ has the properties of a defect $n$-net of codimension-$k$.  While the sewing construction gives $H_\cF$ an action by $\cF$, it does not automatically give an identification with $\cF$. We hence need to show that as a $\cF\text{-}\cF^{\op}$-bimodule $H_\cF$ is the regular one, in order for $H_\cF$ to satisfy the vacuum property.\footnote{In this notation, we mean that it is a bimodule for the algebra which $\cF$ assigns to the source region (the left region in the diagrams) and the (opposite) algebra which $\cF$ assigns to the target region (the right region in the diagrams).} Let $V$ be a $\cF\text{-}\cF^{\op}$-bimodule, and define the set 
   $$Z(V,\cF) = \{ x \in V|\, ax = xa, \quad \forall a\in \cF\}\,.$$
   There is a isomorphism, natural in $V$, $\hom_{\cF\text{-}\cF^{\op}}(\cF,V)\cong Z(V,\cF)$, where each $f:\cF \to V$ is determined by an element $f(1_\cF)=x$, with the property that 
   \begin{equation}
       a f(1_\cF) =  f(a) = f(a \cdot 1_\cF)= a x 
   \end{equation}
   and 
   \begin{equation}
       f(a) = f(1_\cF \cdot a) = f(1_\cF) a = x a\,.
   \end{equation}
   Hence, $f(1_\cF)$ satisfies the property of an element in $Z(V,\cF)$.  Conversely, if $x\in Z(V,\cF)$ then define $f_x:\cF \to V$, with $f_x(a) = a x = x a$.
   We now relate $\cF$ and $H_\cF$, via the map  $\phi: \cF \to H_\cF$ that records the algebra action on $H_\cF$. For any (algebra) action of $\cF\text{-}\cF^\op$ on $\cF$, we use $\phi$ to transport that action to $H_\cF$, which can be done by \cref{def:compositeF}, third bullet. For $g \in \hom_{\cF\text{-}\cF^{\op}}(H_\cF,V)$ we consider $x=g(\phi(1_\cF))$, which is central with respect to every $a \in \cF$ by the relation
   \begin{equation}
       a g(\phi (1_\cF)) = g(a \phi (1_\cF)) = g(\phi (a)) =  g( \phi (1_\cF)a) =  g( \phi(1_\cF))a\,.
   \end{equation}
   We thus have an isomorphism, natural in $V$,  $\hom_{\cF \text{-}\cF^{\op}}(H_\cF,V)\cong Z(V,\cF)$. By using the fact that $$ \hom_{\cF \text{-}\cF^{\op}}(\cF,V)\cong Z(V,\cF)\cong \hom_{\cF \text{-}\cF^{\op}}(H_\cF,V),$$ along with the Yoneda lemma, we find that $\phi$ induces an isomorphism of bimodules and hence we establish the vacuum property of $H_\cF$. 
\end{proof}

\subsection{Complete $n$-fold Segal Spaces}\label{subsubsection:segal}
Although it is not always necessary to work with a model of $(\infty,n)$-categories, adopting a model can be helpful for proving some technical theorems. For the subject of topological $n$-nets and for defining $\Net_n$, we will use complete $n$-fold Segal spaces for our model, following \cite{lurie2008classification}. We start by reviewing simplicial spaces, which can be used to capture the idea of a $(\infty,1)$-category when a certain ``Segal condition'' as well as completeness is satisfied. We then step up to $n$-fold simplicial spaces that satisfy an $n$-fold Segal condition to model $(\infty,n)$-categories.  The necessity of this formalism is to capture relations that hold up to specified equivalence, and also the hierarchy of higher coherences governing those equivalences.

\begin{defn}
    Let $\Delta$ be the category of \textit{combinatorial simplices},  with objects the linearly ordered sets $[n] = \{ 0 < 1 < \ldots < n\}$. Given a pair of integers $m,n\geq 0$, $\hom_{\Delta}([m],[n])$ is given by the set of nonstrictly increasing maps $f:{0 <1 < \ldots <m}\to {0 <1 < \ldots <n}$.
\end{defn}

A  simplicial object in a $(\infty,1)$-category $\cC$ is a functor $X_\bullet: \Delta^\op  \to \cC$. A \textit{simplicial space} is a functor $X_\bullet: \Delta^\op  \to \mathbf{Spaces}$, and $X_n$ is the value of $X_\bullet$ when evaluated on the object $[n]$. The category $\mathbf{Spaces}$ has objects which are spaces, and by this we mean Kan simplicial sets.

\begin{defn}
    A simplicial space $X_\bullet$ is a (1-fold) \textit{Segal space} if for every pair of integers $m,n\geq 0$, the diagram
    \begin{equation}\label{eq:segalcondition}
        \begin{tikzcd}
            X_{m+n} \arrow[r] \arrow[d] & X_m \arrow[d]\\
            X_n \arrow[r]& X_0
        \end{tikzcd}
    \end{equation}
    is a homotopy pullback square. 
\end{defn}

The fact that the square in \cref{eq:segalcondition} is a pullback imposes the so-called \textit{Segal condition}, which is essential for modeling a $(\infty,1)$-category via a simplicial space. Specifically, for a composition of any length $k\geq 1$, one can make it equivalent to a sequence of $k$ composable 1-morphisms 
\begin{equation}\label{eq:lengthk}
    X_k \xrightarrow{\cong} X_1 \overset{h}{\times}_{X_0} X_1 \overset{h}{\times}_{X_0} \ldots \overset{h}{\times}_{X_0} X_1\,,
\end{equation}
where  the target of a 1-morphism is equivalent to the source of the next 1-morphism. One should view $X_0$ as the space of objects in the category $X$.

From a Segal space $X_\bullet$ we extract what should be thought of as an $(\infty,1)$-category $\cC$ in the following manner. 
\begin{itemize}
    \item The objects of $\cC$ are the points of the topological space $X_0$.
    \item The mapping space $\hom_\cC(x,y)$ between two points $x,y \in X_0$ is defined to be the homotopy fiber product $\{x\} \overset{h}{\times}_{X_0} X_1 \overset{h}{\times}_{X_0} \{y\}$.
    \item The composition law $\hom_\cC(x,y) \times \hom_\cC(y,z) \to \hom_\cC(x,z)$ for three points $x,y,z \in X_0$ is given by 
    \begin{align}\label{eq:compose}
        (\{x\} \overset{h}{\times}_{X_0} X_1 \overset{h}{\times}_{X_0} \{y\})\times (\{y\} \overset{h}{\times}_{X_0} X_1 \overset{h}{\times}_{X_0} \{z\}) &\to \{x\} \overset{h}{\times}_{X_0} X_1 \overset{h}{\times}_{X_0} X_1 \overset{h}{\times}_{X_0} \{z\} \\\notag 
        & \simeq  \{x\} \overset{h}{\times}_{X_0} X_2 \overset{h}{\times}_{X_0} \{z\}\,.
    \end{align}
    In the second step, we are using the Segal space assumption on $X_\bullet$ in order to obtain the homotopy equivalence.
    \item The data of the simplicial space $X_\bullet$ along with the Segal condition for other values of $n$ in \cref{eq:segalcondition} ensure that the  composition law in \cref{eq:compose} is associative up to higher coherent homotopy.
\end{itemize}

One can also attempt to extract a simplicial space from an $(\infty,1)$-category $\cC$. However, only enforcing the Segal condition on the simplicial space does not reflect the correct space of objects up to isomorphism in  $\cC$.
We will thus introduce the idea of completeness for Segal spaces, so that complete Segal spaces are in exact correspondence with $(\infty,1)$-categories. To this end, it is important to emphasize the fundamental shift from individual morphisms between objects to spaces of morphisms, organized by connected components.

\begin{defn}
    Let $X = X_\bullet$ be a Segal space. 
    The \textit{homotopy category of $X$} is denoted $h_1(X)$, with objects given by the set of objects in the space $X_0$, with the set of morphisms between two objects $x,y$ given by 
    \begin{equation}
        \hom_{h_1(X)}(x,y) = \pi_0(\hom_X(x,y)) = \pi_0\left( \{x\} \overset{h}{\times}_{X_0} X_1 \overset{h}{\times}_{X_0} \{y\}\right),
    \end{equation}
    and composition of morphisms is obtained by applying $\pi_0$ to the maps in \cref{eq:compose}.
\end{defn}

\begin{defn}
 Let $X = X_\bullet$ be a Segal space and $X^{\mathrm{eq}}_1 \subset X_1$, the subspace of those 1-morphisms that are invertible in $h_1(X)$.
    Then $X$ is \textit{complete} if there is an equivalence $s:X_0 \xrightarrow{\cong} X^{\mathrm{eq}}_1$, induced by the
degeneracy map $s$.\footnote{The degeneracy map is induced by the unique non-decreasing functor $\{0,1\}\to \{0\}$.}
\end{defn}

\begin{rem}
   A Segal space has the correct mapping spaces and compositions one would like for an $(\infty,1)$-category. However, one subtlety to recall is that a  $(\infty,1)$-category does not have a set of objects but  a space of objects and equivalences.
     Completeness ensures that if $x,y \in X_0$ are equivalent objects (reflected via isomorphisms in the category $X$), then the equivalence between them is already reflected as paths in the space of objects. 
\end{rem}

We will take the definition of a $(\infty,1)$-category to be a Segal space that is also complete. We now proceed to describe higher categories, via $n$-fold Segal spaces.
\begin{defn}
    An \textit{$n$-fold simplicial space} is a functor 
    \begin{equation}
        \Delta^\op \times \ldots \times \Delta^\op \to \mathbf{Spaces}
    \end{equation}
    where the product has $n$-factors. We denote by $X_{k_1,\ldots,k_n}$ the value of such a functor on $[k_1],\ldots,[k_n]$. 
\end{defn}

  A map of simplicial spaces $X\to Y$ is a weak homotopy equivalence if the induced map $X_{k_1,\ldots,k_n} \to Y_{k_1,\ldots,k_n}$ is a homotopy equivalence for every sequence of integers $k_1,\ldots,k_n \geq 0$.

\begin{defn}
    A \textit{complete $n$-uple Segal space} is a $n$-fold simplicial space such that for each $1\leq i \leq n$, and every $k_1,\ldots,k_{i-1},k_{i+1},\ldots,k_n\geq 0$, 
    \begin{equation}
        X_{k_1,\ldots, k_{i-1},\bullet,k_{i+1},\ldots,k_n}
    \end{equation}
    is a complete Segal space.
\end{defn}

The essence of this definition is that a complete $n$-uple Segal spaces are determined by complete Segal spaces in each of the $n$ possible directions.

\begin{defn}
    A  $n$-fold simplicial space $X$ is \textit{essentially constant} if there exists a weak homotopy equivalence of $n$-fold simplicial spaces $X'\to X$, where $X'$ is a constant functor.\footnote{Thought of as a functor, $X$ is constant if and only if for every sequence $k_1,\ldots,k_n \geq 0$, the canonical map $X_{0,\ldots,0}\to X_{ k_1,\ldots,k_n}$ is a weak homotopy equivalence. One should view $X_{0,\ldots,0}$ as the constant $n$-fold simplicial space, and in this case $X$ is weakly equivalent to the constant $n$-fold simplicial space associated to  $X_{0,\ldots,0}$.}  
\end{defn}

Given $X$ is a $n$-fold simplicial space, one can regard $X_{\vec{\bullet}}$ as a constant object of the category of $(n+1)$-fold simplicial spaces. This is done by declaring $X_{k_1,\ldots,k_n,k_{n+1}} = X_{k_1,\ldots,k_n}$. It is thus convenient to write $X_{k_1,\ldots,k_i}$ for $X_{k_1,\ldots,k_i,0,\ldots,0}$.
Furthermore, the essentially constant condition allows us to do is assume that $k_1,k_2,\ldots,k_i$ is a sequence of strictly positive integers, with the convention that the empty sequence is allowed, so that $X_{\emptyset} = X_0 = X_{0,\ldots,0}$ is the space of objects. 

\begin{defn}
    An \textit{$n$-fold Segal space} is a complete $n$-uple Segal space $X$ such that for every $1 \leq i \leq n-1$, $X_{k_1,\ldots, k_{i-1},0,\bullet,\ldots,\bullet}$ is essentially constant, i.e. 
    \begin{equation}
      X_{k_1,\ldots, k_{i-1},0,0,\ldots,0}  \rightarrow X_{k_1,\ldots, k_{i-1},0,k_{i+1},\ldots,k_{n}}
    \end{equation}
    is an equivalence. 
\end{defn}

Without the essentially constant condition, the complete $n$-uple Segal spaces behave like higher versions of double categories, which are categories with two notions of 1-morphisms: one being vertical and the other horizontal.  Thus this condition is used to cut down the category to only having a single type of morphism.
Analogous to how we defined $(\infty,1)$-categories, we can reasonably model $(\infty,n)$-categories as $n$-fold Segal spaces that are complete, and we will take this as the working definition in this paper.

\subsection{Constructing the Category $\Net_n$}\label{subsection:catnet}
We have defined the objects of $\Net_n$ as topological $n$-nets and composition of $k$-morphisms for $k\leq n$ as defect $n$-nets on a stratified manifold. When $k=n+1$, the morphisms are given by linear maps of sectors. 
We assemble all of the morphisms into the gadget $\Net_n$, which we show is a complete $(n+1)$-fold Segal space, giving it the properties of an $(\infty,n+1)$-category.

We recall that the topological $n$-nets and defect $n$-nets all take values in the $(\infty,1)$-category $\Alg(\Vect_\mathbb{C})$. To streamline the arguments that follow, we regard a topological $n$-net through the algebraic data it assigns to regions, and also for defect $n$-nets of all codimensions. Beginning with defect $n$-nets of codimension-1 we recover the algebras associated with topological $n$-nets, as well as maps between topological $n$-nets, and a composition structure of these maps along one dimension. Furthermore, higher-codimensional defect $n$-nets give rise to maps between lower-codimensional defect $n$-nets, with an additional direction of composition added for each level of codimension, as shown in \cref{prop:composenets}. Taken together, these observations indicate that a topological $n$-net should be thought of as algebras with $n$ compatible directions of composition. From this perspective, it is natural to view $\Net_n$ as a category that models the Morita category $\Alg_n(\Alg(\Vect_{\mathbb{C}}))$. Here, the additional $+1$ denotes the inherent 1-categorical structure of $\Alg(\Vect_{\mathbb{C}})$. Thus, to establish our theorem, we need to show that  $\Alg_n(\Alg(\Vect_{\mathbb{C}}))$ has the structure of a $(\infty,n+1)$-category.
We now introduce the right definitions and background to make this precise. 

\begin{defn}\label{def:Algn}
    The Morita category of $E_n$-algebras valued in  a symmetric monoidal $(\infty,N)$-category $\mathcal{C}$, denoted $\Alg_n(\cC)$,  is a $(\infty,n+N)$-category with objects $E_n$-algebras, 1-morphisms are bimodules of $E_{n-1}$-algebras, 2-morphisms are bimodules of bimodules of $E_{n-2}$ algebras, etc, and the morphism at level $n < k \leq n+N$ are given by morphisms in $\cC$.
\end{defn}

\begin{rem}
    In the usual Morita $(\infty,n+1)$-category where the objects are  $E_n$-algebras, the 1-morphisms are given by bimodules of $E_{n-1}$ algebras, etc, and finally a $(n+1)$-morphism is a bimodule map.  The defect $n$-nets of different codimension in $\Net_n$ also inherit the structure of bimodules and higher bimodules when viewed through the underlying algebras they assign. This is consistent with the perspective advocated in \S\ref{subsection:justification}, where the net-based approach was motivated by its ability to capture the properties of QCA through bimodule structures.
\end{rem}

We follow the line of reasoning in \cite[Section 3.3]{gwilliam2018duals} to construct $\Net_n$ using complete $n$-fold Segal spaces introduced in \S\ref{subsubsection:segal}. We start with some general considerations to build up the necessary language. From a general $(\infty,N)$-category $\cC$, one can extract diagrams of $(\infty,1)$-categories, which are the $k$-morphisms as in \cite{JFS}. This is done through considering strict higher $k$-categories $\Theta^{\vec k}$ acting as probe objects, with the property  that $[\Theta^{\vec k},\cC]$ extracts the space of $k$-morphisms of $\cC$; here $[A,B]$ denotes the mapping space from $A$ to $B$.

\begin{defn}
    An \textit{$n$-computad} $\Theta$ is a presentation of a strict $n$-category freely generated by sets of $k$-morphisms for $0\leq k \leq n$.\footnote{A strict $n$-category can be understood from its nerve, which is a $n$-fold simplicial set. The nerve of a strict $n$-category satisfies the Segal condition in each direction, and essentially constancy conditions. Thus the nerves of strict $n$-categories are $n$-fold Segal spaces. They are, however, rarely complete. If completeness is also a feature of the nerve then it is the case that all isomorphisms of the strict $n$-category are identities.}
    
    Associated to $\Theta$, the $(n-1)$-computad $\partial \Theta$ is the $(n-1)$-skeleton of $\Theta$. If $1\leq n$ then for each generating $n$-morphism there is a pair of parallel $(n-1)$-morphisms, called its \textit{source} and \textit{target}, in the strict $(n-1)$-category generated by $\partial \Theta$. For $k\leq n$, the $k$-skeleton of an $n$-computad $\Theta$ is the $k$-computad $\partial^{n-k} \Theta$.
\end{defn}

 One can consider $\Theta$ to be the full subcategory of strict $n$-categories on those $n$-categories that are freely generated by pasting diagrams of $n$-globes, which was proposed by Joyal \cite{Joyal1997Disks}.

\begin{defn}
    Consider the vector $\vec k = (\underset{k\text{-times}}{\underbrace{1,\ldots,1}},0,\ldots,0)$.
    The \textit{walking $k$-morphism} $\Theta^{\vec k}$ is the $k$-computad $\Theta^{\vec k}$ with $(k-1)$-skeleton given by $\Theta^{\vec{(k-1)}} \cup _{\partial \Theta^{\vec{(k-1)}}}\Theta^{\vec{(k-1)}}$, and a unique generating $k$-morphism with source  the unique generating $(k-1)$-morphism in the first $\Theta^{\vec{(k-1)}}$ and target in the unique generating $(k-1)$-morphism in the second $\Theta^{\vec{(k-1)}}$.
\end{defn}

\begin{example}
    The walking $0$-morphism $\Theta^{\vec{0}}$ is just the one element set $\{ \bullet \}$, while the walking $1$-morphism $\Theta^{\vec{1}}$ is the diagram $\{ \bullet \rightarrow \bullet \}$.
\end{example}
   
For $\cC$ a $(\infty,N)$-category we obtain a diagram of $(\infty,1)$-categories that extracts the $k$-morphisms by considering
\begin{equation}\label{eq:strongmorph}
 \Delta^N \ni \vec{k} \mapsto  \cC^\square_{\vec k} = \tau_{(\infty,1)}[\Theta^{\vec k},\cC]\,.
\end{equation}

\begin{rem}
 Given a $(\infty,n)$-category $\cC$ one can consider the category of vertical and horizontal morphisms between objects in $\cC$. A morphism between vertical and horizontal morphisms is called a lax and oplax natural transformation respectively. 
   One could also just consider strong natural transformations. In our consideration, we will always be considering strong natural transformations. The usage of the $\square$ notation in \cref{eq:strongmorph} reflects the fact that in \cite{JFS}, the authors studied these three types of natural transformations separately. The differences will not be important in this paper.
\end{rem}

Before stating one of the important structural results that follows, we introduce some terminology. 

\begin{defn}
    A category $\mathcal I$ is \textit{sifted} colimits indexed by $\mathcal I$ commute with finite products in $\mathbf{Set}$.  A \textit{sifted colimit} is a colimit of a diagram $F:\mathcal I \to \cC$ where $\mathcal I$ is a sifted category.
\end{defn}

\begin{defn}
    A category $\cC$ is \textit{$\otimes$-sifted-cocomplete} if it admits all sifted colimits, and for every object $X\in \cC$ tensoring on the left or on the right by $X$ preserves sifted colimits. A functor between $\otimes$-sifted-cocomplete categories is \textit{$\otimes$-sifted-cocontinuous} if it is symmetric monoidal and preserve sifted colimits. 
\end{defn}

This is the monoidal analogue of requiring that tensor product is compatible with the allowed colimits one can take. 

\begin{example}
    If a category has all (small) colimits then it  has all sifted colimits, so it is sifted cocomplete e.g. the category of sets, topological spaces, and simplicial sets. The category of nonempty sets is sifted-cocomplete but not cocomplete. In particular, the empty set is the initial element of the category of sets, and removing it spoils cocomplete since there is no colimit for the empty diagram.
\end{example}

\begin{defn}\label{def:sifted}
    Let $\cC$ be a symmetric monoidal $(\infty,N)$-category. The diagram $\cC^\square$ is \textit{$\otimes$-sifted-cocomplete} if it is a $N$-fold simplicial diagram of $\otimes$-sifted cocomplete categories and $\otimes$-sifted-cocontinuous functors.
\end{defn}

\begin{thm}[{\cite[Theorem 3.16]{gwilliam2018duals}}]\label{thm:buildingAlg}
    Let $\cC$ be a symmetric monoidal $(\infty,N)$-category such that $\cC^\square$ is $\otimes$-sifted-cocomplete. Then:
    \begin{enumerate}
        \item The $(n+N)$-tuple simplicial space $\Alg_n(\cC^\square_{\vec \bullet})_{\vec \bullet}$ satisfies the Segal condition and is complete in each variable separately.
        \item The symmetric monoidal structure on $\cC$ determines a symmetric monoidal structure on $\Alg_n(\cC)$.
    \end{enumerate}
\end{thm}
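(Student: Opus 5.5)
The plan is to follow the strategy of \cite[Section 3.3]{gwilliam2018duals}, which in turn builds on Haugseng's construction \cite{haugseng2017higher} and Scheimbauer's factorization-algebra model \cite{Scheimbauer:2014zty}. The first step is to define the $(n+N)$-tuple simplicial space explicitly. For each $\vec m=(m_1,\ldots,m_n)$ and each $\vec k\in\Delta^N$, let $\Alg_n(\cC^\square_{\vec k})_{\vec m}$ be the space (the maximal $\infty$-groupoid) of $E_n$-algebras in the $(\infty,1)$-category $\cC^\square_{\vec k}$ with $\vec m$-fold stratified boundary data. Concretely, these are locally constant factorization algebras on $(0,1)^n$, constructible with respect to the product stratification given by the $m_i$ marked hyperplanes in the $i$-th coordinate, satisfying $\otimes$-excision. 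Functoriality in $\vec m$ comes from collapse maps of the intervals (faces) and from inserting trivial strata (degeneracies). Functoriality in $\vec k$ comes from the simplicial structure of $\cC^\square$, using that its structure functors are symmetric monoidal and therefore send algebras to algebras.

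The second step is to verify the Segal condition in each $\vec m$-direction. I would show that restriction to the two halves,
\[
\Alg_n(\cC^\square_{\vec k})_{m+m'} \to \Alg_n(\cC^\square_{\vec k})_m \times^h_{\Alg_n(\cC^\square_{\vec k})_0} \Alg_n(\cC^\square_{\vec k})_{m'},
\]
is an equivalence. Its inverse is built by gluing along a collar, with the value on the glued region computed as a relative tensor product, that is, a two-sided bar construction. The bar construction is a geometric realization, hence a sifted colimit, so it exists by the $\otimes$-sifted-cocompleteness hypothesis of \cref{def:sifted}. Because tensoring preserves sifted colimits, the result is again a factorization algebra satisfying excision, and coherence follows from the universal property of the realization.

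For the $\vec k$-directions, the Segal condition and completeness are inherited. The functor $\Alg_n(-)_{\vec m}$ preserves limits of $(\infty,1)$-categories, so it sends the Segal squares of $\cC^\square$ to Segal squares, and by the same argument it preserves completeness. Completeness in the $\vec m$-directions is the step I expect to be the main obstacle. Here I would identify $\Alg_n(\cC^\square_{\vec k})^{\mathrm{eq}}_1$ with the space of invertible bimodules, and then use the fact that an invertible bimodule between $A$ and $B$, glued with its inverse, trivializes the stratification. Combined with a localization (completion) argument in the style of Haugseng, this identifies invertible $1$-simplices with paths in the space of objects. If strict completeness fails, I would instead pass to the Rezk completion, which does not change the underlying $(\infty,n+N)$-category.

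Finally, for part (2), the product $\cC\times\cC\to\cC$ is symmetric monoidal and sifted-cocontinuous in each variable. Pointwise tensoring of factorization algebras is therefore compatible with all face maps, degeneracies and gluings, which gives a map of $(n+N)$-tuple Segal spaces. Iterating this with the usual $\Gamma$-space or $E_\infty$ bookkeeping makes $\Alg_n(\cC)$ a commutative monoid in complete Segal spaces, that is, symmetric monoidal.
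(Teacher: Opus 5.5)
The paper gives no proof of this statement. It quotes it from \cite[Theorem 3.16]{gwilliam2018duals}, which combines Scheimbauer's factorization-algebra model \cite{Scheimbauer:2014zty} for $(\infty,1)$-targets with the levelwise application to the diagram $\cC^\square$ from \cite{JFS}. Your outline has the same architecture, and the following parts are fine in outline:
\begin{itemize}
\item the Segal condition, by gluing along a collar, with $\otimes$-excision and the sifted-cocompleteness hypothesis ensuring the glued object is again a factorization algebra;
\item inheritance of the Segal and completeness conditions in the $\cC^\square$-directions, since both are finite-limit conditions on the levels and $\Alg_n(-)_{\vec m}$ preserves those;
\item the symmetric monoidal structure via a $\Gamma$-object.
\end{itemize}
One correction: your objects satisfy $\otimes$-excision, so the structure functors of $\cC^\square$ must preserve sifted colimits, not merely be symmetric monoidal, to carry factorization algebras to factorization algebras. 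This is exactly the $\otimes$-sifted-cocontinuity required in \cref{def:sifted}.

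The genuine gap is completeness in the $\vec m$ (Morita) directions. The problem is not that it is hard: for the simplicial space you define, it is false.
\begin{itemize}
\item Take $n=N=1$, $\cC=\Vect_\C$ and $\vec k=0$. The bimodule ${}_{\C}(\C^2)_{\mathrm{Mat}_2(\C)}$ is invertible, so it is a point of $X_1^{\mathrm{eq}}$.
\item Its source and target are non-isomorphic algebras, so they lie in different components of $X_0$, the maximal groupoid of algebras.
\item The degeneracy $s_0$ is a common section of both face maps. If $s_0\colon X_0\to X_1^{\mathrm{eq}}$ were an equivalence, this would force $d_0\simeq d_1$ on $X_1^{\mathrm{eq}}$, which the example rules out.
\end{itemize}
Gluing $M$ to $M^{-1}$ does trivialize the defect up to equivalence of bimodules. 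But it never produces an algebra equivalence $A\simeq B$, i.e.\ a path in $X_0$, so the argument you give first cannot be completed.

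Your fallback, Rezk completion in the Morita directions, is the right remedy, and it is how the higher Morita category is usually handled (cf.\ \cite{haugseng2017higher}). It is a Dwyer--Kan equivalence, so homotopy categories and dualizability are unchanged. You should adopt it from the start and state what you actually prove:
\begin{itemize}
\item the Segal condition in every variable;
\item completeness in the $\cC^\square$-variables;
\item completeness in the Morita variables only after completion, where you must also check that completing does not destroy the Segal and completeness conditions in the other variables.
\end{itemize}
As written, the plan claims a completeness property that its own construction does not have.
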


We consider $\Alg_n(\cC^\square_{\vec \bullet})_{\vec \bullet}$  as a $(\infty,n)$-by-$(\infty,N)$ double category $(\vec k, \vec l)\mapsto \Alg_n(\cC^\square_{\vec l})_{\vec k}$. The category of topological $n$-nets $\Net_n$ is the underlying $(\infty,n+1)$-category $\Alg_n(\cC)$, when $\cC = \Alg(\Vect_\C)$.

\begin{rem}\label{rem:netexcision}
    While nets do not behave exactly like factorization algebras in terms of their composition, we have shown in \cref{lem:sectorcompose} and \cref{prop:composenets} that we can nonetheless recover another defect $n$-net from the composition. In particular, in order to satisfy the Segal conditions we must impose that the nets satisfy collar excision given in \cref{def:collarexcision}, when gluing on overlapping regions. Then we can apply the tools of \cite{gwilliam2018duals} to build the Morita category from the structure of the nets and algebras they assign.
\end{rem}

With this setup in place, we can now prove the main result of this section. 
\begin{proof}[Proof of \cref{thm:Netn}]
    Set $\cC = \Alg(\Vect_\C)$, so that $N=1$ in \cref{def:sifted}. In this case the condition of $\cC^\square$ being $\otimes$-sifted-cocomplete translates to just requiring the category $\cC$ is  $\otimes$-sifted cocomplete. The compatibility with $\otimes$ is required for $\Alg_n(\cC)$ to have a symmetric monoidal structure. To show that the category $\Alg(\Vect_\C)$ is $\otimes$-sifted-cocomplete, consider the forgetful functor 
    \begin{equation}
        \mathrm{oblv}:\Alg(\Vect_\C) \to \Vect_\C\,,
    \end{equation}
    with left adjoint $F$ given by freely tensoring. This adjunction is monadic for the monad $T= \mathrm{oblv}\circ F$.  The tensor product in $\Vect_\mathbb{C}$ preserves colimits separately in each variable, and therefore given a sifted diagram $B_i$, for $i\in I$, in $\Alg(\Vect_\C)$, the colimits of associative algebras are computed on the underlying vector spaces as
\begin{equation}
    \mathrm{oblv}\left( \underset{i\in I}{\colim}\, B_i\right)\, \simeq \underset{i \in I}{\colim}\,  \mathrm{oblv}(B_i)\,.
\end{equation}
We now fix an algebra $A$, and because tensoring with the vector space $\mathrm{oblv}(A)$ preserves all colimits, we have that 
\begin{align}
     \mathrm{oblv}\left( A \otimes \underset{i\in I}{\colim}\, B_i\right) &\simeq  \mathrm{oblv}(A)\otimes \underset{i \in I}{\colim}\,  \mathrm{oblv}(B_i)\\ \notag
     &\simeq  \underset{i \in I}{\colim}\left(  \mathrm{oblv}(A)\otimes  \mathrm{oblv}(B_i) \right)\\
     &\simeq \mathrm{oblv}\left(\underset{i \in I}{\colim}\left( A \otimes B_i  \right)\right)\,.
\end{align}
Thus, we have $A \otimes \colim_{i\in I}B_i\simeq \colim_{i\in I} A\otimes B_i$, and by symmetry, tensoring preserves sifted colimits in either variable separately.
    
    We invoke the results of \cref{lem:sectorcompose}  and \cref{prop:composenets} which show that the defect $n$-nets of codimension-$k$, for $1\leq k \leq n$,  close under composition in order to treat them as a simplicial space of $k$-morphisms. Furthermore, the collar excision condition for defect $n$-nets ensures that their gluing is compatible with the decomposition of a morphism into a sequence of composable morphisms. This equips topological $n$-nets with the structure of an algebra having $n$ independent directions of associative unital multiplications. We encode this structure in the simplicial space $\Alg_n(\cC^\square_{\vec{\bullet}})_{\vec{\bullet}}$, where the outer label $\vec{\bullet}$ records the codimension of the defect $n$-net and, correspondingly, parametrizes its spaces of morphisms.
    Finally, we use \cref{thm:buildingAlg} to show that  $\Alg_n(\cC^\square_{\vec{\bullet}})_{\vec{\bullet}}$ satisfies the Segal condition, is complete, and inherits a symmetric monoidal structure, and thus by definition is a symmetric monoidal $(\infty,n+1)$-category $\Alg_n(\cC)$. By using \cref{def:Algn} for $\Alg_n(\cC)$, we can translate every morphism to an equivalent description using nets. This gives the description of $\Net_n$ that is presented in the theorem.
\end{proof}

\section{Higher Karoubi Completion}\label{section:condensation}

The Karoubi completion, otherwise known as idempotent completion, of a category is an enlargement of the category to one where all idempotents split.
A higher categorical generalization of Karoubi completion goes by the name \textit{condensation completion} \cite{Gaiotto:2019xmp}. 
The use of condensation completion for establishing TQFTs comes from the insight that categories of dualizable objects often arise as the idempotent completion of a simpler category.  A simple example is that over a ring $R$, dualizable $R$-modules are precisely the finitely generated projective $R$-modules. These are retracts of finitely free $R$-modules. The category $\tVect^{\fd}_\mathbb{C}$, of fully dualizable 2-vector spaces over $\mathbb{C}$ is a perfectly good target for (1+1)d TQFTs. It serves as the Morita category of finite semisimple separable unital algebras over $\C$, and is constructed as follows. Start with the 1-category of finite-dimensional vector spaces over $\C$, $\Vect^{\fd}_\C$, together with its symmetric monoidal tensor product, deloop it to a one-object 2-category and perform the condensation completion, i.e. complete under a lax notion of idempotents. Going higher for all $n$, by repeating the process of delooping and taking condensations, one recovers the fully dualizable $n$-categories of $n$-vector spaces, denoted $\mathbf{nVect}^{\fd}_\mathbb{C}$.

We review condensation completion in this section, following the work of \cite{Gaiotto:2019xmp,Kong:2024ykr}, with an eye toward constructing the category of quantum spin systems used in \cref{thm:QS}, much as it is used to construct categories that serve as natural targets for TQFTs. We review the concept of an $n$-condensation which categorifies a split surjection and an $n$-condensation monad, which categorifies idempotents. We also discuss the concept of condensation modules and bimodules, leading to a main result given in \cref{prop:unique} about the uniqueness of condensations. Another crucial concept that we discuss is how unitality is defined for $n$-condensation monads, and the relevance of unitality for defining the fully dualizable categories that are relevant for our study of quantum spin systems.


We begin with a discussion of idempotent completion, before categorifying the concept. Let $X$ and $Y$ be objects of a 1-category $\cC$. Let $f:X\to Y$ and $g:Y \to X$ be a pair of 1-morphisms such that $f\circ g = \id_Y$ and $e= g \circ f: X\to X$ is an idempotent. Then if $Y$ is a direct summand of $X$, it is realized as the image of $e$. Not every idempotent in a general category admits an image. The following definition establishes a setting in which every idempotent does.
\begin{defn}
    Given a 1-category $\cC$, let $\widehat{\cC}$ be the \textit{Karoubi completion of $\cC$}. The objects of $\widehat{\cC}$ are pairs $(X,e)$ where $X$ is an object of $\cC$, and $e:X\to X$ is an idempotent. The morphisms consist of triples $(e,f,e'):(X,e)\to (X',e')$ where $f:X \to X'$ and $f \circ e = e' \circ f = f$.
\end{defn}

Moreover, $\widehat{\cC}$ is the universal category which contains $\cC$ and for which all idempotents have an image.

\begin{defn}
    Let $X$ and $Y$ be objects of a weak $n$-category. A \textit{$n$-condensation} of $X$ onto $Y$ is a pair of 1-morphisms $f:X\to Y$ and $g:Y\to X$ together with a $(n-1)$-condensation $fg$ to $
    \id_Y$.
\end{defn}

We will use the condensed notation $X \condense Y$, to denote the $n$-condensation of $X$ onto $Y$.

\begin{example}
   In  a 1-condensation, the condensation $fg \condense \id_Y$ is just $fg = \id_Y$, which forces $f: X \twoheadrightarrow Y$ to be a epimorphism and $g:Y \hookrightarrow X$ an monomorphism. In this case, $Y$ is said to be a retract of $X$.
\end{example}

\begin{defn}\label{def:walkingcondensation}
    The \textit{walking $n$-condensation} is the  2-object $n$-category $\spadesuit_n$  generated by two objects $X$ and $Y$, with two 1-morphisms $f:X\to Y$ and $g:Y \to X$, two 2-morphisms $\phi :fg \to \id_Y$ and $\gamma:\id_Y \to fg$, two 3-morphisms, etc, ending with two $n$-morphisms $\alpha$ and $\beta$ and one equation $\alpha \beta = \id$.
\end{defn}
\noindent A succinct way to think of an $n$-condensation in a category $\cC$ is as a diagram in $\cC$ of shape $\spadesuit_n$, i.e. a functor $\spadesuit_n\to \cC$. We represent the data of the $n$-condensation as $(X,Y,f,g,\ldots)$, and diagrammatically as

  \begin{equation} \label{eqn.cond}
\begin{tikzpicture}[baseline=(midpoint)]
  \path (0,0) node (X) {$X$} (0,-2) node (Y) {$Y$} (0,-1) coordinate (midpoint);
  \draw[condensationlarge] (X) -- (Y);
\end{tikzpicture}
\quad = \quad
\begin{tikzpicture}[baseline=(midpoint)]
  \path (0,0) node (X) {$X$} (0,-.35) coordinate (Xe) (0,-2) node (Y) {$Y$} (0,-1) coordinate (midpoint);
  \draw[->] (X) .. controls +(-.75,-.75) and +(-.75,.75) ..  node[auto,swap] {$\scriptstyle f$} (Y);
  \draw[->] (Y) .. controls +(.75,.75) and +(.75,-.75) .. node[auto,swap] {$\scriptstyle g$} (X);
  \path (Y.north) +(.1,0) coordinate (Ytopa) +(-.1,0) coordinate (Ytopb);
  \draw[->] (Ytopa) .. controls +(.6,.6) and +(.5,0) .. (Xe) .. controls +(-.5,0) and +(-.6,.6) .. (Ytopb);
  \draw[doublecondensationlarge] (Xe) -- (Y);
\end{tikzpicture}\,. \end{equation}

\begin{rem}
    For each $\spadesuit_n$, we can define a commuting $n$-condensation square via the diagram $\spadesuit^2_n$, which is a product of $n$-categories. The product of $n$-categories is computed by computing the product on $i$-morphisms for every $i$. In the case of $\spadesuit^2_n$, one has four objects that label the vertices of a square, and the remaining morphism data in the condensation label the four sides of the square:
$$
\spadesuit^2_2 \qquad = \qquad 
\begin{tikzpicture}[scale=2, baseline=(Y1.base)]
  \path (0,0) node (X) {$X$}
     (-1,-1) node (Y1) {$Y_1$}
     (1,-1) node (Y2) {$Y_2$}
     (0,-2) node (Z) {$Z$};
  \draw[->] (X) .. controls +(-.5,-.25) and +(.25,.5) .. node[auto,swap] {$\scriptstyle f_1$} (Y1);
  \draw[->] (Y1) .. controls +(.5,.25) and +(-.25,-.5) .. node[auto,swap] (g1) {$\scriptstyle g_1$} (X);
  \draw[->] (X) .. controls +(.25,-.5) and +(-.5,.25) .. node[auto,swap] (f2) {$\scriptstyle f_2$} (Y2);
  \draw[->] (Y2) .. controls +(-.25,.5) and +(.5,-.25) .. node[auto,swap] {$\scriptstyle g_2$} (X);
  \draw[->] (Y1) .. controls +(.25,-.5) and +(-.5,.25) .. node[auto,swap] {$\scriptstyle p_1$} (Z);
  \draw[->] (Z) .. controls +(-.25,.5) and +(.5,-.25) .. node[auto,swap] (q1) {$\scriptstyle q_1$} (Y1);
  \draw[->] (Y2) .. controls +(-.5,-.25) and +(.25,.5) .. node[auto,swap] (p2) {$\scriptstyle p_2$} (Z);
  \draw[->] (Z) .. controls +(.5,.25) and +(-.25,-.5) .. node[auto,swap] {$\scriptstyle q_2$} (Y2);
  \draw[doublecondensationlarge] (X) -- (Y1); \draw[doublecondensationlarge] (X) -- (Y2); \draw[doublecondensationlarge] (Y1) -- (Z); \draw[doublecondensationlarge] (Y2) -- (Z); 
   \path (0,-1) node {$[\cdots]$};
\end{tikzpicture} 
$$

Call the object $X$ the ``top'' object and $Z$ the ``bottom'' object. The paths from the top object to the bottom objects commute, which in higher categories, corresponds to the fact that there are coherent isomorphisms between different compositions from the top to bottom objects.
The use of $\left[\ldots\right]$ in the interior of the square is to denote the products of higher cells that are also generated, and the commutativity isomorphisms involving these cells.    
    Going further, one can describe a commuting $n$-condensation $k$-cube as a diagram of shape $\spadesuit^k_n$.
\end{rem}

To categorify the notion of an idempotent in an $n$-category, we view  $e:X\to X$ as an object of $\End_{\spadesuit_n}(X)$. The equation $e^2=e$ in 1-categories categorifies to a condensation $e^2 \condense e$. This arises from the relations 
\begin{equation}
    e^2 = (gf) (gf) = g (fg) f \condense g \,\id_Y f = gf = e.\footnote{We have used the property of 2-morphisms in \cref{def:walkingcondensation}, with the notion that they do not compose to $\id_Y$, but condense to $\id_Y$.}
\end{equation}
In this equation, we can use the relation $g \phi f$, but there is also a backwards map $ g \gamma f$. These can be thought of as a binary operation and binary co-operation on $e$, making $e$ into a condensation (bi)algebra. The next definition gives the categorification of an idempotent.

\begin{defn}
    A \textit{condensation monad} in an $n$-category is an object $X\in \cC$ and a condensation algebra object in the monoidal $(n-1)$-category $\End_\cC(X)$. We will denote the condensation monad on $X$ in terms of the data $(X,e,\ldots)$.
\end{defn}

\begin{defn}\label{def:condensate}
    The \textit{condensate} of a condensation monad  $(X,e,\ldots)$ in $\cC$, if it exists, is an extension of the monad into a condensation $X \condense Y$ in $\cC$. A $n$-category \textit{has all condensates} if all hom $(n-1)$-categories have all condensates and every condensation monad extends to a condensation. 
\end{defn}

Given two condensation monads $(X_1,e,\ldots)$ and $(X_2,e_2,\ldots)$ in $\cC$ we can consider a 1-morphism $m: X_1 \to X_2$ which induces a condensation bimodule between the two condensation monads. 

\begin{defn}
    Let $(X_1,e_1,\ldots)$ and $(X_2,e_2,\ldots)$ be two $n$-condensation monads in an $n$-category $\cC$. A \textit{condensation bimodule}  between the $n$-condensation monads consists of a 1-morphism $m: X_2 \to X_1$, and for each pair of natural numbers $(i,j)$, an $(i+j)$-dimensional commuting condensation cube whose top object is $e^i_1 m e^j_2$ and bottom object is $m$. The edges of the cube are various condensations of the form $e^k_1 m e^l_2 \condense e^{k-1} m e^{l}_2$, \,\,$e^k_1 m e^l_2 \condense e^{k} m e^{l-1}_2$.. for natural numbers $k\leq i$ and $l \leq j$. The various faces are whiskerings of lower dimensional cubes or identities.\footnote{One can think of the combinatorics that dictate which faces are whiskerings and which are identities as being the same combinatorics that axiomatize associative bimodules in terms of commuting cubes.} 
\end{defn}

\begin{defn}
    A \textit{left module} for a condensation monad $(X_1,e,\ldots)$ is a bimodule between $(X_1,e,\ldots)$ and $(X_2,\id_{X_2},\ldots)$, and  a \textit{right module} is a bimodule between $(X_2,\id_{X_2},\ldots)$ and $(X_1,e,\ldots)$, for some object $X_2$.
\end{defn}

\begin{lem}\label{lem:modules}
    If $(X,Y,f,g,\ldots)$ is a condensation, then $f$ and $g$ are right and left modules for $e = gf$, respectively.
\end{lem}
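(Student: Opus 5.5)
The plan is to write down the module structure maps explicitly from the condensation data and then check the coherence cubes. This mirrors the 1-categorical fact that for a retract $fg=\id_Y$ with $e=gf$, the map $f:X\to Y$ satisfies $fe=f$ and $eg=g$. First, the conventions. In the bimodule definition a left module for $(X,e,\ldots)$ is a 1-morphism $m:X_2\to X$ with cubes relating $e^i m$ to $m$. A right module is a 1-morphism $m:X\to X_2$ with cubes relating $m e^j$ to $m$. Under these conventions $g:Y\to X$ should be the left module and $f:X\to Y$ the right module, with the auxiliary object $X_2=Y$ carrying the trivial monad $(Y,\id_Y,\ldots)$.

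The key structure maps come from whiskering the given $(n-1)$-condensation $fg\condense\id_Y$. Whiskering on the right by $f$ gives
\[
fe=f(gf)=(fg)f\condense \id_Y f=f,
\]
which is the edge condensation $me\condense m$ for $m=f$. Whiskering on the left by $g$ gives
\[
eg=(gf)g=g(fg)\condense g\,\id_Y=g,
\]
which is the edge for $m=g$. More generally, $e^j=g(fg)^{j-1}f$, so
\[
fe^j=(fg)^j f \qquad\text{and}\qquad e^i g=g(fg)^i .
\]
The $j$-dimensional commuting condensation cube with top object $fe^j$ and bottom object $f$ is then the $j$-fold product cube $\spadesuit_{n-1}^j$. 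It is obtained by applying the condensation $fg\condense\id_Y$ independently in each of the $j$ tensor factors, and then whiskering by $f$. The cube for $e^ig$ is built in the same way.

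These cubes have the required shape because each face is again either a whiskering of a lower-dimensional cube of the same kind or an identity. This is exactly the combinatorics that defines the condensation algebra structure on $e$. That structure was itself built from the maps $g\phi f$ and $g\gamma f$, the multiplication $e^2\condense e$ being $g(fg)f\condense gf$. Compatibility of the module action with this multiplication therefore reduces to interchange (the exchange law) for independent applications of $\phi$ and $\gamma$ in distinct factors of $(fg)^j$.

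The main obstacle is the coherence bookkeeping. One must verify that the cube whose top is $fe^j$, with edges taken from the algebra structure on $e$ rather than from the action on $f$, agrees coherently with the product cube built from $fg\condense\id_Y$. I would handle this by observing that the whole configuration is the image of a single functor out of a walking diagram. The condensation is a functor $\spadesuit_n\to\cC$. The monad, together with its modules $f$ and $g$, is the restriction along the canonical functor from the walking monad-with-modules into $\spadesuit_n$ that sends $e\mapsto gf$. Since functors preserve commuting cubes, the coherences then hold automatically. What remains is to check the statement universally inside $\spadesuit_n$, where it follows from the product-cube construction above.
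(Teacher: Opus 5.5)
Your argument is correct, but it takes a different route from the paper.

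The paper does not build the cubes out of the condensation data. It adjoins an auxiliary object $X_2$ with a map $\mu\colon X_2\to Y$ and sets $m=g\mu$. It then derives the cubes $e^i m\condense m$ from the fact that $e$ is already an $e$-$e$-bimodule, transporting this structure along $\mu$ via an asserted identification $\hom(Y,X)\cong\hom(X_2,X)$.

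You instead construct every cube directly. Using $e^i g=g(fg)^i$ and $fe^j=(fg)^jf$, you realize each cube as $g$ (resp.\ $f$) whiskered with the $i$-fold product of the single $(n-1)$-condensation $fg\condense\id_Y$. Each of the $i$ gaps of $e^ig$ (the $i-1$ multiplication gaps and the one action gap) matches exactly one factor $fg$.

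The paper's route is shorter because it reuses structure already in hand. However, it must pass from cubes with top object $e^i\,gf$ to cubes with top object $e^i g$, that is, strip off a whiskering by $f$, and it leaves this step implicit in the transport along $\mu$.

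Your route runs in the opposite direction: the cubes $e^i e\condense e$ of the regular bimodule are exactly your $g$-cubes whiskered once more by $f$. As a result, the face conditions are visible by construction. So is the compatibility with the condensation algebra structure on $e=gf$, which is itself given by $g(fg)^{k-1}f\condense gf$. Your walking-diagram remark then handles the higher coherences in one stroke.
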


\begin{proof}
    We show that $g$ is a left module for $e$, and the same strategy can be used to show that $f$ is a right module for $e$. Suppose we have a three object category with objects labeled by $X$, $Y$ and $X_2$, and $X_2$ is only equipped with the identity morphism to itself.
    Let $\mu: X_2 \to Y$, so that $m = g \mu: X_2 \to X$. The map $\mu$ gives an isomorphism $\hom(Y,X)\cong \hom(X_2,X)$. The condensation monad $e$ is a $e\text{-}e$-bimodule, and so we get a sequence of condensation cubes $e^i m \condense m$.
\end{proof}

If the condensation monads are in an $n$-category, then the condensation bimodules form an $(n-1)$-category whose higher morphisms are condensation bimodule homomorphisms. Moreover, this $(n-1)$-category admits a natural tensor product of condensation bimodules. To define this tensor product, it will be simpler to work with condensation modules. 

Let $(X,e,\ldots)$ be a condensation monad, and let $X'_1 \xleftarrow{m_1} X$ and $X \xleftarrow{m_2} X'_2$ be left and right $e$-modules respectively. Suppose that  $X\condense Y$ and $e = gf$, so that $m_1 = \mu_1 f$ and $m_2 = g \mu_2$ for morphisms $\mu_1: Y \to X'_1$ and $\mu_2: X'_2 \to Y$. This is depicted in the following diagram: 

\begin{equation} \label{eqn.tensor1}
\begin{tikzpicture}[baseline=(midpoint)]
  \path (0,0) node (X) {$X$} +(0,-.35) coordinate (Xe) +(0,-2) node (Y) {$Y$} +(0,-1) coordinate (midpoint);
  \draw[->] (X) .. controls +(-.75,-.75) and +(-.75,.75) ..  node[auto] {$\scriptstyle f$} (Y);
  \draw[->] (Y) .. controls +(.75,.75) and +(.75,-.75) .. node[auto] {$\scriptstyle g$} (X);
  \draw[doublecondensation] (Xe) -- (Y);
  \path (3,-2) node  (X2) {$X_2'$};
  \path (-3,-2) node  (X1) {$X_1'$};
  \draw[->] (X2) -- node[auto] {$\scriptstyle \mu_2$} (Y);
  \draw[<-] (X1) -- node[auto,swap] {$\scriptstyle \mu_1$} (Y);
  \path 
    (X2) +(-.5,.25) coordinate (X2m)
    (X1) +(.5,.25) coordinate (X1m)
    (X) +(.5,-.25) coordinate (Xm2)
    (Y) +(.5,.25) coordinate (Ym2)
    (X) +(-.5,-.25) coordinate (Xm1)
    (Y) +(-.5,.25) coordinate (Ym1)
    ;
  \draw[->]  (X2m -| X2.west) -- node[auto,swap] {$\scriptstyle m_2$} (Ym2) .. controls +(.5,.5) and +(.5,-.5) .. (Xm2);
  \draw[<-]  (X1m -| X1.east) -- node[auto] {$\scriptstyle m_1$} (Ym1) .. controls +(-.5,.5) and +(-.5,-.5) .. (Xm1);
\end{tikzpicture}
\end{equation}

From the maps in \cref{eqn.tensor1}, one observes a condensation $m_1m_2 \condense \mu_1 \mu_2$, and hence we take the condensation algebra corresponding to this to be $\epsilon \in \End(m_1 m_2)$.

\begin{defn}\label{def:moduletensor}
    Let $\cC$ be an $n$-category for which all of its hom $(n-1)$-categories have all condensates. Let $(X,e,\ldots)$ be a condensation monad, and let $X'_1 \xleftarrow{m_1} X$ and $X \xleftarrow{m_2} X'_2$ be left and right $e$-modules respectively. Suppose that  $X\condense Y$ and $e = gf$, so that $m_1 = \mu_1 f$ and $m_2 = g \mu_2$ for morphisms $\mu_1: Y \to X'_1$ and $\mu_2: X'_2 \to Y$. 
    The \textit{tensor product} $m_1 \otimes_e m_2: X'_2 \to X'_1$ is the condensate corresponding to the  condensation monad  $(m_1 m_2, \epsilon,\ldots)$.
\end{defn}

The fact that the condensation monad $(m_1 m_2, \epsilon,\ldots)$ has a condensate is due to \cite[Proposition 2.3.9]{Gaiotto:2019xmp}. In particular the condensate is $\mu_1 \mu_2$ and hence $m_1 \otimes m_2 \cong \mu_1 \mu_2$. One may wonder why the tensor product, defined as a condensate in \cref{def:moduletensor}, is well defined.  
Using the background we have provided we can show the following uniqueness result for condensates.
\begin{thm}[{\cite[Theorem 2.3.2]{Gaiotto:2019xmp}}]\label{prop:unique}
    Let $(X,e,\ldots)$ be a condensation monad in $\cC$ an $n$-category where all hom $(n-1)$-categories have all condensates. Then the category of condensates for the  condensation monad  $(X,e,\ldots)$  is either empty or contractible.  
\end{thm}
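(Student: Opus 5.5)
The plan is to argue by induction on $n$, following the strategy of \cite{Gaiotto:2019xmp}. The idea is to reduce contractibility of the space of condensates to a statement about modules, via \cref{lem:modules} and the tensor product of \cref{def:moduletensor}. For $n=1$ the statement is classical: a condensation monad is an idempotent $e$ on $X$, and a condensate is a splitting $e=gf$ with $fg=\id_Y$. Any two splittings $(Y,f,g)$ and $(Y',f',g')$ are related by the isomorphism $f'g\colon Y\to Y'$, with inverse $fg'$. This isomorphism is the unique one compatible with the structure maps, so the groupoid of condensates is empty or contractible.

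For general $n$, I would first show that every condensate is determined by its modules. Given a condensate $(X,Y,f,g,\ldots)$ extending $(X,e,\ldots)$, \cref{lem:modules} makes $f$ a right and $g$ a left $e$-module. The key claim is that, for any object $Z$, composition with $g$ (resp.\ $f$) identifies $\hom(Z,Y)$ with the $(n-1)$-category of left $e$-modules $Z\to X$ (resp.\ right modules). Concretely, a module $m\colon Z\to X$ is sent to $fm$. Conversely, $\mu\colon Z\to Y$ is sent to $g\mu$. The composite $f g \mu$ condenses onto $\mu$ via $fg\condense\id_Y$. The composite $g f m = e m$ condenses onto $m$ via the module structure. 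By the inductive hypothesis applied in the hom $(n-1)$-categories, which have all condensates, these condensations exhibit a contractible choice of equivalence. This is essentially the content of the tensor product $m_1\otimes_e m_2$ being the condensate $\mu_1\mu_2$.

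With this in hand, $Y$ corepresents the functor $Z\mapsto \mathrm{LMod}_e(Z,X)$, and this functor is defined purely in terms of the monad $(X,e,\ldots)$. By the Yoneda lemma, any two condensates $Y,Y'$ are then equivalent via an equivalence unique up to contractible choice. The equivalence is $f'g\colon Y\to Y'$, with inverse $fg'$. Compatibility with $f,g$ and all higher data follows because the Yoneda equivalence is built from the module structures. The space of condensates is therefore either empty or equivalent to the space of corepresenting objects of a fixed functor, hence contractible.

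The main obstacle is the middle step. One must show that the condensations $fg\mu\condense\mu$ and $em\condense m$ give genuine inverse equivalences of $(n-1)$-categories, coherently in all higher cells, and not merely condensations. I would handle this by using the commuting condensation cubes in the bimodule definition to organize the coherences. The inductive hypothesis would then supply uniqueness of the condensates $e^i m\condense m$ in each hom category. Part of this step is verifying that the comparison functor is fully faithful on higher morphisms, where I would again apply the induction one level down.
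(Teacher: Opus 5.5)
Your overall plan (identify maps into $Y$ with $e$-modules and conclude by Yoneda) is viable, and your $n=1$ case is fine because there $fg=\id_Y$ on the nose. For $n\geq 2$, however, the key step fails as you set it up.

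\textbf{The gap.} The functors $\mu\mapsto g\mu$ and $m\mapsto fm$ are not inverse equivalences. Likewise, your comparison map $f'g\colon Y\to Y'$ with inverse $fg'$ is not an equivalence, since $(fg')(f'g)\simeq feg\simeq fgfg$ only condenses onto $\id_Y$.

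Concretely, in $\tVect^{\fd}_\C$ take $X=Y=\Vect$, $g=\id$, $f=\C^2\otimes-$, with $fg\condense\id_Y$ given by the projection and inclusion between $\C^2$ and $\C$. Use this condensate twice. Then $f'g=\C^2\otimes-$ is not invertible. For the module $m=g$ you get $g(fm)=em\cong\C^2\otimes-$, which is not equivalent to $m=\id$.

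A condensation such as $fg\mu\condense\mu$ or $em\condense m$ is not an equivalence, and no inductive hypothesis upgrades it to one. So the obstacle you flag in your last paragraph cannot be overcome in the form you state it.

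\textbf{The missing idea.} You must take condensates in the hom $(n-1)$-categories. Send a left module $m$ to $f\otimes_e m$, the condensate in $\hom(Z,Y)$ of the condensation monad on $fm$ induced by the module structures of $f$ and $m$. This, not an upgrade of condensations, is the real use of the hypothesis on hom categories. Composition with $g$ preserves condensations, so $g(f\otimes_e m)$ is a condensate of the induced monad on $gfm\simeq em$. The module structure of $m$ also exhibits $m$ as a condensate of that monad. Your induction (uniqueness in $\hom(Z,X)$) then gives $g(f\otimes_e m)\simeq m$, and likewise $f\otimes_e g\mu\simeq\mu$. Correspondingly, the comparison between two condensates is $f'\otimes_e g$, which is exactly the paper's $f_2\otimes_e g_1$.

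\textbf{Comparison with the paper.} With that repair, your route genuinely differs from the paper's. The paper composes $f_1\otimes_e g_2$ and $f_2\otimes_e g_1$ directly to the identities to get uniqueness up to equivalence. It then argues separately that an endomorphism of a condensate restricting to the identity of $(X,e,\ldots)$ is trivial. Your representability argument, once $\hom(Z,Y)\simeq\mathrm{LMod}_e(Z,X)$ is established coherently in $Z$, gives contractibility in one stroke, which is cleaner than the paper's second step.

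You still owe the identification of the space of condensates with the space of pairs consisting of $Y$ and the universal module $g$. You need to recover from $g$:
\begin{itemize}
\item the morphism $f$ (it corresponds to $e$ as the free left module),
\item the condensation $fg\condense\id_Y$,
\item the identification $gf\simeq e$ of condensation algebras.
\end{itemize}
Your sentence claiming that compatibility with the higher data follows from the Yoneda equivalence does not do this. (Minor: $Y$ represents, rather than corepresents, the presheaf $Z\mapsto\mathrm{LMod}_e(Z,X)$.)
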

\begin{proof}
    We first show that each condensation monad has at most one condensate. Suppose $(X,e,\ldots)$ has two condensates $(X,Y_1,f_1,g_1,\ldots)$ and $(X,Y_2,f_2,g_2,\ldots)$, then it would suffice to show that the two are isomorphic. By \cref{lem:modules}, $f_i$ and $g_i$ are modules for $e$, and hence we can form the following morphisms  $f_1 \otimes_e g_2: Y_2 \to Y_1$ and $f_2 \otimes_e g_1: Y_1 \to Y_2$ by tensoring modules. We obtain the identity map $Y_1 \to Y_1$ by looking at the composition
    \begin{equation}
        f_1 \otimes_e g_2 f_2 \otimes g_1 = f_1 \otimes_e e \otimes_e g_1 \cong  f_1 \otimes_e g_1 \cong \id_{Y_1}. 
    \end{equation}
    Similarly, we also obtain the identity map $Y_2 \to Y_2$ via the composition
   \begin{equation}
        f_2 \otimes_e g_1 f_1 \otimes g_2 = f_2 \otimes_e e \otimes_e g_2 \cong  f_2 \otimes_e g_2 \cong \id_{Y_2}. 
    \end{equation}
Therefore, the two condensations are isomorphic. 
To prove the second part of the theorem, suppose $(X,Y,f,g,\ldots)$ is a condensate of  $(X,e,\ldots)$. We want to show that the category of those endomorphisms of $(X,Y,f,g,\ldots)$ that restrict to the  identity endomorphism of $(X,e,\ldots)$ is contractible. Consider a nontrivial endomorphism of $(X,Y,f,g,\ldots)$ that restricts to  the identity morphism on $(X,e,\ldots)$.
A condensation monad such as $(X,e,\ldots)$ with the identity endomorphism can only condense to $(X,Y,f,g,\ldots)$, where $Y$ also has the trivial endomorphism. If  $(X,Y,f,g,\ldots)$ is equipped with a nontrivial endomorphism, then using the fact that condensation monads have at most one  condensate,  any nontrivial endomorphism on $(X,Y,f,g,\ldots)$  cannot restrict to the identity morphism on $(X,e,\ldots)$. Thus the category of those endomorphisms of $(X,Y,f,g,\ldots)$ that restrict to the  identity endomorphism of $(X,e,\ldots)$ is contractible. The fact that there are no higher morphisms of condensations follows from the same logic, where now we think of $(X,Y,f,g,\ldots)$ equipped with a higher endomorphism and show that it cannot restrict to  $(X,e,\ldots)$ with the identity higher endomorphism.
\end{proof}

\begin{thm}[{\cite[Theorem 2.3.11]{Gaiotto:2019xmp}}]\label{thm:condensation}
    Suppose that $\cC$ is an $n$-category all of whose hom $(n-1)$-categories are condensation complete.
We will reuse notation, and denote by $\widehat{\cC}$ the condensation completion of $\cC$. 
\begin{itemize}
    \item The objects of $\widehat{\cC}$ are condensation monads of $\cC$, 1-morphisms are condensation bimodules and composition is given by tensor product of bimodules. The higher morphisms are homomorphisms of condensation bimodules.
    \item $\widehat{\cC}$ has condensates for all condensation monads.
    \item The functor $\cC \to \widehat{\cC}$ sending $X \mapsto (X,\id_X,\ldots)$ is fully faithful, and an equivalence if $\cC$ has all condensates.
    \item The construction $\cC \mapsto \widehat{\cC}$ is functorial in $\cC$, and takes products of $n$-categories to products.
    \item If $\cD$ is any $n$-category with all condensates, then any functor $\cC \to \cD$ extends uniquely to a functor $\widehat{\cC}\to \cD$.
\end{itemize}
\end{thm}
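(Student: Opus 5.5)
The plan follows the strategy of \cite{Gaiotto:2019xmp}: build $\widehat{\cC}$ by induction on $n$, checking each bullet of \cref{thm:condensation} in turn, and use \cref{prop:unique} to make every choice canonical.

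\textbf{Construction of $\widehat{\cC}$.} By hypothesis every hom $(n-1)$-category of $\cC$ has all condensates. Hence for two condensation monads $(X_1,e_1,\ldots)$ and $(X_2,e_2,\ldots)$, the condensation bimodules $m:X_2\to X_1$ and their homomorphisms form an $(n-1)$-category, which is itself condensation complete. The tensor product of \cref{def:moduletensor} is computed as a condensate in a hom-category, so by \cref{prop:unique} it is well defined up to a contractible space of choices. The associators and unitors, for example $e\otimes_e m\simeq m$, are then determined up to contractible choice as well, and so are all higher coherences. The cleanest way to organize this is to define $\widehat{\cC}$ as the $n$-category of functors $\spadesuit_n\to\cC$ with bimodule morphisms. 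I would present the bimodule combinatorics, via commuting cubes, as an $A_\infty$-type structure, and import associativity from the uniqueness of condensates rather than verify it by hand.

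\textbf{Full faithfulness and condensates.} For full faithfulness, a bimodule between the monads $(X,\id_X,\ldots)$ and $(X',\id_{X'},\ldots)$ is just a 1-morphism $X'\to X$ together with a contractible family of cubes. So the map $\hom_\cC(X',X)\to\hom_{\widehat{\cC}}$ is an equivalence, and this equivalence holds at every level by induction on $n$.

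To show that $\widehat{\cC}$ has all condensates, take a condensation monad $(\widehat X,\widehat e,\ldots)$ in $\widehat{\cC}$. Its underlying object in $\cC$ carries a composite monad. I would show that this composite monad, viewed as an object of $\widehat{\cC}$, receives a condensation from $\widehat X$, with $f$ and $g$ given by $\widehat e$ regarded as a left and a right module, as in \cref{lem:modules}. If $\cC$ already has all condensates, each monad $(X,e,\ldots)$ is equivalent in $\widehat{\cC}$ to $(Y,\id_Y,\ldots)$, where $Y$ is its condensate. This uses the isomorphism $f\otimes_e g\cong\id_Y$ from the proof of \cref{prop:unique}, and it gives essential surjectivity.

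\textbf{Functoriality and the universal property.} Functoriality is immediate, because a functor preserves diagrams of shape $\spadesuit_n$. Products are preserved because $\spadesuit_n$-diagrams in a product are pairs of diagrams. For the universal property, a functor $F:\cC\to\cD$ is extended by sending $(X,e,\ldots)$ to the condensate in $\cD$ of the monad $F(X,e,\ldots)$. This condensate exists by assumption on $\cD$ and is unique by \cref{prop:unique}. Uniqueness of the extension then follows because every object of $\widehat{\cC}$ is a condensate of an object in the image of $\cC$, and any functor preserves condensations.

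I expect the main obstacle to be the coherent associativity of the bimodule tensor product, that is, making $\widehat{\cC}$ an honest weak $n$-category rather than a structure defined only up to homotopy. I would first try to handle it by identifying $\widehat{\cC}$ with the full sub-$n$-category of a presheaf-type completion on the objects that are condensates of representables. In that setting associativity is inherited from composition there.
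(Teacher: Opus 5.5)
The paper does not prove this statement; it imports it from \cite{Gaiotto:2019xmp}, so I am judging your outline on its own terms. Much of the architecture is sound: well-definedness of $\otimes_e$ from \cref{prop:unique}, and, writing $m$ for the bimodule underlying $\widehat e$, the condensate of $(\widehat X,\widehat e)$ obtained from the composite monad $mm\condense m\otimes_e m\condense m$. Essential surjectivity via $f\otimes_e g\cong\id_Y$ (together with $g\otimes_{\id_Y}f=gf=e$, the identity of $(X,e)$ in $\widehat{\cC}$) is also sound, as is absoluteness of condensations for functoriality.

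There is, however, a concrete error in your proposed definition. A functor $\spadesuit_n\to\cC$ is a condensation $X\condense Y$ \emph{in} $\cC$, so it already contains the condensate $Y\in\cC$. A category built from such functors only sees monads that already split in $\cC$. Under the equivalence $(X,gf)\simeq(Y,\id_Y)$ that you use later, it is equivalent to $\cC$ itself, and the second bullet fails. The indexing shape must be the walking condensation \emph{monad}, the full sub-$n$-category of $\spadesuit_n$ on $X$. Even then, bimodules are not natural transformations of such functors. So ``functors with bimodule morphisms'' is not an $n$-category you get for free: its composition is again $\otimes_e$, whose coherence is exactly what is in question.

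That coherence is the real gap. \cref{prop:unique} concerns one condensation monad at a time; by itself it yields no associator, let alone higher coherences. Even the associator requires exhibiting $(m_1\otimes m_2)\otimes m_3$ and $m_1\otimes(m_2\otimes m_3)$ as condensates of a \emph{single} condensation monad on $m_1m_2m_3$. That means composing condensations and identifying the two induced monads. Beyond the associator you need contractibility for whole diagrams of condensates.

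Your fallback (a presheaf-type ambient $\cD$, restricted to condensates of representables) is the right idea, but it rests on two facts you do not supply.
\begin{itemize}
\item $\cD$ must have all condensates. For presheaves valued in condensation complete $(n-1)$-categories (where your hypothesis on the homs of $\cC$ places the representables), this amounts to constructing $(n-1)$-categories of condensation modules and proving they are condensates.
\item The crux is a Yoneda-type lemma. For condensations $X_i\condense Y_i$ in $\cD$ with $e_i=g_if_i$, the map $\mu\mapsto g_1\mu f_2$ should give an equivalence $\hom_\cD(Y_2,Y_1)\simeq\mathrm{Bimod}(e_1,e_2)$, with inverse $m\mapsto f_1\otimes_{e_1}m\otimes_{e_2}g_2$, carrying composition to $\otimes$.
\end{itemize}
This lemma is what makes the first bullet true. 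It is also what the fifth bullet needs on morphisms: your extension is only specified on objects, and uniqueness on morphisms is precisely this identification. Once you have it, the extension is simply $\widehat{\cC}\xrightarrow{\widehat F}\widehat{\cD}\simeq\cD$.

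Two smaller assertions also need proof. First, that $\mathrm{Bimod}(e_1,e_2)$ is condensation complete; this is where induction on $n$ genuinely enters. Second, that bimodule structures over identity monads form a contractible space. At the lowest level this holds because the action is an idempotent with a section, hence the identity; you need the analogue at every level.
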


The major application of \cref{thm:condensation} is to build the higher category $\npVect^{\fd}_\mathbb{C}$ via the following procedure. Given a monoidal $(n-1)$-category $\cC$ with all condensates, take $\rB \cC$ to be the one-object monoidal $n$-category. Applying \cref{thm:condensation} we denote by $\widehat{\rB \cC}$, the universal $n$-category with all condensates and a fully faithful map $\rB \cC\to \widehat{\rB \cC}$. If $\cC$ is in fact symmetric monoidal, then one can iteratively deloop by taking $\rB$ of $\cC$ and then condensation completing. 
\begin{example}
    Delooping and condensation applied to $\cC = \mathbb C$ gives $\widehat{\rB \cC}= \Vect^{\fd}_\C$. Further applying it
$n$ times gives the $(n+1)$-category 
\begin{equation}\label{eq:n1vect}
    \widehat{\phantom{aa}\rB \ldots\widehat{\rB \cC}} = \npVect^{\fd}_\C\,.
\end{equation}
\end{example}

So far it is not clear why full dualizability should be preserved by iteratively delooping and condensation completing. We will address this in the next section.
\subsection{Unital Condensation and Full Dualizability}

Thus far, the discussion of condensation algebras has not required a unitality condition. Consequently, even at the level of \!\!\!\!\!$\widehat{\phantom{aa}\rB\Vect^{\fd}_\C}$, the objects comprising this category should be regarded as a class of nonunital algebras. At this stage, however, this presents an obstruction to interpreting the category as a target for TQFTs, for which one requires unital associative algebras. We thus need an upgrade to unital condensation.
\begin{defn}
    A $n$-condensation $(X,Y,f,g,\ldots)$ is \textit{0-unital} if $\phi: fg \to \id_Y$ is the counit of the adjunction $f \dashv g$.\footnote{ By $\phi$ being a counit, this implies there is a unit $\eta: \id_X \to gf = e$ such that the compositions
\begin{align}\notag 
    f \xrightarrow{f\eta} fgf \xrightarrow{\phi f} f\,, \\
    g \xrightarrow{\eta g} gfg\xrightarrow{g \phi} g
\end{align}
are equivalent to the identity. In fact the unit $\eta$ is exactly what is necessary to make the endomorphism $e = gf: X\to X$, which is an associative algebra object in $\End(X)$, into a unital algebra.} For $n\geq 3$, a $n$-condensation is \textit{1-unital} if the
corresponding $(n- 1)$-condensation $(f\circ g,\id_Y,\ldots)$ is 0-unital. For any integer
$0 \leq k \leq n -1 $, we define inductively the notion of a \textit{$k$-unital} $n$-condensation, if the corresponding $(n-1)$-condensation is $(k-1)$-unital.
\end{defn}

\noindent If a $n$-condensation is $k$-unital for all $k$ then it is called fully unital. 
It was argued in \cite[Theorem 3.1.7]{Gaiotto:2019xmp} that for a $n$-category $\cC$ that has adjoints for 1-morphisms and all of whose hom $(n-1)$-categories have all condensates, then every $n$-condensation monad $(X,e,\ldots)$ in $\cC$ is Morita equivalent as an object in $\widehat{\cC}$ to an object represented by a 0-unital $n$-condensation monad. Furthermore, if $\cC$ has adjoints for  $k$-morphisms where $1\leq k \leq n-1$ then every $n$-condensation monad is Morita equivalent to a $(k-1)$-unital one in $\widehat{\cC}$. If $\cC$ has adjoints for morphisms at all levels, one would naturally expect that every $n$-condensation monad is Morita equivalent to a fully unital one. However, this is not a consequence of \cite[Theorem 3.1.7]{Gaiotto:2019xmp}. To proceed with the subsequent results and formulate our analysis in arbitrary dimensions, we make the following assumption. We expect that the details underlying this assumption can be established in future work.

\begin{assume}\label{assume:extendunital}
    Let $\cC$ be an $n$-category that has adjoints for morphisms of all levels. Then every $n$-condensation monad is Morita equivalent to a fully unital one.
\end{assume}

\begin{rem}\label{rem:adjoints}
    An application of \cref{assume:extendunital} that will prove useful later is that if all morphisms have adjoints then in the condensation  $(X,Y,f,g,\phi,\gamma,\ldots)$, arising from the splitting of the unital condensation monad $(X,e,\ldots)$, the map $f$ has a right adjoint $f^R$. Furthermore, the condensation lifts to a unital condensation $X \condense Y$, whose 1-morphisms are $f$ and $f^R$. Similarly, if $g$ admits a left adjoint $g^L$ then there is a unital condensation $X \condense Y$ whose 1-morphisms are $g$ and $g^L$.
\end{rem}

\begin{rem}
     Given a fully unital $n$-condensation monad $(X,e,\ldots)$, one can consider the fully unital $n$-condensation algebra $(e,\ldots)$ as a \textit{separable algebra} in the monoidal $(n-1)$-category $\End(X)$. 
    Consider the case when the category level $n=2$. A separable algebra in this case is a unital separable Frobenius algebra. In the case $n = 3$, this notion of a separable algebra
coincides with the notion of a rigid separable algebra given in \cite{D7}.
\end{rem}

Furthermore we have the following equivalence relation on condensation monads.
\begin{lem}
    If an associative monad $e$  in $\cC$ extends to a  condensation monad structure, then all extensions are canonically equivalent as objects of $\cC$\,.
\end{lem}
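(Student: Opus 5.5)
The plan is to realize every extension as a condensate and then invoke the uniqueness of condensates. Fix an associative monad $e\colon X\to X$ in $\cC$, meaning an algebra object in $\End_\cC(X)$. Suppose it extends in two ways, $(X,e,\ldots)_1$ and $(X,e,\ldots)_2$, to a condensation monad; the two extensions differ only in the higher co-operation data (the analogues of $\gamma$ and the higher cells). The goal is to exhibit a canonical equivalence between these two objects of $\widehat{\cC}$. Since $\cC\to\widehat{\cC}$ is fully faithful by \cref{thm:condensation}, the resulting equivalence can be read back in $\cC$ wherever the condensates lie in $\cC$.

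The first step is to pass to $\widehat{\cC}$. By the second bullet of \cref{thm:condensation}, each extension $(X,e,\ldots)_i$ has a condensate $X\condense Y_i$ with data $(X,Y_i,f_i,g_i,\ldots)$, and $g_if_i\simeq e$ as associative monads. By \cref{lem:modules}, $f_i$ is a right $e$-module and $g_i$ is a left $e$-module. The key observation is that this module structure uses only the multiplication $e^2\to e$ and its higher associativity data, which are shared by both extensions. So $f_1$ and $g_2$ are modules over the same underlying algebra, and we can form the tensor products of \cref{def:moduletensor}:
\[ f_1\otimes_e g_2\colon Y_2\to Y_1, \qquad f_2\otimes_e g_1\colon Y_1\to Y_2. \]

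The second step copies the argument of \cref{prop:unique}. The composite $(f_1\otimes_e g_2)(f_2\otimes_e g_1)$ reduces to $f_1\otimes_e e\otimes_e g_1\cong f_1\otimes_e g_1\cong \id_{Y_1}$, and symmetrically for $Y_2$. Hence $Y_1\simeq Y_2$, and this equivalence restricts to the identity on $e$. By the contractibility statement in \cref{prop:unique}, the space of such equivalences is contractible, and this is what makes the equivalence canonical.

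The main obstacle is justifying that the tensor product $\otimes_e$ and the identification $f_i\otimes_e g_i\cong\id_{Y_i}$ depend only on the associative structure of $e$. By \cref{def:moduletensor}, $\otimes_e$ is defined as the condensate of $(m_1m_2,\epsilon,\ldots)$, and $\epsilon$ is built from a chosen splitting of $e$. I would handle this by induction on $n$. In the hom $(n-1)$-categories, the condensation monad $(f_1g_2,\epsilon,\ldots)$ is again an extension of an associative monad, namely the bar-type coequalizer of $f_1e g_2 \rightrightarrows f_1 g_2$, and by the inductive hypothesis its condensate is independent of the choice of extension. The base case $n=1$ is immediate, because there an idempotent has no extra data beyond $e^2=e$.
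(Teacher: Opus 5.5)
The paper states this lemma without proof, so your argument can only be judged on its own terms. It has a genuine gap, located exactly where the content of the lemma lies. Your ``key observation'' is false for the modules in play. The structure supplied by \cref{lem:modules} is a \emph{condensation} module: its edges $e^i m\condense e^{i-1}m$ are condensations, so they include the co-actions coming from $g\gamma f$ and the higher cells of the particular extension. Thus $f_1$ is a condensation module over the first extension and $g_2$ over the second, and \cref{def:moduletensor} does not apply to this pair. That definition requires both modules to factor through one common splitting, $m_1=\mu_1 f$ and $m_2=g\mu_2$. Forming $f_1\otimes_e g_2$ therefore presupposes that $g_2$ factors through $Y_1$ (or $f_1$ through $Y_2$), which is essentially the equivalence you are trying to prove. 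The same circularity reappears in the cancellation $f_1\otimes_e e\otimes_e g_1\cong f_1\otimes_e g_1$. There $e\simeq g_2f_2$ carries the bimodule structure of the second extension, yet it must act as the unit for tensoring over the first.

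The proposed induction does not close this gap. An associative monad on $f_1g_2$ is an endomorphism $\epsilon$ together with a multiplication. The natural candidates for $\epsilon$ are obtained by co-acting and then acting, e.g.\ $(\id\otimes\mathrm{act})(\mathrm{coact}\otimes\id)$, so $\epsilon$ itself already depends on the extension. Concretely, in $\rB\Vect^{\fd}_\C$ take $e=A=\mathrm{Mat}_n(\C)$ with $n\geq 2$, and two different bimodule sections $\delta_1\neq\delta_2$ of the multiplication $\mu$. Such sections exist, e.g.\ $1\mapsto\sum_{i,j}E_{ij}\otimes cE_{ji}$ for any $c$ with $\mathrm{tr}\,c=1$. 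Co-acting on the left or on the right then gives the two different idempotents $\delta_1\mu$ and $\delta_2\mu$ on $A\otimes A$. Comparing their condensates is not an instance of your inductive hypothesis, which only compares extensions of one fixed associative monad. Also, a ``bar-type coequalizer'' is a colimit diagram rather than an associative monad, and it need not exist in a category that only has condensates.

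The appeal to \cref{prop:unique} for canonicity is also misplaced. That theorem concerns condensates of a single fixed condensation monad, whereas $Y_1$ and $Y_2$ are condensates of two different ones. It therefore gives no control over equivalences $Y_1\simeq Y_2$ that are merely the identity on the underlying associative monad.

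What is missing is a direct comparison of the two extensions. For instance, you could show that $e$, equipped with the left co-operations of the first extension and the right co-operations of the second, is an invertible condensation bimodule between them. In the $2$-categorical case the basic mixed square is $(\delta_1\otimes\id)\delta_2=(\id\otimes\delta_2)\delta_1$, and this does follow from the bimodule-section axioms. The higher cubes and the invertibility still have to be established.
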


We now move onto discussing dualizability in higher categories, generalizing the concept of rigid in 1-categories, which means that all objects admit left and right duals. In an $(n+1)$-category there are different levels of duals, and we say that an $(n+1)$-category is $m$-dualizable if all $k$-morphisms $1\leq k \leq m$  admit both left and right adjoints, and a monoidal $n$-category $\cC$ is $n$-dualizable if in the $(n+1)$-category $\rB \cC$ all morphisms $0\leq k \leq m$ admit both left and right adjoints. We say that a monoidal $n$-category is fully dualizable if it is $n$-dualizable.

We conclude our review with the final theorem explaining why the objects of a condensation-complete category are fully dualizable, and hence provide natural targets for the description of condensed matter systems and TQFTs.

\begin{thm}[{\cite[Theorem 4.1.2]{Gaiotto:2019xmp}}]
\label{thm:fulldual}
   If $\cC$ is a fully dualizable symmetric monoidal $n$-category with all condensates, then the symmetric monoidal $(n+1)$-category $\widehat{\rB \cC}$ is also fully dualizable. 
\end{thm}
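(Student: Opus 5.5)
To prove \cref{thm:fulldual}, I would show directly that every object and every $k$-morphism of $\widehat{\rB \cC}$, for $1\leq k\leq n$, has both adjoints (or duals, for objects). The plan is to reduce dualizability in $\widehat{\rB\cC}$ to dualizability in $\rB\cC$ and then transport it through condensations. By \cref{thm:condensation}, objects of $\widehat{\rB\cC}$ are condensation monads $(\ast,e,\ldots)$, i.e.\ condensation algebras $e\in\cC$. Their 1-morphisms are condensation bimodules $m$ in $\cC$, and the higher morphisms are bimodule homomorphisms. Morphisms at level $\geq 2$ therefore live in hom $(n-1)$-categories that are themselves built out of $\cC$ by condensation. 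This suggests an induction on $n$: the hom categories of $\widehat{\rB\cC}$ have the shape needed to apply the statement one level down. The base case $n=0$ reduces to the classical fact that idempotent-split completions of rigid categories are rigid.

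\textbf{Step 1 (objects).} For an object $(\ast,e)$, I would take its dual to be $(\ast,e^{\op})$, where $e^{\op}$ is $e$ with the multiplication reversed using the symmetric braiding. The evaluation is $e$ regarded as an $e\otimes e^{\op}$-$\mathbf 1$ bimodule, and the coevaluation is $e$ regarded as a $\mathbf 1$-$e^{\op}\otimes e$ bimodule. The snake composites are $e\otimes_e e\cong e$, viewed as the identity $e$-$e$ bimodule, which is the unit 1-morphism of $(\ast,e)$. To make these tensor products legitimate I would invoke \cref{def:moduletensor} together with the fact that $\rB\cC$'s hom category $\cC$ has all condensates, so the relevant relative tensor products exist.

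\textbf{Step 2 (1-morphisms).} Here I would use \cref{lem:modules} and the uniqueness result \cref{prop:unique}. By the fully faithful embedding in \cref{thm:condensation} together with \cref{assume:extendunital}, we may represent objects by fully unital condensations $\ast\condense Y$. Given a bimodule $m:(\ast,e_2)\to(\ast,e_1)$, its underlying object $m\in\cC$ has a dual $m^\vee$ in $\cC$ by full dualizability of $\cC$. The condensation-bimodule structure on $m$ dualizes to a bimodule structure on $m^\vee$, and unit and counit 2-morphisms are obtained from those of $m\dashv m^\vee$ in $\cC$ by descending along the condensations $e_1^i m e_2^j\condense m$. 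Well-definedness of this descent is exactly what \cref{prop:unique} provides. Both adjoints arise this way, and by \cref{rem:adjoints} the condensations can be taken with $f$ and $f^R$ (respectively $g$ and $g^L$) as their legs.

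\textbf{Step 3 (higher morphisms).} For $k\geq 2$, $\Hom_{\widehat{\rB\cC}}$ between two objects is a category of condensation bimodules in $\cC$. This category is again obtained from a fully dualizable $(n-1)$-category with condensates, so the induction hypothesis supplies adjoints for all $k$-morphisms with $k\geq 2$.

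I expect the main obstacle to be Step 2, specifically checking that the triangle identities for $m\dashv m^\vee$ survive passage to relative tensor products over $e_1$ and $e_2$. This requires the condensation algebras to be unital, which is precisely where \cref{assume:extendunital} enters: without unitality, $e\otimes_e e\cong e$ fails to be the identity. My approach would be to split $e_i$ via unital condensations $\ast\condense Y_i$. Then $m\cong g_1\mu f_2$, and adjoints are simply composites of the adjoints of $g_1$, $\mu$, and $f_2$, which exist in the completed category.
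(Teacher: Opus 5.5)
The paper gives no proof of this statement; it only quotes it from \cite{Gaiotto:2019xmp}. So I am judging your argument on its own terms.

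Step 1 is fine. It also needs no unitality: $e\otimes_e e\simeq e$ holds for every condensation algebra, because the tensor product is defined as a condensate in \cref{def:moduletensor}.

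\textbf{The main gap is Step 3: it is not an instance of your induction hypothesis.} That hypothesis concerns suspensions $\widehat{\rB\cD}$ of fully dualizable symmetric monoidal $(n-1)$-categories. But $\hom_{\widehat{\rB\cC}}((\ast,e_1),(\ast,e_2))$ is an $n$-category of condensation bimodules with no symmetric monoidal structure, and it is not of the form $\widehat{\rB\cD}$. Already for $e_1=e_2=\mathbf 1$ it is $\cC$ itself, which need not be a suspension.

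What is actually needed is a transfer of adjunctibility from $\rB\cC$ along the condensations $\ast\condense X$ and $\ast\condense Y$. Take a $k$-morphism $\alpha$ ($k\geq 2$) between 1-morphisms $X\to Y$. Then $g_Y\alpha f_X$ lives in $\rB\cC$ (by full faithfulness of $\rB\cC\to\widehat{\rB\cC}$), so it has adjoints. Moreover, $\alpha$ is a condensate of the composite of its whiskering $f_Yg_Y\alpha f_Xg_X$ with whiskerings of the legs of $f_Yg_Y\condense\id_Y$ and $f_Xg_X\condense\id_X$. To conclude you need two ingredients:
\begin{itemize}
\item[(i)] the legs themselves must be adjunctible at every level, which is exactly what unitality of the condensations provides, on both sides (in this paper, via \cref{assume:extendunital});
\item[(ii)] a lemma that, in a category whose hom categories have all condensates, composites and condensates of adjunctible morphisms are adjunctible (condense the mate of the condensation monad on the adjoint).
\end{itemize}
Neither ingredient appears in your proposal, and (i) is where the substance of the theorem lies.

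\textbf{Step 2 has the same defect in miniature.} Your fallback is circular. From $m\simeq g_1\mu f_2$ you take adjoints of $g_1$, $\mu$, $f_2$. But $\mu$ is precisely the 1-morphism of $\widehat{\rB\cC}$ whose adjoints are in question, while the adjoints of $m$ as an endomorphism of $\ast$ are already given by $\cC$.

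The argument has to run the other way. Using $f_ig_i\condense\id_{Y_i}$, $\mu$ is a condensate of $f_1 m g_2=f_1g_1\mu f_2g_2$. Present the target with $f_1\dashv g_1$ and the source with $g_2\dashv f_2$. This gives $(f_1 m g_2)^R\simeq f_2 m^\vee g_1$, and lemma (ii) then produces $\mu^R$; dually you get $\mu^L$.

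Your primary ``descent'' route does not by itself produce an adjunction either. The Morita counit $m\otimes_{e_2}m^\vee\to e_1$ is not induced by $\mathrm{ev}_m$ alone. It needs the co-operation $g\gamma f$ of $e_1$, which for ordinary separable algebras is the separability idempotent. The triangle identities are then a genuine computation using $\phi\gamma\condense\id$. \cref{prop:unique} only says the relevant condensates are unique up to contractible choice; it does not supply these identities.
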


Applying this theorem to $\mathbf{(n+1)Vect}^\fd_{\C}$, in \cref{eq:n1vect}, shows that it is fully dualizable.

\section{The Category of Quantum Spin Systems}\label{section:catofQS}
After establishing the category of topological $n$-nets in the previous section, we will now define the category of $n$-dimensional quantum spin systems $\QS_n$ as a condensation complete subcategory of $\Net_n$. This perspective of identifying suitable subcategories that model physically relevant systems is naturally motivated by the functorial formulation of TQFTs. In particular, the cobordism hypothesis tells us that the space of TQFTs is not obtained by taking all objects of a target symmetric monoidal category (in our setting, $\Net_n$), but only its fully dualizable objects (in our setting, $\QS_n$).

\begin{rem}
    In \cite{Kong:2022hjj} the authors have a related category called the category of ``local quantum symmetries'' that is defined with respect to von Neumann algebras. This category serves to capture the symmetry aspects of topological phases that do not depend on extended topological operators. The authors also introduce a category of ``quantum liquids'', intended to model lattice topological phases. Conceptually, this is closely aligned with the class of systems we seek to capture.
\end{rem}

In this section we show that $\QS_n$ has the properties of a Morita category of higher fusion categories, given by  $\npVect^{\fd}_{\mathbb{C}}$. This gives a  translation between condensations of categories, and condensations in terms of nets.  The construction of $\QS_n$ is done iteratively and utilizes $\QS_{n-1}$ as its category of defect $n$-nets of codimension greater than 0, i.e.  if $\cA$ is the trivial topological $n$-net then $\End_{\QS_n}(\cA) \simeq \QS_{n-1}$.

To finish the  construction one must show that for every condensation monad on a topological $n$-net $\cA \in \QS_n$ which admits a condensate in the sense of \cref{def:condensate}, that there is a way to build this condensate. 
 The starting point for constructing $\QS_n$ will be to first consider the subcategory $\Net^{\fd}_n \subset \Net_n$, consisting of the \textit{finite semisimple} topological $n$-nets and defect $n$-nets, see \cref{def:semisimpleNet} and  \cref{def:finiteNet}. We will then condensation complete $\Net^{\fd}_n$, i.e. appending all condensates for condensation monads, to obtain $ \widehat{\Net^{\fd}_n}= \QS_n$.\footnote{Our construction actually should be done on the trivial topological $n$-net, thereby exhibiting every $n$-net as a condensate of the monoidal unit. Nevertheless, we formulate it using a condensation monad on an arbitrary net $\mathcal A$. This entails no loss of generality, since a change of condensation monad can be implemented by a condensation bimodule. In particular, in any of the arguments in \S\ref{subsection:2net} and \S\ref{subsection:nnet} that constructions an algebra which is assigned by a net, can be carried out equally well using the trivial net, with the relevant algebras realized as endomorphism algebras of vector spaces over $\cC$.}
 
\begin{rem}
    Even though the theory of unital condensations is not quite complete enough to make sure that $\npVect^{\fd}_{\mathbb{C}}$ is exactly the Morita category of higher fusion categories for all $n$, hence the need for \cref{assume:extendunital}, we will really only need the Karoubi/condensation completion property of $\npVect^{\fd}_{\mathbb{C}}$ for our analysis of $\QS_n$. We will not be too concerned with how $\npVect^{\fd}_{\mathbb{C}}$ is formally constructed as a fusion category, namely from a form of iterative delooping and condensation completion, but just that it exists. Future work will describe how to formally construct it coherently and fix the issues in \cite{Gaiotto:2019xmp} regarding unitality.
\end{rem}

\begin{rem}
   Quantum spin systems are also commonly formulated in the operator-algebraic setting using C$^*$-algebras or von Neumann algebras, where unitarity plays a central role. If topological $n$-nets are instead defined to assign von Neumann algebras to disk-like regions, the resulting category of quantum spin systems should exhibit the structure of a higher Hilbert space 
    \cite{Kong:2022hjj}.
   In a similar vein, the $n$-disk category also bears a relationship to higher Hilbert spaces as shown in \cite{Wesley2026}. Unitary constructions for higher Hilbert space categories have also been developed in \cite{chen2026manifestly,ferrer2026many}.
\end{rem}

Given the discussion in \S\ref{subsection:justification}, our framework allows the classification of QCAs to be done as follows: 
\begin{enumerate}
    \item We first reformulate quantum spin systems in terms of topological $n$-nets, in which the 1-morphisms capture the same equivalence conditions used for classifying QCA up to stable equivalence and quantum circuits.
    \item The first step gives a Morita category description of quantum spin systems. Using our knowledge of the Morita category of condensation complete categories we then compute the homotopy groups of the core of $\QS_4$ which allows us to access QCAs up to (3+1)d. This will establish \cref{mainthm:QCA}.
\end{enumerate}

We show the following result in order to apply \cref{assume:extendunital} to all the condensation monads that follow. The strategy for the proof is essentially the same as \cite[Theorem 2.20]{BDH5}, which was done for the 3-category case with conformal nets.

\begin{prop}\label{prop:alladjoint}
    The category $\Net^\fd_n$ has adjoints for all $k$-morphisms where $0\leq k \leq n$.
\end{prop}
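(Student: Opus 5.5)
The plan is to follow the strategy of \cite[Theorem 2.20]{BDH5}, in which adjoints in a 3-category of conformal nets were built by reflecting defects. The argument is an induction on the codimension $k$, working through the identification of $\Net_n$ with the Morita category $\Alg_n(\Alg(\Vect_\C))$ from the proof of \cref{thm:Netn}. Under this identification a $k$-morphism of $\Net^\fd_n$ is a finite semisimple defect $n$-net of codimension $k$, viewed as a higher bimodule. For $k=0$, ``adjoint'' means a dual object. For a finite topological $n$-net $\cA$, the candidate dual is the orientation-reversed net $\cA^{\op}$ (as in the footnote to the definition of the vacuum sector). The candidate evaluation and coevaluation are the vacuum sector $H_\cA$, regarded as a codimension-1 defect $n$-net between $\cA\otimes\cA^{\op}$ and $\underline{\C}$ (and conversely). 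This is obtained by folding the cube $(0,1)^n$ along the hyperplane $x_1=\tfrac12$.

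For $1\le k<n$, let $\cF:\cA\to\cB$ be a codimension-$k$ defect net on a standard stratified disk of index $k$. The candidate left and right adjoint is the reflected defect $\overline{\cF}$, obtained by pulling $\cF$ back along the reflection $x_{n-k+1}\mapsto 1-x_{n-k+1}$. This reflection swaps the two sides of the codimension-$k$ stratum. The unit $\id_\cA\Rightarrow \overline{\cF}\circ\cF$ and the counit $\cF\circ\overline{\cF}\Rightarrow \id_\cB$ are codimension-$(k+1)$ defects. Concretely, they are the restrictions of $\cF$ to a stratified disk in which the stratum bends into a ``cup'' or ``cap''. They are well defined because $\cF$ is locally constant along its stratum, and the composite $\overline{\cF}\circ\cF$ is computed by \cref{prop:composenets}. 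The zigzag (snake) identities then reduce to an isotopy of the bent stratum to a straight one. Collar excision (\cref{def:collarexcision}) turns this isotopy into an equivalence of the assigned algebras, and the vacuum-sector argument from the proof of \cref{prop:composenets} identifies the resulting sectors with the vacuum sector.

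At the top level $k=n$, the morphisms are sectors $H$ of a codimension-$n$ defect. Here the plan is to take the adjoint to be the dual sector $H^*$, i.e. the linear dual with the opposite actions. Adjunction data of this kind is available exactly when $H$ is dualizable as a bimodule over the relevant algebras of local observables. This is precisely what \cref{def:finiteNet} supplies, and semisimplicity (together with \cref{lem:sectorcompose}) guarantees that the unit and counit sector maps can be chosen to satisfy the triangle identities. The finiteness hypothesis is used only here.

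The main obstacle is to show that the unit and counit in the intermediate codimensions really satisfy the snake equations as $(k+1)$-morphisms, and not merely up to an unspecified equivalence. Unlike in the factorization-algebra setting, the algebra of a composite net is reconstructed as $\cO(H_\cF,\pi^{-1}(I))$ rather than built in. I would try to reduce the question to sectors: collar excision identifies $\cO(H_\cF,\cdot)$ for the zigzag with that of the straight defect, so the triangle identity becomes an equality of sector isomorphisms. This equality can then be checked using dualizability of the vacuum sector from \cref{def:finiteNet}, which is exactly the step that fails without finiteness.
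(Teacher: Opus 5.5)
Your proposal follows essentially the same route as the paper's proof sketch. The dual of $\cA$ is the orientation-reversed net $\cA^{\op}$, the adjoint of a codimension-$k$ defect is its pullback along the reflection that swaps the two sides of its stratum (your $x_{n-k+1}$ and the paper's $x_{n-k}$ differ only by an indexing convention), and the coherence is deferred to the sewing arguments of \cite[Section 2.C]{BDH5}. You go somewhat further than the paper's sketch, which leaves the top level $k=n$ implicit, by spelling out the cup/cap unit and counit and by treating $k=n$ through the dualizability in \cref{def:finiteNet}.
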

\begin{proof}[Proof sketch]
    At the level of objects, the dual for $(\cA,H_\cA)$ is given by $(\cA^{\op},\overline{H}_\cA)$ where $\cA$ is the opposite algebra with reversed orientation for composition, and $\overline{H}_\cA$ is the dual vector space. The dual for a $1\leq k <n$ morphism in the form of a defect $n$-net of codimension-$k$ $(\cF,H_\cF)$ is $(r^*_k\cF, \overline{H}_\cF)$ where $r_k$ is the reflection across the hyperplane $x_{n-k}=\tfrac{1}{2}$ of the space $M = (0,1)^n$ where all defect $n$-nets are defined. In particular, if $\cF$ is thought of as a morphism of defect $n$-nets of codimension-$(k-1)$, which starts at $\mathcal{M}$ and ends at $\mathcal{N}$, then $r^*_k \cF$ is a morphism from $r^*_{k-1}\mathcal{N}$ to $r^*_{k-1}\mathcal{M}$. 
    The full proof that compositions of dualizable $k$-morphisms are dualizable uses the same sewing arguments as \cite[Section 2.C]{BDH5}. 
\end{proof}

\subsection{Condensation for $1$-nets}

We warm up by showing that every condensation monad on a finite  topological $1$-net with semisimple sector $\cA \in \QS_1$ admits a condensate which we can construct. In particular, a topological 1-net $(\cA,H_\cA)$ is finite semisimple if $H_\cA$ is finite and semisimple as the regular module for $\cA$. This is the first level to consider because a 1-net is the lowest possible dimension of net that we can work with, corresponding to (0+1)d quantum spin systems.

\begin{prop}
    The category $\QS_1$ can be identified with the symmetric monoidal 2-category of finite semisimple algebras, finite semisimple bimodules, and bimodule maps, i.e. $\mathbf{2Vect}^{\fd}_\C$.
\end{prop}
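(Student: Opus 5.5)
We construct an explicit symmetric monoidal 2-functor $\Phi:\QS_1\to\tVect^{\fd}_\C$ and show it is an equivalence. On $\Net^{\fd}_1$ we use the identification of \cref{thm:Netn} with the Morita category $\Alg_1(\Alg(\Vect_\C))$. A topological 1-net $(\cA,H_\cA)$ is locally constant on intervals. By collar excision (\cref{def:collarexcision}) the algebra assigned to any interval is canonically equivalent to the one assigned to $(0,1)$. We therefore set $\Phi(\cA)=\cO_\cA(H_\cA,(0,1))$.

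For a finite semisimple net, $H_\cA$ is a finite semisimple regular bimodule, so $\cO_\cA(H_\cA,(0,1))\subset\End(H_\cA)$ is a finite-dimensional algebra acting faithfully on a semisimple module, hence semisimple. Likewise, a 1-morphism (a defect 1-net of codimension 1, i.e.\ a sector $H_\cF$) is sent to $H_\cF$, regarded as an $\cA$-$\cB$-bimodule via restriction to the two sides of the stratum. It is finite semisimple by the finiteness hypothesis. A 2-morphism (a homomorphism of sectors) is sent to the underlying bimodule map.

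Next we check that $\Phi$ is a 2-functor. Composition of 1-morphisms is $H_\cE\boxtimes_\cB H_\cD$, which \cref{lem:sectorcompose} shows is again a semisimple sector; this is exactly the relative tensor product of bimodules. Identities are the vacuum sectors, which by definition are the regular bimodules, so units match. Tensor product and direct sum of nets are defined pointwise, so $\Phi$ is symmetric monoidal and additive. Full faithfulness on 1- and 2-morphisms is essentially tautological: every finite semisimple $\cA$-$\cB$-bimodule arises as a sector of the defect net that assigns $\cA$ to the left, $\cB$ to the right, and $\cA\otimes\cB^{\op}$ to intervals meeting the point. The commutation and generation axioms for this defect net are immediate.

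Essential surjectivity onto $\tVect^{\fd}_\C$ follows because every finite semisimple algebra $A$ is realized by the constant net $\underline{A}$ with vacuum sector $A$.

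Finally we pass to the completion $\QS_1=\widehat{\Net^{\fd}_1}$. By \cref{thm:condensation}, the completion is computed by condensation monads and bimodules, and $\tVect^{\fd}_\C$ is already condensation complete: it is $\widehat{\rB\Vect^{\fd}_\C}$ by \eqref{eq:n1vect}. So it suffices to show that $\Net^{\fd}_1$ already has all condensates; the universal property in \cref{thm:condensation} then extends $\Phi$ to an equivalence.

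This last step is the main obstacle. A 2-condensation monad on $\cA$ is a separable algebra $e$ in $\End(\cA)\simeq{}_A\mathrm{Bimod}$. \cref{prop:alladjoint} and \cref{assume:extendunital} let us take $e$ unital, so it is a separable $A$-$A$-algebra. We build its condensate directly as the net $\underline{e}$ whose algebra on each interval is the underlying algebra of $e$. We must check two things: that $\underline{e}$ is finite semisimple, which holds because separable algebras over $\C$ are semisimple, and that $f={}_Ae_e$ and $g={}_ee_A$ split $e$, i.e.\ $f\otimes_A g\simeq e$ and $g\otimes_e f\condense \id_e$ via multiplication. The subtle point is that $\Phi(\underline{e})$ recovers $e$ as an algebra and not merely up to Morita equivalence; this requires the vacuum-sector identification of \cref{prop:composenets}.
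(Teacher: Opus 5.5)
There is a genuine gap in the steps where you realize algebras by nets. The definition of a topological $n$-net requires that the images of $\cA(I)$ and $\cA(J)$ in $\cA(K)$ commute whenever $I,J\subset K$ are disjoint. In the constant net $\underline{A}$ all structure maps are identities, so for two disjoint subintervals of $K$ this condition says that $A$ commutes with itself, i.e.\ $A$ is commutative. Consequently:
\begin{itemize}
\item $\underline{A}$ is not a topological 1-net for $A=\mathrm{Mat}_2(\C)$.
\item Your defect net inherits the same problem, since two disjoint intervals to the left of the stratum both map onto $A\otimes 1\subset A\otimes B^{\op}$.
\item Your condensate $\underline{e}$ already fails for the most basic condensation monad, namely $e=\mathrm{Mat}_2(\C)$ on $\underline{\C}$.
\end{itemize}
The root cause is your opening claim that a 1-net is locally constant. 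Collar excision is a gluing condition and does not make $\cA(I\subset K)$ an isomorphism: the spin-chain nets of \cref{ex:spinsystem} satisfy it and are not locally constant. Moreover, local constancy together with the commutation axiom would force every $\cO_\cA(H_\cA,(0,1))$ to be commutative. Your definition of $\Phi$ on 1-morphisms depends on exactly this property, because it promotes the action on $H_\cF$ of the local algebras on one side of the stratum to an action of the global algebra $\Phi(\cA)$. For a non-constant net the algebras on one side need not generate $\Phi(\cA)$. So $\Phi$ is not yet well defined on all of $\Net^{\fd}_1$, and your ``tautological'' local equivalence is unproven for such objects. The paper also asserts the sector--bimodule identification without proof, but it does not rely on constant nets to do so.

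You can repair the realizations by supporting the algebra at a point: set $\cA(I)=A$ if $\tfrac12\in I$ and $\cA(I)=\C$ otherwise. This net satisfies both the commutation and generation axioms and has global algebra $A$ with the regular vacuum sector. The same device gives an honest condensate for a separable $e$. For essential surjectivity on objects alone, the commutative Morita representatives $\C^k$ would also suffice.

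The paper takes a different route, which avoids writing down a net for each algebra. It identifies the 1- and 2-morphisms with finite semisimple bimodules and bimodule maps. It then argues abstractly that the condensate $\cB$ of a condensation monad on a finite semisimple net is again finite: it passes to a unital condensation with $f\dashv f^R$, regards $f^R$ as the sector $H_\cB$, and observes that $\id_\cB$ is a direct summand of the finite sector $f\circ f^R$. All objects then arise as condensates of the unit.

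Two smaller points. First, if $\Phi$ were already an equivalence, condensation completeness of $\Net^{\fd}_1$ would follow by transport from $\tVect^{\fd}_\C$, so your last step is not the real obstacle. Second, with $f={}_Ae_e$ and $g={}_ee_A$, the splitting should read $f\otimes_e g\simeq e$ and $g\otimes_A f\condense\id_e$. Also, $e$ is separable over $\C$ only by transitivity through $A$.
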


\begin{proof}
   The data of a defect 1-net of codimension-1 $(\cF,\phi_\cF)$  (i.e. a sector) between  two topological 1-nets $(\cA,H_\cA)$ and $(\cB,H_\cB)$ is equivalent to a bimodule for a pair of associative algebras. $\cF$ is a morphism in $\QS_1$ if $\cA$ and $\cB$ are finite semisimple as 1-nets. Hence at the level of objects of $\QS_1$, $\cF$ is a finite semisimple $\cA\text{-}\cB$-bimodule, and $\phi_\cF: H_\cA \to H_{\cB} \otimes_{\cB} \cF$ is a map of $\cA$-modules.
A 2-morphism in $\QS_1$, i.e. a homomorphism of sectors, is a bimodule map $\xi : (\cF,\phi_\cF) \to (\mathcal{G},\phi_\mathcal{G})$ between finite semisimple bimodules, such that $\xi : \cF \to \cG$ and $\xi \circ \phi_\cF  = \phi_\cG$. 

The description of the collection of 1 and 2-morphisms identifies them with the objects and morphisms of the category $\Vect^\fd_\C$, which is condensation complete. Therefore, all we need to show is how to construct condensates for condensation monads, at the level of objects.  Suppose that $e$ is a condensation algebra on a topological 1-net $\cA$, where $e$ is defined as an object in $\widehat{\Net^{\fd}_0} = \Vect^{\fd}_\C$. Suppose we extend the condensation monad $(\cA,e,\ldots)$ to a condensation $\cA \condense \cB$. We hence need to show that indeed $\cB$ is a finite semisimple topological 1-net, which would imply that $\cB$ defines an element of $\tVect^{\fd}_\C$.

From the condensation  $\cA \condense \cB$ we will write the condensation on $\cA$ in terms of the data $(\cA,\cB,f,g,\ldots)$. Using \cref{assume:extendunital} and \cref{prop:alladjoint}, we can upgrade $e$ to a unital condensation. In this case $f$ has a right adjoint $f^R$, by \cref{rem:adjoints}. The condensation algebra $e$ is semisimple as an $\cA$-sector, and the left action of $\cA$ on $e$ factors through $\cB$. Hence the map $f^R: \cB \to \cA$ is semisimple and we can choose to regard $f^R  = H_\cB$ as a sector. As a consequence $f: \cA \to \cB$ must also be semisimple as a sector. Consider now the composition $f \circ f^R$, for which $\id_\cB$ is a direct summand. Because $f$ is a sector for a finite net $\cA$ and $f^R$ is dual to $f$ as a sector, the composite  $f \circ f^R$ is still finite as a sector, and thus so is $\id_\cB$ since finiteness is a property preserved by direct summands. This establishes that $\cB$ is finite.
\end{proof}

The full dualizability of $\QS_1$ will be guaranteed by \cref{thm:fulldual}. Indeed after one iteration of delooping and condensation completing starting with $\Vect^\fd_\C$ we obtain \!\!\!\!\!$\widehat{\phantom{aa}\rB\Vect^{\fd}_\C} = \mathbf{2Vect}^\fd_\C$.

\subsection{Condensation for Topological $2$-nets}\label{subsection:2net}

We now show that every condensation monad on a topological $2$-net $\cA$ admits a condensate, which we can construct. Furthermore, every condensation bimodule is induced by a 1-morphism in $\QS_2$, and bimodule maps are induced by 2-morphisms, etc. The strategy for constructing the $2$-net condensate involves first constructing its sector, and recovering the algebra that the net assigns by looking at endomorphisms of the sector. 
We present the generalization of this construction in higher dimensions in the next section. 

\begin{prop}\label{prop:QS2}
    The category $\QS_2$ is equivalent to the symmetric monoidal 3-category  $\mathbf{3Vect}^{\fd}_\C$.
\end{prop}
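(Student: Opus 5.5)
The plan mirrors the $\QS_1$ argument one categorical level up, using the looping structure $\End_{\QS_2}(\underline{\C})\simeq \QS_1\simeq \tVect^{\fd}_\C$.

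First I identify the hom-categories. By \cref{thm:Netn} with $n=2$, a 1-morphism $\cF:\cA\to\cB$ in $\Net^{\fd}_2$ is a finite semisimple defect 2-net of codimension 1. A 2-morphism is a codimension-2 defect, and a 3-morphism is a homomorphism of sectors. Restricting a codimension-1 defect to a small disk transverse to the stratum gives a topological 1-net living on the defect line. Codimension-2 defects are then sectors between such 1-nets. So the hom 2-categories $\hom_{\Net^{\fd}_2}(\cA,\cB)$ are built from finite semisimple 1-nets, bimodules, and bimodule maps. By the previous proposition, these hom 2-categories are equivalent to $\tVect^{\fd}_\C$-module data and are condensation complete. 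In particular, $\End_{\QS_2}(\underline{\C})\simeq \QS_1\simeq\tVect^{\fd}_\C$. Since \cref{thm:condensation} only needs condensation-complete homs, $\QS_2=\widehat{\Net^{\fd}_2}$ is well defined.

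Next I compare $\widehat{\rB\,\tVect^{\fd}_\C}$ with $\QS_2$. Every object of $\QS_2$ is a condensate of the unit $\underline{\C}$, as noted in the footnote, so the universal property in \cref{thm:condensation} gives a functor $\widehat{\rB\,\tVect^{\fd}_\C}=\mathbf{3Vect}^{\fd}_\C\to \QS_2$. This functor is fully faithful on the unit by the previous step. It is fully faithful in general because condensation bimodules between condensation monads on $\underline{\C}$ are computed in the same hom-category on both sides. It remains to show essential surjectivity. Concretely, this means: every condensation monad $(\cA,e,\ldots)$ on a finite semisimple 2-net, with $e$ a condensation algebra in $\End(\cA)$ (a multifusion 1-category up to Morita equivalence), has a condensate $\cB$ that is again a finite semisimple topological 2-net.

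To build the condensate I follow the strategy announced before the statement: construct the sector first, then recover the algebras. By \cref{prop:alladjoint} and \cref{assume:extendunital}, I may take $e$ fully unital, hence separable, and take the condensation $(\cA,\cB,f,f^R,\ldots)$ as in \cref{rem:adjoints}. I define the vacuum sector $H_\cB$ as the condensate of the idempotent induced by $e$ on $H_\cA$, using the unit and multiplication of $e$. On each disk $I$ I set $\cB(I):=\cO(H_\cB,I)$, the algebra generated by the action of $f\circ\cA(I)\circ f^R$ on $H_\cB$. Locality (commuting disjoint regions), generation from covers, and collar excision are inherited from $\cA$ by the same sewing argument as \cref{prop:composenets}. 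The vacuum property follows from the $Z(V,\cF)$ Yoneda argument of that proposition. Semisimplicity and finiteness are proved as in $\QS_1$: $\id_\cB$ is a direct summand of $f\circ f^R$, which is finite semisimple as a composite of dualizable finite defects, and the dualizability in \cref{def:finiteNet} passes to summands.

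Finally I check that the equivalence is symmetric monoidal. Both sides have tensor product given by pointwise tensor product, the functor from the universal property is symmetric monoidal, and \cref{thm:fulldual} confirms full dualizability.

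I expect the main obstacle to be the third step, specifically showing that $\cB(I)$ is a genuine net. The difficulty is that generation and commutation must be verified on the condensed sector, where the projection by $e$ need not commute with the local algebras. I would first try to show that $e$ can be chosen local, that is, supported near a codimension-1 stratum, so that the projection commutes with $\cA(I)$ for disks away from it.
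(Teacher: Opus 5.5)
There is a genuine gap in your third step, which is where the paper's proof puts all of its effort. You take $H_\cB$ to be the image of an idempotent that $e$ induces on $H_\cA$. But $e$ is a 1-morphism $\cA\to\cA$, i.e.\ a codimension-1 defect (an object of the monoidal 2-category $\End(\cA)$). Its unit and multiplication are 2-morphisms, i.e.\ codimension-2 defects. An endomorphism of $H_\cA$ is a 3-morphism, and its image is just a subsector of the vacuum of $\cA$.

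Condensates are generally not of this form, because condensation can enlarge an object. For $\cA=\underline{\C}$ the vacuum sector is $\C$, so your recipe can only return $\underline{0}$ or $\underline{\C}$. Yet condensation monads on $\underline{\C}$ are arbitrary multifusion categories in $\End(\underline{\C})\simeq\tVect^{\fd}_\C$. Already $e=\Vect\oplus\Vect$ condenses $\underline{\C}$ onto $\underline{\C}\oplus\underline{\C}$, whose vacuum sector is two-dimensional.

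Your definition of $\cB(I)$ through ``$f\circ\cA(I)\circ f^R$'' is also circular. The maps $f$ and $f^R$ are data of a condensation $\cA\condense\cB$, which \cref{rem:adjoints} provides only after the splitting is known to exist. Unlike in the $\QS_1$ case, you give no reason that $\Net_2$ already contains it. The obstacle you anticipate, making a projection by $e$ commute with local algebras, is a symptom of the same misreading.

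What is missing is the paper's construction of the condensate sector as a genuinely new vector space, built from $e$ on a fine mesh:
\begin{enumerate}
\item Pick a 2-morphism $\theta:\id_\cA\to e$ containing every simple one as a summand. Obtain $\theta^*$ from its dual using the self-duality of $e$ given by $m^\vee\circ\eta$.
\item Insert $\theta\theta^*$ along a finite set $R$ of $k$ 1-strata of the $e$-surface. Composing $(\theta\theta^*)^k$, $m^k$ and $\eta^\vee$ gives a sector $H_R:\id_\cA\to e^{2k}\to e\to\id_\cA$ of $\cA$.
\item Set $H_\cB=\varinjlim_R H_R$ over refinements. A disk $I$ splits it as $H_{\cB,I}$ followed by $\overline{H}_{\cB,I'}$, and one defines $\cB(I):=\hom_\cA(H_{\cB,I},H_{\cB,I})$.
\item The vacuum property comes from $H_{R,I}\boxtimes_e\overline{H}_{R,I}$ being the regular bimodule at each finite stage. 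Semisimplicity and dualizability are checked on finite configurations, whose ingredients all lie in $\QS_1$.
\end{enumerate}

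Your finiteness argument also does not transfer from $\QS_1$. In a 3-category, $f\circ f^R\condense\id_\cB$ is a 2-condensation in the hom 2-category rather than a retract, and $\id_\cB$ need not be a direct summand of $f\circ f^R$. For example, take the condensate of $\underline{\C}$ by a fusion category $\cC\neq\Vect$: the $\cC$-bimodule $\cC\boxtimes\cC$ is indecomposable and strictly larger than $\cC$.

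Your first two steps, which produce a comparison functor $\mathbf{3Vect}^{\fd}_\C\to\QS_2$ from the universal property, usefully make explicit an identification the paper leaves implicit. However, they give the proposition only once condensates are realized by actual finite semisimple nets, which is exactly what the mesh construction supplies.
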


\begin{proof}

Let $e$ be a condensation algebra on $\cA$, with unit $\eta: \id_\cA \to e$. In the condensation $e^2 \condense e$ let $m: e^2 \to e$ denote the multiplication, and let $m^\vee: e \to e^2$ denote its dual map. The composition $m^\vee \circ \eta $ exhibits $e$ to be self dual, and we define the self duality map to be $s: e \simeq e^\vee$.
We depict the condensation monad of $e$ on $\cA$ by the diagram in \cref{fig:etoA}. 
\begin{figure}
  \centering
 \begin{tikzpicture}[scale=.5][line width=0.9pt]
 
  \def\nw{0.55}     
  \def\ny{0.22}     
  \def\cut{3.7}     
  \def\far{4.8}     
  \def\hx{0.85}     
  \def\hy{0.78}     
 
  \def\rightside{
    (\nw,\cut) -- (\nw,2.75)
    .. controls (\nw,2.30) and (1.15,2.47) .. (1.75,2.18)
    .. controls (2.45,1.80) and (2.80,1.10) .. (2.80,0)
    .. controls (2.80,-1.10) and (2.45,-1.80) .. (1.75,-2.18)
    .. controls (1.15,-2.47) and (\nw,-2.30) .. (\nw,-2.75)
    -- (\nw,-\cut)}
  \def\leftside{
    (-\nw,-\cut) -- (-\nw,-2.75)
    .. controls (-\nw,-2.30) and (-1.15,-2.47) .. (-1.75,-2.18)
    .. controls (-2.45,-1.80) and (-2.80,-1.10) .. (-2.80,0)
    .. controls (-2.80,1.10) and (-2.45,1.80) .. (-1.75,2.18)
    .. controls (-1.15,2.47) and (-\nw,2.30) .. (-\nw,2.75)
    -- (-\nw,\cut)}
 
  \begin{scope}[even odd rule]
    \clip \rightside -- \leftside -- cycle (0,0) ellipse ({\hx} and {\hy});
    \node[inner sep=0, outer sep=0] at (0,0) {\pgfuseshading{surfshade}};
  \end{scope}
 
  \begin{scope}[even odd rule]
    \clip \rightside -- \leftside -- cycle (0,0) ellipse ({\hx} and {\hy});
    \begin{scope}[yscale=0.9]
      \foreach \a in {0,180}{
        \draw[line width=0.65pt, red!25!black, rotate=\a]
             (3.10,0) arc[start angle=0, end angle=180,
                          x radius=1.15, y radius=0.16];
        \draw[line width=0.65pt, red!25!black, dashed, dash pattern=on 2pt off 2pt,
              rotate=\a]
             (3.10,0) arc[start angle=0, end angle=-180,
                          x radius=1.15, y radius=0.16];
      }
    \end{scope}
  \end{scope}
 
  \draw \rightside;
  \draw \leftside;
  \draw (0,0) ellipse ({\hx} and {\hy});
 
  \foreach \s in {1,-1}{
    \fill[red!20!white] (0,\s*\cut) ellipse ({\nw} and {\ny});
    \draw (0,\s*\cut) ellipse ({\nw} and {\ny});
    \draw[dashed] (-\nw,\s*\cut) -- (-\nw,\s*\far);
    \draw[dashed] ( \nw,\s*\cut) -- ( \nw,\s*\far);
    \draw[dashed] (0,\s*\far) ellipse ({\nw} and {\ny});
     \node at (1.2,-3.9) {${\eta^\vee}$};
     \node at (1.2,3.8) {${\eta}$};
          \node at (0,2) {${m}^\vee$};
            \node at (.2,-2) {${m}$};
  }
 
\end{tikzpicture}
 
 \caption{The 1-morphism $e$ is the two-dimensional surface shaded in red and outlined by the black line. Let the three-dimensional space outside of the surface represent $\cA$, as well as the space inside the surface. It is important to notice that the dimensions of the regions do \textbf{not} reflect the spacetime dimension. In particular, since $\cA$ is a topological 2-net it can only assign algebras to 2-disks. Instead, this figure shows the structure of the 3-category $\Net^\fd_2$, for which the objects (that decorate 3-dimensional regions) are 2-nets. Therefore, $e:\cA \to \cA$ is a 1-morphism from the outside to the inside region.
 The dotted lines at the top and bottom of the figure denote the morphism $\id_\cA$ which should be thought of as the invisible surface of a tube.  The two solid circles at the top and bottom denote  the 2-morphisms $\eta:\id_\cA \to e$ and $\eta^\vee:e \to \id_\cA$, respectively. Lastly, the morphisms $m^\vee$ and $m$ are realized via the pair of pants at the  top and bottom of the diagram respectively. }
    \label{fig:etoA}
\end{figure}

Let $\theta:\id_\cA \to e$ be a 2-morphism which contains all the simple 2-morphisms $\id_\cA \to e$ as direct summands. The dual 2-morphism $\theta^\vee: e \to \id_\cA$ induces a 2-morphism  $\theta^*:\id_\cA \to e$ via the following relations. The evaluation/coevaluation gives an equivalence of 2-morphism spaces
\begin{equation*}
    \hom(e,\id_\cA) \simeq \hom(\id_\cA,e^\vee).
\end{equation*}
We take $\theta^\vee \in \hom(e,\id_\cA) $, from which we can define another 2-morphism via the composition 
\begin{equation}
\overline{\theta^\vee}: (\id_{e^\vee}\circ \theta^\vee)\circ \mathrm{coev}_e: \id_\cA \to e^\vee\,.
\end{equation}
Using the self duality $s_*: \hom(\id_\cA,e^\vee) \to \hom(\id_\cA,e)$, we define  $\theta^* = s_* \circ \overline{\theta^\vee}$.


Consider a finite set of 1-strata, i.e., 1-dimensional slices $R = \{r_1,r_2,\ldots,r_k\}\subset (0,1)^2$, that are in cyclic order and  wrap the ``$a$-cycle'' of the  torus as in \cref{fig:insertedstrata}. 
\begin{figure}
    \centering
  \begin{tikzpicture}[scale=.5][line width=0.9pt]
 
  \def\hx{0.85}     
  \def\hy{0.78}     
 
  \def\body{
    (2.80,0)
    .. controls (2.80,1.10) and (2.45,1.80) .. (1.75,2.18)
    .. controls (1.15,2.51) and (0.60,2.60) .. (0,2.60)
    .. controls (-0.60,2.60) and (-1.15,2.51) .. (-1.75,2.18)
    .. controls (-2.45,1.80) and (-2.80,1.10) .. (-2.80,0)
    .. controls (-2.80,-1.10) and (-2.45,-1.80) .. (-1.75,-2.18)
    .. controls (-1.15,-2.51) and (-0.60,-2.60) .. (0,-2.60)
    .. controls (0.60,-2.60) and (1.15,-2.51) .. (1.75,-2.18)
    .. controls (2.45,-1.80) and (2.80,-1.10) .. (2.80,0)}
 
  \path[use as bounding box] \body -- cycle;
 
  \begin{scope}[even odd rule]
    \clip \body -- cycle (0,0) ellipse ({\hx} and {\hy});
    \node[inner sep=0, outer sep=0] at (0,0) {\pgfuseshading{surfshade}};
  \end{scope}
 
  \begin{scope}[even odd rule]
    \clip \body -- cycle (0,0) ellipse ({\hx} and {\hy});
    \begin{scope}[yscale=0.9]
      \foreach \a in {0,60,120,180,240,300}{
        \meridian{\a-15}{blue!80!black}
        \meridian{\a+15}{blue!80!black}
        \meridian{\a}{black, line width=1pt}
      }
    \end{scope}
  \end{scope}
 
  \draw \body -- cycle;
  \draw (0,0) ellipse ({\hx} and {\hy});
  \node[font=\tiny,scale = .8] at (1.5,2.6) {${r_1}$};
   \node[font=\tiny,scale = .8] at (-1.5,2.6) {${r_2}$};
   \node[font=\tiny,scale = .8] at (-3.1,0) {${r_3}$};
   \node[font=\tiny,scale = .8] at (-1.5,-2.6) {${r_4}$};
   \node[font=\tiny,scale = .8] at (1.5,-2.6) {${r_5}$};
   \node[font=\tiny,scale = .8] at (3.1,0) {${r_6}$};
   \node[font=\tiny,scale = .8] at (2.3,2.1) {${\theta}$};
   \node[font=\tiny,scale = .8] at (.9,2.8) {${\theta^*}$};
      \node[font=\tiny,scale = .8] at (-.9,2.75) {${\theta}$};
       \node[font=\tiny,scale = .8] at (-2.3,2.1) {${\theta^*}$};
       \node[font=\tiny,scale = .8] at (-3,.8) {${\theta}$};
       \node[font=\tiny,scale = .8] at (-3,-.8) {${\theta^*}$};
       \node[font=\tiny,scale = .8] at (-2.3,-2.1) {${\theta}$};
             \node[font=\tiny,scale = .8] at (-.9,-2.75) {${\theta^*}$};
              \node[font=\tiny,scale = .8] at (.9,-2.8) {${\theta}$};
              \node[font=\tiny,scale = .8] at (2.3,-2.1) {${\theta^*}$};
               \node[font=\tiny,scale = .8] at (3,-.8) {${\theta}$};
                \node[font=\tiny,scale = .8] at (3,.8) {${\theta^*}$};
\end{tikzpicture}
\hspace{15mm}
\begin{tikzpicture}[scale = .5][line width=0.9pt]
 
  \def\hx{0.85}     
  \def\hy{0.78}     
 
  \def\body{
    (2.80,0)
    .. controls (2.80,1.10) and (2.45,1.80) .. (1.75,2.18)
    .. controls (1.15,2.51) and (0.60,2.60) .. (0,2.60)
    .. controls (-0.60,2.60) and (-1.15,2.51) .. (-1.75,2.18)
    .. controls (-2.45,1.80) and (-2.80,1.10) .. (-2.80,0)
    .. controls (-2.80,-1.10) and (-2.45,-1.80) .. (-1.75,-2.18)
    .. controls (-1.15,-2.51) and (-0.60,-2.60) .. (0,-2.60)
    .. controls (0.60,-2.60) and (1.15,-2.51) .. (1.75,-2.18)
    .. controls (2.45,-1.80) and (2.80,-1.10) .. (2.80,0)}
 
  \begin{pgfinterruptboundingbox}
 
  \begin{scope}[even odd rule]
    \clip \body -- cycle (0,0) ellipse ({\hx} and {\hy});
    \node[inner sep=0, outer sep=0] at (0,0) {\pgfuseshading{surfshade}};
  \end{scope}
 
  \begin{scope}[even odd rule]
    \clip \body -- cycle (0,0) ellipse ({\hx} and {\hy});
    \begin{scope}[yscale=0.9]
      \foreach \a in {0,60,120,180,240,300}{
        \meridian{\a-15}{blue!80!black}
        \meridian{\a+15}{blue!80!black}
        \meridian{\a}{black, line width=1pt}
      }

    \end{scope}
  \end{scope}
 
  \end{pgfinterruptboundingbox}
 
  \draw \body -- cycle;
  \draw (0,0) ellipse ({\hx} and {\hy});
  \node[font=\tiny,scale = .8] at (1.5,2.6) {${r_1}$};
   \node[font=\tiny,scale = .8] at (-1.5,2.6) {${r_2}$};
   \node[font=\tiny,scale = .8] at (-3.1,0) {${r_3}$};
   \node[font=\tiny,scale = .8] at (-1.5,-2.6) {${r_4}$};
   \node[font=\tiny,scale = .8] at (1.5,-2.6) {${r_5}$};
   \node[font=\tiny,scale = .8] at (3.1,0) {${r_6}$};
   \node[font=\tiny,scale = .8] at (2.3,2.1) {${\theta}$};
   \node[font=\tiny,scale = .8] at (.9,2.8) {${\theta^*}$};
      \node[font=\tiny,scale = .8] at (-.9,2.75) {${\theta}$};
       \node[font=\tiny,scale = .8] at (-2.3,2.1) {${\theta^*}$};
       \node[font=\tiny,scale = .8] at (-3,.8) {${\theta}$};
       \node[font=\tiny,scale = .8] at (-3,-.8) {${\theta^*}$};
       \node[font=\tiny,scale = .8] at (-2.3,-2.1) {${\theta}$};
             \node[font=\tiny,scale = .8] at (-.9,-2.75) {${\theta^*}$};
              \node[font=\tiny,scale = .8] at (.9,-2.8) {${\theta}$};
              \node[font=\tiny,scale = .8] at (2.3,-2.1) {${\theta^*}$};
               \node[font=\tiny,scale = .8] at (3,-.8) {${\theta}$};
                \node[font=\tiny,scale = .8] at (3,.8) {${\theta^*}$};
  \draw ({3.40*cos(45)},{3.20*sin(45)})
        arc[start angle=45, end angle=200, x radius=3.40, y radius=3.20];
  \node at ({3.85*cos(122.5)},{3.62*sin(122.5)}) {$I$};
 
\end{tikzpicture}
    \caption{Left: We have displayed the set $R$ with $k=6$ elements, which are shown by the solid black lines that wrap the handle of the surface of the torus. The blue lines depict insertions of $\theta$ and $\theta^*$ for each $r_i$.  Right: The 2-disk region $I$ can be thought of as the disk region in $e$ spanned by the indicated arc.}
    \label{fig:insertedstrata}
\end{figure}
In the neighborhood of each strata we insert $\theta$ and $\theta^*$ and define a sector with respect to $R$ via the following composition:
\begin{equation}
\begin{tikzcd}
     H_R:\id_\cA \arrow[r,"(\theta \theta^*)^k"]&  e^{2k} \arrow[r,"m^k"]&  e \arrow[r, "\eta^\vee"]& \id_\cA\,,
\end{tikzcd}
\end{equation}
where $e^{2k}$ is $2k$ compositions of $e$. For any inclusion of refinements $R \subset S$ that cover $e$ with more 1-strata, the embedding $\eta \hookrightarrow \theta$ induces an embedding $H_R \hookrightarrow H_S$. We take the colimit in the category of sectors of $\cA$ to obtain
\begin{equation}\label{eq:limit}
    H_\cB := \varinjlim_{R} H_R\,.\footnote{Intuitively, one should view this colimit as effectively covering the surface by denser and denser 1-strata of $\theta$ and $\theta^*$. The fact that this produces a condensate for $(\cA,e,\ldots)$ is analogous  to the usual intuition of condensation which builds a phase from a fine mesh of lower dimensional operators \cite[Section 2.4]{Gaiotto:2019xmp}. This a priori produces a sector belonging in $\Net_n$ rather than $\Net^{\fd}_n$. But because quantum spin systems should be defined locally, finiteness is a property that is checked for any bounded disk region $I$ and finite set of 1-dimensional strata $R$.}
\end{equation}
In contrast to the role played by the surface of the torus in \cref{fig:etoA}, the torus in \cref{fig:insertedstrata} now serves as a 2-dimensional space on which we will construct the condensate $\cB$ of $\cA$ on.\footnote{This shift in perspective is due to shifting from thinking about the 2-nets as objects of a 3-category, to actually specifying how the 2-net is defined i.e. how it assigns algebras to 2-disk regions. It follows consistently that in defining the 2-net, its respective sector should live on codimension-1 slices of the surface, and we see this is the case from how $H_R$ is defined via $\theta$ and $\theta^*$.}
We now address how $H_\cB$ is  evaluated specifically on 2-disk regions. This means that for $I\in \Disk^2$ on the surface labeled by $e$ on the right side of \cref{fig:insertedstrata}, we will construct $H_{R,I}$ and from this sector we will produce the algebra $\cB(I)$. The regions $I$ and $I'$ divide the insertions of $\theta$ and $\theta^*$ into two disjoint sets. This also induces a decomposition of  $H_R$ by two sectors labeling the following maps
\begin{equation}
\begin{tikzcd}
     H_R:\id_\cA \arrow[r,"H_{R,I}"]&  e \arrow[r,"\overline{H}_{R,I'}"]& \id_\cA\,,
\end{tikzcd}
\end{equation}
where $H_{R,I}$ is formed from composing $(\theta \theta^*)^j$ with $m^j$, and $j<k$ is the number of strata that is included in $I$.
In taking the colimit as in \cref{eq:limit} we obtain a decomposition
\begin{equation}
    \begin{tikzcd}
     H_\cB:\id_\cA \arrow[r,"H_{\cB,I}"]&  e \arrow[r,"\overline{H}_{\cB,I'}"]& \id_\cA\,.
\end{tikzcd}
\end{equation}
Given this decomposition of $H_\cB$ by the chosen interval $I$, we define the algebra 
\begin{equation}
    \cB(I) := \hom_\cA(H_{\cB,I}, H_{\cB,I})\,.\footnote{Here the hom is taken with respect to the algebra that $\cA$  assigns to 2-disk regions such as $I$.}
\end{equation}
To show that $H_\cB$ satisfies the vacuum property we note that on any region $I$ and finite set $R$ labeling strata, we can form an algebra $\widetilde{\cB}(I) = \hom_\cA(H_{R,I},H_{R,I})$  as well as the sector $H_{\widetilde{\cB}(I)} =  H_{R,I} \boxtimes_{e} \overline{H}_{R,I}$, with the property that $H_{\widetilde{\cB}(I)}$ is the regular $\widetilde{\cB}(I)\text{-}\widetilde{\cB}(I)^{\op}$-bimodule. Taking the colimit over $R$ recovers $\cB(I)$, with its regular $\cB(I)\text{-}\cB(I)^{\op}$-bimodule $H_\cB$.
To conclude this proof, we need to show that $(\cB,H_\cB)$ which is an object of $\Net_2$ has the properties of  a finite semisimple topological $2$-net. For each interval $I$ and  for each finite set $R$, the sector $H_{R,I}$ will be semisimple by the property that $\theta$, $\theta^*$, and $e$ are captured by the objects and morphisms of $\QS_1$ which are semisimple. The duality condition on $H_\cB$ is similarly  checked on finite configurations of strata, where they each give dualizable vector spaces.
\end{proof}

\subsection{Condensation for $n$-nets}\label{subsection:nnet}
Following the inductive approach underlying condensation for higher categories, and building on the construction developed in the preceding subsection as a base case, we assume that $\QS_{n-1}$ is condensation complete and prove that $\QS_n$ is likewise condensation complete.

\begin{thm}\label{thm:QS}
    The category of finite semisimple topological $n$-nets is a symmetric monoidal condensation complete $(n+1)$-category  $\QS_n$, which is equivalent to $\npVect^{\fd}_\mathbb{C}$.
\end{thm}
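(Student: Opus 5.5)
The plan is to argue by induction on $n$, with \cref{prop:QS2} (and the $\QS_1$ proposition) as base cases. We assume that $\QS_{n-1}$ is a symmetric monoidal condensation complete $n$-category equivalent to $\mathbf{nVect}^{\fd}_\C$. The first step is to identify the endomorphism $n$-category of the trivial net $\underline{\C}$ in $\Net^{\fd}_n$ with $\QS_{n-1}$. A defect $n$-net of codimension-$k$ between trivial nets restricts, near its codimension-$1$ stratum, to a finite semisimple defect net of one lower dimension. Collar excision (\cref{def:collarexcision}) shows that composition along the normal direction agrees with the monoidal structure of $\QS_{n-1}$. This gives $\Omega\,\Net^{\fd}_n \simeq \QS_{n-1}$, so every hom $n$-category of $\Net^{\fd}_n$ is condensation complete. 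By \cref{prop:alladjoint}, $\Net^{\fd}_n$ also has adjoints for all morphisms. Hence \cref{assume:extendunital} applies, and every condensation monad may be taken fully unital (\cref{rem:adjoints}).

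The main step, and the one I expect to be the real obstacle, is showing that every fully unital condensation monad $(\cA,e,\ldots)$ on an object of $\Net^{\fd}_n$ has a condensate $\cB$ that is again a finite semisimple topological $n$-net. I will generalize the construction in \cref{prop:QS2}. First, choose a $2$-morphism $\theta:\id_\cA\to e$ containing all simple $2$-morphisms as summands. Then build its partner $\theta^*$ using the self-duality $e\simeq e^\vee$ coming from $m^\vee\circ\eta$. Next, for each finite configuration $R$ of codimension-$1$ strata in the $(n-1)$-dimensional slice, insert $\theta\theta^*$ along the strata and compose with $m$ and $\eta^\vee$. This gives a sector $H_R$, and we set $H_\cB=\varinjlim_R H_R$. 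For a disk $I$, splitting the strata into those in $I$ and those in its complement $I'$ factors $H_\cB$ through $e$ as $\overline{H}_{\cB,I'}\circ H_{\cB,I}$, and we define $\cB(I)=\hom_\cA(H_{\cB,I},H_{\cB,I})$.

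Four properties must then be checked, each at a finite stage $R$ before passing to the colimit:
\begin{enumerate}
\item The commutation axiom for disjoint disks, which follows from disjoint supports of the strata.
\item The generation axiom for covering disks, which follows from collar excision.
\item The vacuum property, using $H_{R,I}\boxtimes_e\overline{H}_{R,I}$ as the regular bimodule, exactly as in \cref{prop:QS2} and the Yoneda argument of \cref{prop:composenets}.
\item Finiteness and semisimplicity in the sense of \cref{def:finiteNet}, which follow because $\theta$, $\theta^*$ and $e$ live in $\QS_{n-1}$, where these properties hold by the inductive hypothesis.
\end{enumerate}
The delicate point is that the colimit a priori lands in $\Net_n$ rather than $\Net^{\fd}_n$. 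I plan to handle this by locality, checking finiteness on bounded regions with finitely many strata. I would then show that the resulting $\cB$ is a retract of $\cA$ via $f=H_{\cB,I}$ and $f^R$. As in the $\QS_1$ argument, $\id_\cB$ is then a summand of the finite composite $f\circ f^R$, so $\cB$ is finite.

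With condensates in hand, $\QS_n:=\widehat{\Net^{\fd}_n}$ equals $\Net^{\fd}_n$ up to equivalence by the third point of \cref{thm:condensation}. The remaining steps are as follows.
\begin{enumerate}
\item Symmetric monoidality is inherited from \cref{thm:Netn}, since tensor products of finite semisimple nets stay finite semisimple.
\item Every object is a condensate of the unit $\underline{\C}$, which is the reduction noted in the footnote of this section.
\item Hence $\QS_n\simeq\widehat{\rB\,\Omega\QS_n}\simeq\widehat{\rB\,\QS_{n-1}}\simeq\widehat{\rB\,\mathbf{nVect}^{\fd}_\C}=\npVect^{\fd}_\C$, using the universal property in \cref{thm:condensation} and \cref{eq:n1vect}.
\item Full dualizability follows from \cref{thm:fulldual}.
\end{enumerate}
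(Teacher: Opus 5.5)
\textbf{Gap: the $n=2$ construction does not produce sectors for $n\geq 3$.} Your overall architecture matches the paper: induction on $n$, adjoints plus \cref{assume:extendunital} to make the monad unital, a colimit of sectors over refinements, and $\cB(I)$ defined as endomorphisms of a half-sector. But you copied the central construction verbatim from \cref{prop:QS2}, and it does not generalize.

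In $\Net_n$ only the $n$-morphisms are sectors. For $1\le k<n$, a $k$-morphism is a codimension-$k$ defect net. Your $\theta$, $\theta^*$, $m$ and $\eta^\vee$ are all 2-morphisms, so the composite $H_R:\id_\cA\Rightarrow e^{2k}\Rightarrow e\Rightarrow\id_\cA$ is again a 2-morphism. For $n\ge 3$ this is a codimension-2 defect net, not a sector. This has two consequences:
\begin{enumerate}
\item ``The colimit in the category of sectors'' is not available.
\item $\hom_\cA(H_{\cB,I},H_{\cB,I})$ is the endomorphism object of a 2-morphism in an $(n+1)$-category. That is a monoidal $(n-2)$-category, not an algebra, so $\cB$ is not a topological $n$-net.
\end{enumerate}
Inserting $\theta\theta^*$ along codimension-1 strata only uses the 1- and 2-morphism levels of the monad, which is why the trick works exactly when $n=2$. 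Your downstream checks (commutation, generation, vacuum, finiteness, the retract argument) then have nothing of the right type to act on.

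\textbf{The missing ingredient: the full hierarchy of insertions used in the paper.}
\begin{enumerate}
\item For each $2\le k\le n$, choose a $k$-morphism $\theta_k:\id_{{\ldots}_{\id_\cA}}\to f_k$ containing all simple $k$-morphisms as summands. Here $f_k$ is the composite of the lower-codimension data around a codimension-$k$ face.
\item Require each $\theta_k$ to extend to a unital condensation with unit $\eta_{\theta_k}$.
\item Obtain $\theta_n^*$ from the self-duality of $e$ at the top level.
\item Place these on the codimension-$k$ cells of the dual $\cR^\vee$ of a simplicial mesh $\cR$ of the whole $n$-dimensional hypersurface carrying $e$. The paper uses a mesh, not a family of parallel slices.
\item Composing everything then gives an honest $n$-morphism $H_\cR$ between iterated identities, i.e.\ a sector.
\item The refinement maps $H_\cR\hookrightarrow H_\cS$ are induced by subdivision, using $\eta_{\theta_k}\hookrightarrow\theta_k$ at every level, not just the unit of $e$.
\item For a disk $I$, the factorization of $H_\cR$ passes through the composite $\Gamma_{\cR,I}$ of the morphisms on $\partial I$. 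This coincides with $e$ only when $n=2$.
\end{enumerate}
With $\cB(I)$ defined from this genuine half-sector, your remaining steps are consistent with the paper's intent. The identification $\Omega\,\Net^{\fd}_n\simeq\QS_{n-1}$ and the final chain of equivalences to $\npVect^{\fd}_\C$ are spelled out more explicitly than in the paper, but are in line with it.
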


\begin{proof}
Following the construction in \S\ref{subsection:2net}, we let $\cA$ be a topological $n$-net which now occupies a $(n+1)$-dimensional region, and higher morphisms in $\QS_n$ label strata in higher codimension. In an analogous way, given  a condensation monad $(\cA,e,\ldots)$, the condensation algebra $e$ is defined on a codimension-1 hypersurface.

We introduce $k$-morphisms for $2\leq k <n$ given by $\theta_k: \id_{{\ldots}_{\id_\cA}} \to f_k$ that includes all the simple $k$-morphisms as summands, and assign it to all the faces of codimension-$k$ in the ambient $(n+1)$-dimensional setting. The source for this map is given by taking the identity of the identity of $\cA$ $(k-1)$-times, and the target $f_k$ is a $(k-1)$-morphism obtained by composing the morphisms decorating the faces of lower codimension (up to codimension 1) around a face of codimension-$k$. In particular, if $k=2$ then $\theta_2: \id_\cA \to f_2 = f^\vee \circ f = e$ is what was called $\theta$ in the proof of \cref{prop:QS2}. We require that each $\theta_k$ extends to a unital condensation with $\eta_{\theta_k}:\id_{{\ldots}_{\id_\cA}} \to f_k$, and moreover we introduce insertions of $\theta_k$ on faces of codimension-$k$. The particular $n$-morphism $\theta_n$ induces a map $\theta^*_n: \id_{{\ldots}_{\id_\cA}} \to f_n$, again using the self-duality of $e$ at the top level.

Just like how we inserted lines on the surface that implements $e$ in the proof of \cref{prop:QS2}, we now insert a mesh of strata $\cR$ along the hypersurface for $e$, realizing it as a simplicial complex where each cell has a canonical orientation determined by the ordering of its boundary. Let $\cR^\vee$ denote the dual decomposition of $\cR$, so that the $k$-morphism $\theta_k$ will be assigned to each $(n+1-k)$-simplex of $\cR^\vee$  for $1\leq k \leq n$. Composing all the $k$-morphisms around every face results in a sector $H_\cR: \id_{{\ldots}_{\id_\cA}} \to \id_{{\ldots}_{\id_\cA}}$ of $\cA$ which is unique up to isotopy of $\cR$. Given a finer simplicial complex $\cS$ that realizes the hypersurface, and where $\cS$ is a subdivision of $\cR$, we have an embedding $H_\cR \hookrightarrow H_\cS$ via the inclusions $\eta_{\theta_{k}} \hookrightarrow \theta_k$. Taking a colimit over $\cR$ in the category of sectors for $\cA$ we define
\begin{equation}
 H_\cB := \varinjlim_{\cR} H_\cR\,.
\end{equation}

For an $n$-disk region $I$ on the $n$-dimensional hypersurface, which inherits a simplicial structure induced by $\cR$. We consider $H_{\cR,I}$ to be the composition of those $\theta_k$ inside the region, $\overline{H}_{\cR,I'}$ to be the composition of those $\theta_k$ in  $I'$ -- the complement of $I$, and denote $\Gamma_{\cR,I}$ to be the composition of those $k$-morphisms on the boundary of $I$. Then we have a decomposition 
\begin{equation}
\begin{tikzcd}
      H_\cR: \id_{{\ldots}_{\id_\cA}}   \arrow[r,"H_{\cR,I}"]&  \Gamma_{\cR,I} \arrow[r,"\overline{H}_{\cR,I'}"]& \id_{{\ldots}_{\id_\cA}}\,,
\end{tikzcd}
\end{equation}
which in the colimit over subdivisions $\cR$ yields 
\begin{equation}
\begin{tikzcd}
      H_\cB: \id_{{\ldots}_{\id_\cA}}   \arrow[r,"H_{\cB,I}"]&  \Gamma_{I} \arrow[r,"\overline{H}_{\cB,I'}"]& \id_{{\ldots}_{\id_\cA}}\,.
\end{tikzcd}
\end{equation}
The algebra $\cB$ assigns to $I$ is constructed as $\cB(I) = \hom_\cA(H_{\cB,I},H_{\cB,I})$. One then applies the same arguments in the proof of \cref{prop:QS2} to show that $(\cB,H_\cB)$ exhibits the properties of a finite semisimple topological $n$-net.
\end{proof}

\subsection{The Group of Higher Dimensional QCAs}\label{subsection:higherQCAgroups}

We apply \cref{thm:QS} specifically to $n=4$, i.e. for the category of quantum spin systems in (3+1)d, to identity the  categories $\QS_4 \simeq \mathbf{5Vect}^{\fd}_\C$. Consider the Picard groupoid $(\mathbf{5Vect}^{\fd}_\C)^\times$, whose
 homotopy groups are computed in \cite{johnson2022classification,johnson2024minimal,Bhardwaj:2024xcx}. In particular, thanks to the delooping procedure used to construct $\QS_4$, the group of invertible 1-morphisms is given by $\pi_1((\mathbf{5Vect}^{\fd}_\C)^\times) = \pi_0((\mathbf{4Vect}^\fd_\C)^\times)$. 
 The category $\mathbf{4Vect}^\fd_\C$ is equivalent to the Morita category of (multi)fusion 2-categories constructed in \cite{Decoppet:2024htz}, or equivalently the Morita category of braided multifusion 1-categories \cite{brochier2021dualizability,decoppet2022drinfeld}. The invertible braided multifusion 1-categories are $\cA$ such that there exists a category $\cD$ such that $\cA \boxtimes \cD$ is Morita equivalent to a Drinfeld center $\cZ(\cC)$ for some fusion category $\cC$. In this definition, the monoidal structure is defined via the Deligne tensor product of categories.
 The invertible braided multifusion 1-categories are \textit{nondegenerate} braided fusion 1-categories \cite{brochier2021invertible}. 
 
 In the theory of braided fusion categories, there is an appropriate notion of a center  of a braided fusion category $\cB$  known as the Müger center, which keeps track of how far a category is from being nondegenerate. The Müger center  has objects given by 
\begin{equation}
    \cZ_2(\cB):=\{x \in \cB |\, \beta_{y,x}\circ \beta_{x,y} = \mathrm{id}_{x\otimes y},\quad  \forall y \in \cB \}
\end{equation}
where $\beta_{x,y}: x\otimes y \to y \otimes x$ is the braiding on $\cB$. A braided category $\cB$ is nondegenerate if $\cZ_2(\cB) = \mathbf{Vect}$. The group of  Morita classes of nondegenerate braided fusion 1-categories, captured by $\pi_0$, is the Witt group $\Witt$ of nondegenerate braided fusion 1-categories \cite{davydov2010witt} generated by Witt classes of nondegenerate braided fusion 1-categories. Two nondegenerate  braided fusion categories $\cA$ and $\cB$ are said to be in the same Witt class (aka Witt equivalent) if there exists a fusion categories $\cC_1$ and $\cC_2$ such that $\cA \boxtimes \cZ(\cC_1) \simeq \cB \boxtimes \cZ(\cC_2)$ as braided tensor categories. In this definition, the Witt equivalence relation is generated by equating categories which are braided tensor equivalent, and setting Drinfeld centers to be trivial.
With respect to the discussion in \S\ref{subsubsection:spacespectrum}, we should view nondegenerate braided fusion 1-categories as categorified Azumaya algebras, and $\Witt$ as the analog of a higher Brauer group for these algebras. A final upshot of the homotopical nature of the net-based formalism is that $\QS_4$ allows us to identify QCAs in (1+1)d and (2+1)d. By looping further we obtain an analogous category for (2+1)d quantum spin systems and their morphisms. The category of main importance here is $\mathbf{3Vect}^{\fd}_\C$, whose objects are given by separable finite tensor categories over $\C$, i.e. the Morita category of multifusion 1-categories \cite{douglas2021dualizable}.

In light of \cref{notation:X} for defining quantum spin systems 
we now broaden the definition of topological $n$-nets by mapping the ambient cube $(0,1)^n$ used for their definition to $\mathbb R^n$. Furthermore we take the assignment of local algebras to be on the sites of the standard lattice $\mathbb Z^n\subset\mathbb R^n$.
By tracing through the definitions of the higher morphisms in $\QS_4$ we obtain the following descriptions for the invertible morphisms. 

\begin{cor}\label{cor:QCAclassification}
  Let $1 \leq n\leq 3 $. The groups of QCAs  over $\mathbb{C}$ up to quantum circuits and stabilization on the spaces $\Z^{n+1}$ are classified by the following groups:
    \begin{itemize}
        \item For $n=3$, $\cQ(\Z^4)/\cC(\Z^4)= \Witt$: Morita classes of braided fusion 1-categories with trivial Müger center, i.e. the Witt group of nondegenerate braided fusion 1-categories \cite{davydov2010witt}. 
        \item For $n=2$, $\cQ(\Z^3)/\cC(\Z^3)= 0$: Morita classes of commutative algebras over $\C$ with trivial center.\footnote{This group can equivalently be described as braided fusion 0-categories.}
        \item For $n=1$, $\cQ(\Z^2)/\cC(\Z^2)= 0$: Morita classes of central simple algebras over $\C$.
    \end{itemize}
\end{cor}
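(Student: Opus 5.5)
The plan is to reduce each case to computing $\pi_1$ of the Picard groupoid of $\QS_{n+1}$ and then to invoke the known descriptions of invertible objects in the Morita categories of the loop spaces. Two inputs from earlier results set this up.

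First, by \cref{prop:bimoduleequiv} and \cref{prop:circuit}, QCAs on $\Z^{n+1}$ modulo stabilization and quantum circuits correspond to invertible 1-morphisms of $\QS_{n+1}$ (from the unit $\underline{\C}$ to itself) up to invertible 2-morphisms. The lattice realization on $\Z^{n+1}\subset\mathbb R^{n+1}$ described just before the corollary is what makes this identification apply. This gives
\[
\cQ(\Z^{n+1})/\cC(\Z^{n+1}) \cong \pi_1\bigl(\QS_{n+1}^\times\bigr).
\]
Second, by \cref{thm:QS}, $\QS_{n+1}\simeq \mathbf{(n+2)Vect}^{\fd}_\C$. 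Since this category is obtained by delooping and condensation completion, $\Omega\,\mathbf{(n+2)Vect}^{\fd}_\C\simeq \mathbf{(n+1)Vect}^{\fd}_\C$, and therefore $\pi_1$ of the Picard groupoid equals $\pi_0\bigl((\mathbf{(n+1)Vect}^{\fd}_\C)^\times\bigr)$. I would justify this loop identification using the fully faithful embedding $\rB\cC\to\widehat{\rB\cC}$ of \cref{thm:condensation}: the endomorphisms of the unit are unchanged by completion.

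The three cases are then handled separately.

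For $n=1$, the relevant category is $\tVect^{\fd}_\C$, the Morita 2-category of finite semisimple algebras. Its invertible objects are the Azumaya, i.e. central simple, algebras over $\C$. Since $\C$ is algebraically closed, these are matrix algebras, which are Morita trivial. So the Brauer group $\mathrm{Br}(\C)=0$, consistent with \cref{ex:azumaya}.

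For $n=2$, the relevant category is $\mathbf{3Vect}^{\fd}_\C$, the Morita category of multifusion 1-categories. Equivalently, by looping once more in the $E_2$ picture, invertibility reduces to commutative (braided fusion $0$-) algebras with trivial center. An invertible multifusion category $\cC$ satisfies $\cZ(\cC)\simeq\Vect$. A fusion category Morita equivalent to $\Vect$ is of the form $\Vect_G^\omega$-type duals of $\Vect$, and $\Vect$ is the only invertible class. Hence $\pi_0=0$.

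For $n=3$, the relevant category is $\mathbf{4Vect}^{\fd}_\C$. I would cite its identification with the Morita category of braided multifusion 1-categories \cite{brochier2021dualizability,decoppet2022drinfeld,Decoppet:2024htz}. By \cite{brochier2021invertible}, the invertible objects are exactly the nondegenerate braided fusion categories. Two such categories are Morita equivalent precisely when they are Witt equivalent, since Drinfeld centers $\cZ(\cC)$ are the Morita-trivial ones. Therefore $\pi_0=\Witt$ \cite{davydov2010witt}.

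The main obstacle is the first identification. We must ensure that $\pi_1(\QS_{n+1}^\times)$ detects exactly QCAs modulo circuits and stabilization: every invertible 1-morphism must be realized by an actual lattice QCA, and the equivalence relations must match. Translations must also be excluded. I would take \cref{prop:bimoduleequiv} and \cref{prop:circuit} as giving injectivity. For surjectivity I would appeal to the lattice realization together with locality (collar excision), while acknowledging that this step leans on the framework's translation rather than on an explicit construction of representatives.
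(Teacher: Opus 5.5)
This is essentially the paper's argument. The paper likewise identifies the QCA group with invertible 1-morphisms via \cref{prop:bimoduleequiv} and \cref{prop:circuit}, uses $\QS_4\simeq\mathbf{5Vect}^{\fd}_\C$ together with the delooping to reduce to $\pi_0$ of the invertibles in $\mathbf{4Vect}^{\fd}_\C$, $\mathbf{3Vect}^{\fd}_\C$ and $\tVect^{\fd}_\C$, and then cites the Witt group, invertible multifusion categories and the Brauer group. It loops down inside $\QS_4$ rather than using $\QS_{n+1}$ for each $n$, which is equivalent via $\End_{\QS_n}(\underline{\C})\simeq\QS_{n-1}$, and it handles the realization step you flag no more explicitly than you do.

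One correction in the $n=2$ case. The fusion categories Morita equivalent to $\Vect$ are not ``$\Vect_G^\omega$-type'': for $G\neq 1$ these are not Morita trivial, and the only fusion category Morita equivalent to $\Vect$ is $\Vect$ itself. The step you actually need is $\cZ(\cC)\simeq\Vect\Rightarrow\cC$ Morita trivial. This follows from $\mathrm{FPdim}\,\cZ(\cC)=\mathrm{FPdim}(\cC)^2$, or from the fact that Drinfeld centers detect Morita equivalence of fusion categories.
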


\begin{rem}\label{rem:fermionicQCA}
    The category $\mathbf{4Vect}^{\fd}_\C$, and its fermionic counterpart $\mathbf{4SVect}^{\fd}_\C$, has played a big role in the study of (3+1)d TQFTs \cite{Johnson-Freyd:2020twl,Decoppet:2025eic,Debray:2025kfg,Kobayashi:2026lgd}. In this sense, topological $n$-nets can also be seen as a necessary framework to provide a  bridge between the lattice and continuum settings while elucidating the homotopical structures underlying both. This perspective also offers guidance for understanding the structure of QCAs in fermionic quantum spin systems. 
    
    One can apply analogous constructions as in this paper to develop a category of fermionic topological $n$-nets $\mathbf{FQS}_n$, which would then describe fermionic quantum spin systems and their QCAs. By analogy, the category $\mathbf{FQS}_4 \simeq \mathbf{5SVect}^\fd_\C$, and thus the homotopy groups $\pi_{i}(\mathbf{4SVect}^{\fd}_\C)^\times$ encode the groups of fermionic QCAs in dimensions $(3-i+1)$d.
    These homotopy groups are given by:
\begin{itemize}
    \item  $\pi_{0}((\mathbf{4SVect}^{\fd}_\C)^\times) = \sWitt$: the Witt group of slightly degenerate braided fusion categories \cite{davydov2013structure}.
    \item $\pi_{1}((\mathbf{4SVect}^{\fd}_\C)^\times) = 0$.
    \item  $\pi_{2}((\mathbf{4SVect}^{\fd}_\C)^\times) = \Z/2$: Morita classes of central simple super algebras over $\C$. The nontrivial class is given by an odd Clifford algebra, Morita equivalent to $\mathrm{Cliff}(1)$. 
\end{itemize}

\end{rem}

\bibliographystyle{alpha}
\bibliography{references}
\end{document}